%% file: submit_to_arxiv_v2.tex
\documentclass[journal]{IEEEtran}
\usepackage{amsmath,amsfonts}
\usepackage{algorithmic}
\usepackage{algorithm}
\usepackage{array}
\usepackage[caption=false,font=footnotesize,labelfont=sf,textfont=sf]{subfig}
\usepackage{textcomp}
\usepackage{stfloats}
\usepackage{url}
\usepackage{verbatim}
\usepackage{graphicx}
\usepackage{cite}
\usepackage[acronym]{glossaries}
\newcommand{\newac}{\newacronym}
\newcommand{\ac}{\gls}

\newcommand{\acpl}{\glspl}
\newcommand{\Acpl}{\Glspl}
\input{bib_files/BL_acronym}

\usepackage{xcolor}
\usepackage[table]{xcolor}
\definecolor{lightcyan}{rgb}{0.88, 1.0, 1.0}
\usepackage{mathrsfs,amsmath,amssymb,amsfonts,mathtools,bm}
\usepackage{tikz, pgfplots}
\usetikzlibrary{shapes, arrows, arrows.meta, fit, positioning, calc, 3d}
\usepgfplotslibrary{fillbetween}
\pgfplotsset{compat=1.18} 
\tikzset{>=latex}

\newtheorem{theorem}{Theorem}
\newtheorem{lemma}{Lemma}

\newtheorem{definition}{Definition}
\newtheorem{proposition}{Proposition}

\begin{document}
\bstctlcite{IEEEexample:BSTcontrol}

\title{Model Predictive Communication for Timely Status Updates in Low-Altitude Networks}
\author{
\thanks{Part of this work will be presented at the IEEE ICC 2026.}
\thanks{This work has been supported by ELLIIT and 6G-LEADER, 101192080.}
\IEEEauthorblockN{Bowen~Li, Jiping~Luo, Themistoklis~Charalambous, Nikolaos~Pappas}
\thanks{B. Li, J. Luo, and N. Pappas are with the Department of Computer and Information Science, Link{\"o}ping University, 58183 Link{\"o}ping, Sweden. Emails: \{bowen.li, jiping.luo, nikolaos.pappas\}@liu.se}
\thanks{T. Charalambous is with the Department of Electrical and Computer Engineering, University of Cyprus, 1678 Nicosia, Cyprus. Email: charalambous.themistoklis@ucy.ac.cy} 
}
\maketitle
\begin{abstract}
Timely information delivery in low-altitude networks is critical for many time-sensitive applications, such as unmanned aerial vehicle (UAV) navigation, inspection, and surveillance. The key challenge lies in balancing three competing factors: stringent data freshness requirements, UAV onboard energy consumption, and interference with terrestrial services. Addressing this challenge requires not only efficient power and channel allocation strategies but also effective communication timing over the entire operation horizon. In this work, we propose a model predictive communication (MPComm) framework, enabled by advanced channel sensing techniques, in which the channel conditions that the UAV will experience are largely predictable. Within this framework, we formulate a constrained bi-objective optimization problem to achieve a desired trade-off between energy consumption and terrestrial channel occupation, subject to a strict timeliness constraint. We solve this problem using Pareto analysis and show that the original non-convex, mixed-integer problem can be decomposed into a two-layer structure: the outer layer determines the optimal communication timing, while the inner layer determines the optimal power and channel allocation for each communication interval. An efficient algorithm for the inner problem is developed using non-convex analysis, with asymptotic optimality guarantees, while the outer problem is solved optimally via a simple graph search, with edges characterized by inner solutions. The proposed approach applies to a broad class of problem variants, including objective transformations and single-objective specializations. Numerical results demonstrate the efficiency of the proposed solution, achieving up to a six-fold reduction in terrestrial channel occupation and a 6dB energy saving compared to benchmark schemes.
\end{abstract}

\begin{IEEEkeywords}
Low-altitude networks, UAV communications, Model predictive communication, and timely status updates.
\end{IEEEkeywords}

\section{Introduction\label{sec:intro}}
Low-altitude activities have grown significantly over the past decade, leading to a surge in demand for timely communications that support time-critical applications such as real-time navigation, inspection, and surveillance~\cite{WuXuZenNg:J21}. In such applications, stringent data freshness requirements are typically imposed~\cite{SonLinWanSun:M25,AbdMar:M26}. For instance, timely access to \ac{uav} sensory data is essential for aerial traffic coordination, mission monitoring, and decision-making at the control center; obsolete information can degrade operational safety and even lead to crashes.

Most existing works on aerial communications focus on the throughput-reliability-delay trade-off~\cite{MatShe:J25,ZhaLiRonZen:J25}. However, such approaches may not suffice for many safety-critical applications. \Acpl{uav} typically face strict energy constraints and limited communication budgets. It is inefficient, if not impossible, to communicate every piece of sensory data to the control center. Instead, the most recent and relevant data should be generated and delivered promptly~\cite{KauYatGru:C12}. A natural question then arises: \emph{What is the optimal timing of communication?} The answer is not obvious, especially in low-altitude networks where channel conditions and network topology can change rapidly. Recent studies~\cite{LuoPap:J25b, luo2026role, luo2025information} have investigated this problem from information-theoretic and control-theoretic perspectives. Intuitively, transmitting immediately, given abundant communication resources, may be a bad timing if better channel conditions are expected in the near future. In this article, we employ the \ac{aoi} metric~\cite{KauYatGru:C12} as an indicator of data freshness and address this question by proposing a predictive communication strategy.

Low-altitude networks, while dynamic, are often predictable~\cite{CheLiSunCui:M26}. First, in many applications such as inspection and cargo delivery, \acpl{uav} follow predetermined trajectories and their future movements are largely predictable~\cite{HeLiMoHua:J24}. Second, advanced channel sensing techniques, such as radio maps and digital twins~\cite{ZenCheXuWu:J24,WanZhaNieYu:M25,LiZhaJiaChe:25}, provide predictable and high-precision 3D representations of the wireless propagation environment. In such applications, the communication model can be predicted with reasonable accuracy. This enables \emph{model predictive communication} (MPComm), where resource management, data acquisition, and access control are optimized over a longer planning horizon. Recently, some preliminary results have demonstrated the effectiveness of MPComm on routing, resource allocation, beamforming, and local negotiation in low-altitude \ac{uav} networks~\cite{LiChe:J24a,LiChe:J24b,LiSuSuPen:J25,DuWeiYanYan:J26,LiCheLiuXu:J26,LiChe:J26}.

There are several key trade-offs in the design of low-altitude, timely communication networks. The first trade-off is between \ac{uav} energy consumption and the \ac{qos} of terrestrial services. The coexistence of aerial and terrestrial traffic is complicated by dominant \ac{los} channels, which induce severe cross-network interference~\cite{MeiZha:M21,VaeLinZhaSad:M24}. Consequently, improving aerial communication throughput generally requires either more onboard energy or more terrestrial spectrum resources. The system therefore trades \ac{uav} energy consumption against the \ac{qos} of terrestrial services. The second trade-off is between resource utilization cost and data freshness. To meet stringent timeliness constraints, appropriate sample/communication timing and resource allocation decisions must be made.

Many existing works focus on \ac{uav} trajectory optimization and resource allocation to minimize AoI in data collection tasks~\cite{HuXioQuNi:J21,LiuTonWanBai:J21,QinWeiQuZho:J23,YuaCheHeHou:J25,TanLiuHeXie:J25,LuWuJiaFei:J26}. However, most of them adopt packet-level metrics that measure the lifetime of each update, which is insufficient for time-critical tasks such as \ac{uav} status monitoring. Another line of research adopts source-level metrics to characterize the freshness of the information at the remote center~\cite{GirParBenDeb:J21,SonShaCheZhu:J24,ZhaPapZhaYan:J25,MuLuJiaChe:J26}. Nevertheless, these works do not exploit the predictability of low-altitude networks and therefore remain essentially reactive rather than proactive.

In this work, we formulate a bi-objective optimization problem within the MPComm framework to achieve a desired trade-off between aerial energy consumption and terrestrial spectrum usage, subject to a hard timeliness constraint on aerial traffic. The decision variables include the timing of status updates, the \ac{uav}'s power allocation, and the occupation of terrestrial spectrum. These decisions are optimized over a long planning horizon to adapt to predicted channel conditions. This constrained bi-objective problem poses two main challenges: (i) it is a non-convex, mixed-integer program, and (ii) its complete solution constitutes a Pareto frontier, which is computationally challenging to obtain. We propose a low-complexity method to address these challenges.

Our main contributions are summarized below.

(1) We propose an MPComm framework for timely status updates in low-altitude networks. Within this framework, we formulate a constrained bi-objective optimization problem to satisfy data freshness requirements while balancing energy consumption and terrestrial channel usage. Unlike conventional designs that focus solely on data transmission, the proposed framework directly controls information generation and transmission through an age-aware sampling controller.

(2) We exploit the problem structure to characterize the Pareto frontier and derive tight upper and lower bounds for its region. These results significantly reduce computation overhead. We further show that, for general formulations with objective transformations, the frontier in the transformed objective space can be obtained directly as a simple transformation of the original frontier. In addition, solutions for weighted-sum and constrained formulations lie on the frontier, enabling efficient solution retrieval.

(3) We develop an efficient algorithm to solve the bi-objective problem. The problem is first decomposed into a two-layer structure, where the outer layer determines the optimal communication timing, while the inner layer determines the optimal power and channel allocation for each communication interval. Leveraging this decomposition and non-convex analysis, we propose a graph-based algorithm with complexity linear in the number of \acpl{bs} and planning horizon, and quadratic in the number of channel \acpl{rb}. Numerical results demonstrate that the proposed solution achieves up to a six-fold reduction in terrestrial channel occupation and a 6~dB energy saving compared to benchmark solutions.

The rest of the paper is organized as follows. Section~\ref{sec:System-Model} introduces the MPComm framework. Section~\ref{sec:decomposition} characterizes the Pareto frontier and decomposes the optimization problem. Section~\ref{sec:algorithm} develops a graph-based algorithm. Section~\ref{sec:variants} introduces the variant problems and discusses their connections to the original problem. Numerical results are provided in Section~\ref{sec:Simulation}, and we conclude this paper in Section~\ref{sec:Conclusion}.

\emph{Notation:} Let $\mathbb{R}$, $\mathbb{R}_+$, $\mathbb{Z}$, and $\mathbb{Z}_+$ denote the sets of real numbers, nonnegative real numbers, integers, and nonnegative integers, respectively. Calligraphic letters denote sets or index sets, {\em e.g.}, $\mathcal{N}\triangleq\{1,\ldots,N\}$, $\mathcal{K}\triangleq\{1,\ldots,K\}$, and $\mathcal{T}\triangleq\{1,\ldots,T\}$, with $|\mathcal{N}|$ denoting the cardinality of $\mathcal{N}$ and $\mathcal{N}^K$ denoting the $K$-fold Cartesian product of $\mathcal{N}$. Bold lowercase letters, bold uppercase letters, and bold calligraphic letters denote vectors, matrices, and tensors, respectively. For example, for a third-order tensor $\boldsymbol{\mathcal{X}}\in\mathbb{R}^{N\times K\times T}$ with entries $x_n[k,t]$, we write
$\boldsymbol{\mathcal{X}}=\{x_n[k,t]\}_{n\in\mathcal{N},k\in\mathcal{K},t\in\mathcal{T}}$,
$\mathbf{X}_n=\{x_n[k,t]\}_{k\in\mathcal{K},t\in\mathcal{T}}\in\mathbb{R}^{K\times T}$,
$\mathbf{X}[t]=\{x_n[k,t]\}_{n\in\mathcal{N},k\in\mathcal{K}}\in\mathbb{R}^{N\times K}$,
$\mathbf{X}[k]=\{x_n[k,t]\}_{n\in\mathcal{N},t\in\mathcal{T}}\in\mathbb{R}^{N\times T}$, and
$\mathbf{x}_{n,k}=\{x_n[k,t]\}_{t\in\mathcal{T}}\in\mathbb{R}^{T}$.
For $x,a\in\mathbb{R}$, $x\to a^-$ and $x\to a^+$ denote that $x$ approaches $a$ from the left and from the right, respectively.

\section{System Model\label{sec:System-Model}}
Consider a \ac{uav} telemetry system, as depicted in Fig.~\ref{fig:system_model}.
The system consists of one \ac{uav} indexed by $0$, and $N$ \acpl{bs} indexed by $n\in\mathcal{N}=\{1,\cdots,N\}$. The \ac{uav} updates its on-board sensory data to a control center via the terrestrial network ({\em i.e.}, the \acpl{bs}). Timely delivery of this information is crucial, as the control center relies on the most recent and relevant data for reliable analysis and informed decision-making. Meanwhile, the terrestrial network shall also maintain stable service for the ground users. The MPComm framework is detailed below.

\begin{figure}
\begin{centering}
\includegraphics[width=1\columnwidth]{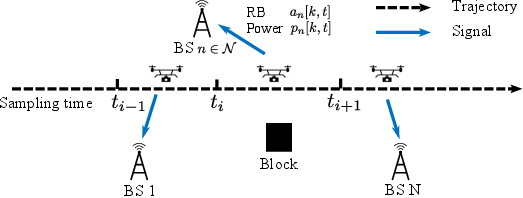}
\par\end{centering}
\caption{Illustration of the \ac{uav} telemetry system. The \ac{uav} symbols along
the trajectory illustrate the \ac{uav}'s positions at different
time instants. The \ac{uav} reports on-board data to a control center through \acpl{bs}.}
\label{fig:system_model}
\end{figure}

\subsection{Predictive Channel Model}
We consider a slotted \ac{ofdm} system, where time is divided into
slots indexed by $t\in\mathcal{T}=\{1,\cdots,T\}$, and the available
spectrum is partitioned into $K$ orthogonal \acpl{rb}, indexed
by $k\in\mathcal{K}=\{1,\cdots,K\}$. The wireless channel
from the \ac{uav} to \ac{bs} $n$ on \ac{rb} $k$ at time
$t$ can be modeled as 
\begin{equation}
h_{n}\left[k,t\right]=g_{n}\left[k,t\right]\xi_{n}\left[k,t\right],\label{eq:channel_model}
\end{equation}
where $g_{n}[k,t]$ is the large-scale channel gain (e.g., path loss
and shadowing), and $\xi_{n}[k,t]$ is the small-scale fading following the Gamma distribution with unit mean~\cite{LiChe:J24b,Nak:B60}, \emph{i.e.}, $\xi_{n}[k,t]\sim\text{Gamma}(\kappa_{n}[k,t],1/\kappa_{n}[k,t])$,
where $\kappa_{n}[k,t]$ denotes the shape parameter and $1/\kappa_{n}[k,t]$ denotes scale parameter. The resulting channel model is obtained as
\begin{equation}
h_{n}\left[k,t\right]\sim\text{Gamma}\left(\kappa_{n}\left[k,t\right],g_{n}\left[k,t\right]/\kappa_{n}\left[k,t\right]\right).
\end{equation}

The predictive channel model is built on two key enablers.
\begin{itemize}
\item advanced channel sensing techniques, such as radio maps and digital
twins~\cite{ZenCheXuWu:J24,WanZhaNieYu:M25,LiZhaJiaChe:25}, which provide a 3D representation of the wireless propagation environment and offer spatially resolved channel statistics between the UAV and ground BSs; and
\item high-precision \ac{uav} control, which allows the \ac{uav} to follow pre-determined trajectories $\{(t,\mathbf{p}_{0}[t])\}_{t\in\mathcal{T}}$ with minimal deviation~\cite{HeLiMoHua:J24,CheLiSunCui:M26}.
\end{itemize}

Consequently, the UAV\textquoteright s motion traces a one-dimensional slice through the 3D channel field, yielding a time-indexed channel profile that can be predicted in advance. Let 
\begin{equation}
\left\{g_{n}\left[k,t\right],\kappa_{n}\left[k,t\right]\right\}_{k\in \mathcal{K}}\triangleq\Xi\left(\mathbf{p}_{0}\left[t\right],\mathbf{p}_{n}\right),\,\,\, \forall t\in\mathcal{T},
\end{equation}
denote the radio map between the \ac{uav} and \ac{bs} $n$ with position $\mathbf{p}_{n}$ along the trajectory. The predictive channel information is available to the control center prior to deployment.

\subsection{Communication Model}
Denote by $a_{n}[k,t]\in\{0,1\}$ the indicator of allocating \ac{rb} $k$ at slot $t$ for the transmission from the \ac{uav} to \ac{bs} $n$. For each \ac{rb} $(k,t)$, the transmitting \ac{uav} can be scheduled to at most one \ac{bs}, {\em i.e.}, $\sum_{n\in\mathcal{N}}a_{n}\left[k,t\right]\le1.$ Accordingly, the \ac{rb} scheduling policy $\mathbf{A}[t]$ for any time $t\in\mathcal{T}$ is restricted to the feasible set
\begin{equation}
\mathcal{S}_{\mathbf{A}} \hspace{-0.2em}\triangleq \hspace{-0.2em}
\Big\{\mathbf{A} \in 
\left\{ 0,1\right\} ^{\left|\mathcal{N}\right|\times\left|\mathcal{K}\right|}\hspace{-0.2em}:\hspace{-0.2em}
\sum_{n\in\mathcal{N}}a_{n}\left[k,t\right]\le1,\forall k\in\mathcal{K}\Big\} .\label{eq:def_fs_A}
\end{equation}

Let $p_{n}\left[k,t\right] \ge 0$ denote the transmit power of the \ac{uav} associated
with $\left(n,k,t\right)$. The total transmission power is limited
to the threshold $\bar{p}$, leading to the sum-power constraint $\mathbf{P}\left[t\right]\in\mathcal{S}_{\mathbf{P}}$,
where $\mathcal{S}_{\mathbf{P}}$ denotes the feasible power set
\begin{equation}
\mathcal{S}_{\mathbf{P}}\triangleq\Big\{ \mathbf{P}\in\mathbb{R}_{+}^{\left|\mathcal{N}\right|\times\left|\mathcal{K}\right|}:\sum_{n\in\mathcal{N}}\sum_{k\in\mathcal{K}}p_{n}\left[k,t\right]\le\bar{p}\Big\} .\label{eq:def_fs_P}
\end{equation}

We now derive the sum data rate. The \ac{snr} for the link from the \ac{uav} to \ac{bs} $n$ on \ac{rb} $k$ at time $t$ is given by 
\begin{equation}
\gamma_{n}\left[k,t\right]=\frac{p_{n}\left[k,t\right]h_{n}\left[k,t\right]}{\delta^{2}},\label{eq:def_snr}
\end{equation}
where $\delta^{2}$ is the noise power. We assume perfect Doppler compensation~\cite{LuZen:J24}. Then, the instantaneous channel capacity from the \ac{uav} to \ac{bs} $n$ at time $t$ for \acpl{rb} $k$ is modeled as 
\begin{equation}
c_{n}\left[k,t\right]=\log_{2}\left(1+\gamma_{n}\left[k,t\right]\right).\label{eq:def_c}
\end{equation}

Finally, the sum data rate over a time interval $(t^{\prime},t^{\prime\prime})$ aggregated across all \acpl{bs} and \acpl{rb} is given by\footnote{We assume reliable wired connectivity among the \acpl{bs}/control center, so that the data received by different \acpl{bs} can be aggregated cooperatively. A non-cooperative variant can be modeled by requiring that each update interval be delivered through only one \ac{bs}. This variant preserves the problem structure and can be handled by the proposed algorithmic framework.}
\begin{equation}
\upsilon\left(t^{\prime},t^{\prime\prime}\right)=\sum_{n\in\mathcal{N}}\sum_{k\in\mathcal{K}}\sum_{t=t^{\prime}}^{t^{\prime\prime}-1}c_{n}\left[k,t\right]a_{n}\left[k,t\right].\label{eq:thp_def}
\end{equation}
We note that $h_{n}\left[k,t\right]$, and hence $c_{n}\left[k,t\right]$ and $\upsilon\left(t^{\prime},t^{\prime\prime}\right)$, are random variables. In the sequel, their expectations will be used in the performance evaluation.

\subsection{Performance Metrics}
\subsubsection{Timeliness requirement for aerial traffic}
In this work, we use the \ac{aoi} metric to quantify the freshness of information received from the \ac{uav}~\cite{SalKouPap:J24}. Let $s[t]\in\{0,1\}$ denote the update-success indicator for the \ac{uav} at the end of the slot $t$, and let $G_t$ denote the generation time of the latest sample received at the receiver by time $t$. The \ac{aoi} at the control center is recursively defined as 
\begin{equation}
\tau\left[t+1\right]\triangleq\begin{cases}
t-G_t, & s[t]=1,\\
\tau\left[t\right]+1, & s[t]=0.
\end{cases}\label{eq:aoi_model}
\end{equation}
A transmission attempt is successful if the expected delivered payload accumulated since the previous success meets a quality threshold $\bar{\upsilon}$, {\em i.e.}, 
\begin{equation}
s\left[t\right]=\mathbb{I}\left\{ \mathbb{E}\left\{ \upsilon\left(t_{0},t\right)\right\} \ge\bar{\upsilon}\right\} .\label{eq:aoi_I}
\end{equation}
Here, the expectation is with respect to the channel $h_{n}[k,t^{\prime}]$ for all $n\in\mathcal{N}$, $k\in\mathcal{K}$ and $t^{\prime}\in[t_{0},\cdots,t-1]$. 

For timeliness, we impose a hard constraint on the peak information age, {\em i.e.},
\begin{equation}
\tau\left[t\right]\le\bar{\tau},\forall t\in\mathcal{T},\label{eq:aoi_def}
\end{equation}
where $\bar{\tau}\in\mathbb{Z}_+$. If the \ac{aoi} exceeds the threshold, any ongoing incomplete transmission shall be aborted.

\subsubsection{Fairness requirement for coexistence}
We regulate the aerial load to ensure that each \ac{bs} has sufficient spectrum resources for terrestrial services. The temporal load level at \ac{bs} $n$ is defined as the worst-case load occupied by aerial traffic,
\begin{equation*}
l_{n}\triangleq\max_{t\in\mathcal{T}}\sum_{k\in\mathcal{K}}a_{n}\left[k,t\right].
\end{equation*}
We define the spatiotemporal load cap $\theta \in\mathbb{Z}_{+}$ as
\begin{equation}
\theta\triangleq\max_{n\in\mathcal{N}}l_{n}=\max_{n\in\mathcal{N},t\in\mathcal{T}}\sum_{k\in\mathcal{K}}a_{n}\left[k,t\right].\label{eq:def_stable}
\end{equation}
Equivalently, $K - \theta$ quantifies the worst-case residual spectrum available to terrestrial services.

\subsubsection{Energy efficiency for aerial traffic}
In contrast to ground \acpl{bs}, which have a stable and sufficient energy supply, the energy consumption at the \ac{uav} is tightly
constrained by its limited onboard battery capacity, making energy efficiency a critical design consideration. The
energy consumption $E \in\mathbb{R}_{+}$ is defined as
\begin{equation}
E\triangleq\sum_{n\in\mathcal{N},k\in\mathcal{K},t\in\mathcal{T}}a_{n}\left[k,t\right]p_{n}\left[k,t\right].
\label{eq:def_ee}
\end{equation}

\subsection{Problem Formulation}

The goal is to maintain the timeliness of aerial traffic while balancing on-board energy consumption and spectrum usage. Let $\boldsymbol{\mathcal{P}}=\{p_{n}\left[k,t\right]\}_{n\in\mathcal{N},k\in\mathcal{K},t\in\mathcal{T}}$ and $\boldsymbol{\mathcal{A}}=\{a_{n}\left[k,t\right]\}_{n\in\mathcal{N},k\in\mathcal{K},t\in\mathcal{T}}$ denote the power allocation decisions and the \ac{rb} allocation decisions over the entire planning horizon, respectively. We aim to solve the following constrained bi-objective problem\footnote{Problem $\mathscr{P}1$ serves as a canonical formulation. We note that our analysis is applicable to a broad class of variants, as detailed in Section~\ref{sec:variants}.}
\begin{align}
\mathscr{P}1:\underset{\boldsymbol{\pi}\triangleq\left[\boldsymbol{\mathcal{A}},\boldsymbol{\mathcal{P}}\right]}{\text{minimize}} & \ \left\{ \theta\left(\boldsymbol{\pi}\right),E\left(\boldsymbol{\pi}\right)\right\}, \label{eq:obj_P1}\\
\text{subject to} & \ \tau\left[t\right]\le\bar{\tau},\forall t\in\mathcal{T},\label{eq:aoi_P1_c1}\\
 & \ \boldsymbol{\pi}\in\Pi\triangleq\left(\mathcal{S}_{\mathbf{A}}\times\mathcal{S}_{\mathbf{P}}\right)^{\text{\ensuremath{\left|\mathcal{T}\right|}}},\label{eq:basic_P1_c1}
\end{align}
where \eqref{eq:obj_P1} includes the spectrum usage and on-board energy consumption objectives as defined in \eqref{eq:def_stable} and \eqref{eq:def_ee}, 
\eqref{eq:aoi_P1_c1} is the peak \ac{aoi} constraint
as defined in \eqref{eq:aoi_model}--\eqref{eq:aoi_def}, and~\eqref{eq:basic_P1_c1} is the feasible policy constraint specifying all admissible policies defined in \eqref{eq:def_fs_A}--\eqref{eq:def_fs_P}. 

 A policy is feasible if it satisfies the peak \ac{aoi} constraint. The challenge lies in finding the most effective policy that meets this constraint with minimal expenditure, thus preserving resources to optimize the objectives. This is referred to as the optimal communication timing problem, which will be made explicit in the next section.

The solution to $\mathscr{P}1$ is the complete Pareto frontier that consists of all Pareto-optimal strategies (see Definition~\ref{def:patero}). Each point on the frontier achieves a trade-off between the two competing objectives.

\begin{definition}
\label{def:patero} (Pareto optimality). A feasible strategy $\boldsymbol{\pi}^{*}$
is Pareto-optimal if there is no other feasible strategy $\boldsymbol{\pi}\neq\boldsymbol{\pi}^{*}\in\Pi$
such that $\theta\left(\boldsymbol{\pi}^{\prime}\right)\le\theta\left(\boldsymbol{\pi}\right)$
and $E\left(\boldsymbol{\pi}^{\prime}\right)\le E\left(\boldsymbol{\pi}\right)$
with at least one strict inequality. We refer to $\boldsymbol{\pi}^{*}$
as a Pareto-optimal policy in the decision space, and its objective
image $\{\theta\left(\boldsymbol{\pi}^{*}\right), E \left(\boldsymbol{\pi}^{*}\right)\}$
as a Pareto-optimal point. The collections of all such policies and points constitute the Pareto-optimal set and the Pareto frontier,
respectively.
\end{definition}

Computing the exact Pareto frontier is a non-trivial task. First, the problem is a mixed-integer program, involving binary decision variables $a_n[k, t]$ and continuous decision variables $p_n[k, t]$. Second, the objectives and constraints are non-convex, which renders most existing algorithms ineffective or inapplicable. We will address this challenge in Section~\ref{sec:decomposition} and Section~\ref{sec:algorithm}.

\section{Pareto Analysis and Problem Decomposition}\label{sec:decomposition}
This section presents our main theoretical results on the Pareto frontier. We use the $\epsilon$-constraint method ($\mathscr{P}2$) to characterize the frontier~\cite{Kai:B98} and derive tight upper and lower bounds for its region. The peak \ac{aoi} constraint in $\mathscr{P}2$ is then converted into a sum rate constraint that links age and communication timing ($\mathscr{P}3$). An important result, Proposition~\ref{prop:opt_trans}, shows that $\mathscr{P}3$ can be decomposed into a two-layer structure: the outer layer determines the optimal communication timing, while the inner layer finds the optimal resource allocation decisions for each given communication interval.

\subsection{Characterization of the Pareto Frontier}
We adopt the $\varepsilon$-constraint method to characterize the Pareto frontier. This choice is motivated by its theoretical completeness and practical implementability, particularly given the discrete nature of the system variables. In contrast, other common approaches, such as the weighted-sum method and the weighted $L_{p}$-norm method, may fail to capture non-convex regions of the frontier and can introduce additional complexity~\cite{Kai:B98}.

For a given $\varepsilon_{\theta}\in\mathcal{K}$, the $\varepsilon$-constraint method converts the bi-objective $\mathscr{P}1$ into a single-objective problem with multiple constraints. Formally, it is defined as
\begin{align*}
\mathscr{P}\text{2}:\underset{\boldsymbol{\pi}\in\Pi}{\text{min}} & \ E\left(\boldsymbol{\pi}\right),\,\,\,\text{s.t. }\tau\left[t\right]\le\bar{\tau},\forall t~\text{and }~\theta\left(\boldsymbol{\pi}\right)\le\varepsilon_{\theta}.
\end{align*}

We next show that, for any given $\varepsilon_{\theta}$ within a certain range, the solution to $\mathscr{P}\text{2}$ is Pareto-optimal. Hence, by varying $\varepsilon_{\theta}$ over its feasible range, one can characterize the Pareto frontier of $\mathscr{P}1$. Since the load variable $\theta$ is integer-valued and finite, its feasible range is finite, making the sweep implementable. Furthermore, we derive tight bounds on $\theta$ to reduce computational overhead. These results are summarized in the following proposition.

\begin{proposition}[Pareto frontier]\label{prop:pareto_front}
The set 
\begin{equation*}
\mathcal{C}\triangleq\left\{ \left(\varepsilon_{\theta},E^{*}\left(\varepsilon_{\theta}\right)\right):\varepsilon_{\theta}\in\left[\underline{\theta},\overline{\theta}\right]\cap \mathbb{Z}_{+}\right\},
\end{equation*}
is the Pareto frontier of $\mathscr{P}1$, where $\underline{\theta}\triangleq\theta^{\star}$,
\begin{equation}
\overline{\theta}\triangleq\min\left\{ \varepsilon_{\theta}\in\mathbb{Z}_{+}:E^{*}(\varepsilon_{\theta})=E^{\star}\right\} ,\label{eq:def_theta_up}
\end{equation}
$E^{*}\left(\varepsilon_{\theta}\right)$ is the optimal value to $\mathscr{P}\text{2}$, and $(\theta^{\star},E^{\star})$ is the ideal point of $\mathscr{P}1$, {\em i.e.}, the vector consisting of the separately attained optima of the two objectives.
\end{proposition}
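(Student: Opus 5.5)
The plan is to show two inclusions: every point of $\mathcal{C}$ is Pareto-optimal, and every Pareto-optimal point lies in $\mathcal{C}$. Throughout we use three basic facts. First, $\theta$ takes values in the finite set $\{0,1,\dots,K\}$. Second, $E^{*}(\varepsilon_{\theta})$ is nonincreasing in $\varepsilon_{\theta}$, because enlarging $\varepsilon_{\theta}$ enlarges the feasible set of $\mathscr{P}2$. Third, the minimum in $\mathscr{P}2$ is attained whenever $\mathscr{P}2$ is feasible. The attainment claim should be argued briefly. For each of the finitely many binary patterns $\boldsymbol{\mathcal{A}}$, the feasible set of $\boldsymbol{\mathcal{P}}$ is closed and bounded, since $\mathcal{S}_{\mathbf{P}}$ is compact. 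The expected capacity $\mathbb{E}\{\log_2(1+p h/\delta^2)\}$ is continuous in $p$, so the success conditions $\mathbb{E}\{\upsilon\}\ge\bar{\upsilon}$ are closed constraints. Finally, $E$ is continuous. By definition, $\theta^{\star}$ is the smallest load for which $\mathscr{P}2$ is feasible, and $E^{\star}=\lim_{\varepsilon_{\theta}\to\infty}E^{*}(\varepsilon_{\theta})=E^{*}(K)$. Hence $\overline{\theta}$ is well defined and satisfies $\underline{\theta}\le\overline{\theta}\le K$.

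\textbf{Step 1 (frontier points lie in $\mathcal{C}$).} Let $\boldsymbol{\pi}^{*}$ be Pareto-optimal with image $(\theta_0,E_0)$. We must show three things: $E_0=E^{*}(\theta_0)$, $\theta_0\ge\underline{\theta}$, and $\theta_0\le\overline{\theta}$. Since $\boldsymbol{\pi}^{*}$ is feasible for $\mathscr{P}2$ with $\varepsilon_{\theta}=\theta_0$, we have $\theta_0\ge\theta^{\star}$ and $E_0\ge E^{*}(\theta_0)$. If the last inequality were strict, a minimizer of $\mathscr{P}2$ at $\varepsilon_\theta=\theta_0$ would dominate $\boldsymbol{\pi}^{*}$, a contradiction. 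If $\theta_0>\overline{\theta}$, a minimizer of $\mathscr{P}2$ at $\varepsilon_\theta=\overline{\theta}$ would have load at most $\overline{\theta}<\theta_0$ and energy $E^{\star}\le E_0$, again dominating $\boldsymbol{\pi}^{*}$.

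\textbf{Step 2 (points of $\mathcal{C}$ are Pareto-optimal and attained with equality).} Fix an integer $\varepsilon_{\theta}\in[\underline{\theta},\overline{\theta}]$. The main subtlety is that a minimizer of $\mathscr{P}2$ need not use load exactly $\varepsilon_\theta$, and it is not automatically Pareto-optimal. Since $\theta$ is integer-valued, the plan is to show that $E^{*}$ is \emph{strictly} decreasing on $[\underline{\theta},\overline{\theta}]$. The case to rule out is $E^{*}(\varepsilon)=E^{*}(\varepsilon+1)$ for some $\varepsilon<\overline{\theta}$. The monotone step function $E^{*}$ is constant on maximal integer intervals, so the frontier only contains the left endpoint of each plateau. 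I therefore expect the intended statement, and the place where care is needed, to be that $\mathcal{C}$ is read with ties resolved to the smallest $\varepsilon_\theta$. I would prove the following: if $E^{*}(\varepsilon)>E^{*}(\varepsilon+1)$, then a minimizer at $\varepsilon+1$ has load exactly $\varepsilon+1$. Otherwise it would be feasible at $\varepsilon$ and give $E^{*}(\varepsilon)\le E^{*}(\varepsilon+1)$, a contradiction. Such a minimizer is then Pareto-optimal. Any policy with load at most $\varepsilon+1$ has energy at least $E^{*}(\varepsilon+1)$. Any policy with strictly smaller load has energy at least $E^{*}(\varepsilon)>E^{*}(\varepsilon+1)$, so no policy dominates it. The endpoint $\underline{\theta}$ is handled directly: nothing is feasible below $\theta^{\star}$, so a minimizer at $\theta^{\star}$ has load $\theta^{\star}$ and is nondominated.

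\textbf{Main obstacle.} The main obstacle is Step 2 on plateaus of $E^{*}$. There I would try to exploit the problem structure to show that strict decrease actually holds below $\overline{\theta}$. The idea is this: if raising the cap from $\varepsilon$ to $\varepsilon+1$ cannot reduce energy, then extra RBs per BS beyond $\varepsilon$ are useless at every slot. Transferring the RBs freed by the optimal timing, together with the concavity of $\log_2(1+\cdot)$, would then give $E^{*}(\varepsilon)=E^{\star}$, so that $\varepsilon\ge\overline{\theta}$. If that argument fails, I would state the result with plateau points identified, which Step 1 already justifies.
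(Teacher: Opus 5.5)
Your two-inclusion structure and Step~1 match the paper's proof, and your attainment argument is more careful than the paper's. Step~2 correctly proves Pareto-optimality at every $\varepsilon_\theta$ where $E^{*}$ strictly drops.

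\textbf{The gap.} What remains unproved is that $E^{*}$ is \emph{strictly} decreasing on $[\underline{\theta},\overline{\theta}]\cap\mathbb{Z}_{+}$. The statement needs this, since otherwise $\mathcal{C}$ contains dominated points. The route you sketch cannot deliver it. A flat step $E^{*}(\varepsilon)=E^{*}(\varepsilon+1)$ does not imply that extra RBs are useless at every slot, because $E^{*}$ is a minimum over sampling sequences $\mathbf{t}\in\Upsilon$, and the optimal timing can switch as the cap grows. Even where each per-timing cost is convex in the cap (e.g.\ after relaxing the RB variables), a minimum of such functions can be flat and then drop.

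The paper closes this step with its strict-monotonicity lemma, arguing that ``if relaxing a constraint yields no improvement, the unconstrained minimum has been reached.'' That inference is not valid here, and the claim is false in general. Here is an instance in the paper's reformulation $\mathscr{P}3$:
\begin{itemize}
\item $N=1$, $K=3$, $T=3$, $\bar{\tau}=2$, $\bar{\upsilon}=2$, $\bar{p}\ge 3$.
\item Nearly deterministic channels (large $\kappa$); every inequality below is strict, so it survives this perturbation.
\item Normalized gains $g/\delta^{2}$: slot~1 has one usable RB of gain $1.6$; slot~2 has one usable RB of gain $100$; the other RBs in these two slots have negligible gain; slot~3 has three RBs of gain $1$.
\end{itemize}
By water-filling, the pattern $\{1\}\{2,3\}$ costs $3/1.6+0.03=1.905$ for every cap $\ge 1$. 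The pattern $\{1,2\}\{3\}$ costs $0.03+m(2^{2/m}-1)$ with $m=\min\{\varepsilon_\theta,3\}$, i.e.\ $3.03$, $2.03$, $1.792$. The pattern $\{1\}\{2\}\{3\}$ is dominated. Hence $E^{*}(1)=E^{*}(2)=1.905>E^{*}(3)=1.792=E^{\star}$, so $\underline{\theta}=1$ and $\overline{\theta}=3$. The point $(2,E^{*}(2))\in\mathcal{C}$ is attained (add a zero-power RB in slot~1 to the load-$1$ optimizer), but it is dominated by $(1,E^{*}(1))$.

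\textbf{What to do instead.} Do not try to prove strict decrease; your fallback is the correct theorem, and your argument already proves it. Step~2 shows that every $(\varepsilon_\theta,E^{*}(\varepsilon_\theta))$ is Pareto-optimal when $\varepsilon_\theta\in[\underline{\theta},\overline{\theta}]$ and either $\varepsilon_\theta=\underline{\theta}$ or $E^{*}(\varepsilon_\theta-1)>E^{*}(\varepsilon_\theta)$. Step~1 gives the converse once you add one line: a Pareto point cannot sit to the right of a flat step, because the minimizer at $\theta_0-1$ would dominate it. The resulting set equals $\mathcal{C}$ exactly when $E^{*}$ has no flat steps below $\overline{\theta}$. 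That condition has to be imposed as a hypothesis, since it does not follow from the model.
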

\begin{IEEEproof}
See Appendix \ref{sec:proof_prop_pareto_front}.
\end{IEEEproof}

\subsection{Age-Aware Sampling Control}
In this section, we convert the peak \ac{aoi} constraint into a sum rate constraint that explicitly links age and sampling timing. Let $t_{i}$ denote the sampling instant (\emph{i.e.}, data generation time) of the $i$th status packet of the \ac{uav}. The \ac{uav} starts the delivery process upon generation of a new status packet; any ongoing incomplete transmission shall be aborted. Let 
\begin{equation*}
    \mathbf{t}=\{t_{1},t_{2},\cdots,t_{I}\}
\end{equation*}
denote a sampling sequence over the entire horizon $T$, where 
\begin{equation}
    t_0=1,\,\,\, 1 \leq t_1 < t_2, \ldots < t_I \leq T,\label{eq:aoi_c_s1}
\end{equation}
and $I$ is the total number of sampling events.

The peak \ac{aoi} constraint in~\eqref{eq:aoi_def} implies that the sampling interval cannot exceed $\bar{\tau}$, that is,
\begin{equation}
1\le t_{i+1}-t_{i}\le\bar{\tau},\quad \forall i\in\mathcal{I}.\label{eq:aoi_c_s2}
\end{equation}
We set $t_{I+1}=T+1$ to represent the end time of transmission for the $I$th status update. Therefore, the feasible sampling sequence is restricted by 
\begin{equation}
\Upsilon=\Big\{ \mathbf{t}\in\mathcal{T}^{\left|\mathcal{I}\right|}:1\le t_{i+1}-t_{i}\le\bar{\tau},\forall i\in\mathcal{I}\Big\} .\label{eq:def_fs_T}
\end{equation}

Recall that the transmission is successful if the sum rate satisfies \eqref{eq:aoi_I}. Accordingly, we impose the following expected sum rate constraint between any two consecutive sampling instances 
\begin{equation}
\mathbb{E}\left\{ \upsilon\left(t_{i},t_{i+1}\right)\right\} \ge\bar{\upsilon},\quad \forall i\in\mathcal{I}.\label{eq:aoi_c_s3}
\end{equation}

As a result, the peak \ac{aoi} constraint $\tau\left[t\right]\le\bar{\tau}$ is converted into the sum rate constraint in~\eqref{eq:aoi_c_s3} with $\mathbf{t}\in\Upsilon$. Then, $\mathscr{P}\text{2}$
is equivalent to the following problem
\begin{align}
\mathscr{P}3:\underset{\boldsymbol{\mathcal{A}},\boldsymbol{\mathcal{P}},\mathbf{t}}{\text{minimize}} & \ E\left(\boldsymbol{\pi}\right),\\
\text{s.t.}~~& \ \mathbb{E}\left\{ \upsilon\left(t_{i},t_{i+1}\right)\right\} \ge\bar{\upsilon},\forall i\in\mathcal{I},\label{eq:c_thp_p3}\\
 & \ \theta\left(\boldsymbol{\pi}\right)\le\varepsilon_{\theta},\\
 & \ \boldsymbol{\mathcal{A}}\in\mathcal{S}_{\mathbf{A}}^{\left|\mathcal{T}\right|},\boldsymbol{\mathcal{P}}\in\mathcal{S}_{\mathbf{P}}^{\left|\mathcal{T}\right|},\mathbf{t}\in\Upsilon.
\end{align}

\begin{figure}
\begin{centering}
\includegraphics[width=0.85\columnwidth]{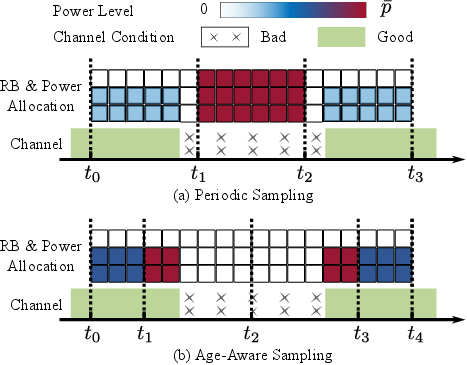}
\par\end{centering}
\caption{\label{fig:status_aware_sampling} Comparison of periodic and age-aware sampling policies. (a) The periodic policy communicates at a fixed rate and may violate the \ac{aoi} constraint under poor channel conditions. (b) The age-aware policy adjusts its sampling time and resource allocation based on predictive channel information.}
\end{figure}

Fig.~\ref{fig:status_aware_sampling} illustrates that the age-aware sampling can reduce both \acpl{rb} usage and energy consumption while maintaining the peak \ac{aoi} constraint. An important implication is that age-aware sampling may involve more transmissions while using fewer resources by exploiting predictive channel information. More comparisons are presented in Section~\ref{sec:Simulation}.

\subsection{Problem Decomposition}
Observe from problem $\mathscr{P}3$ that the variables are coupled over sampling instant $\mathbf{t}$ by the objective function and constraint \eqref{eq:c_thp_p3}. Intuitively, given any feasible sampling variable sequence $\mathbf{t}$, $\mathscr{P}3$ can be decomposed into $I$ parallel resource allocation subproblems. This result is formalized in the next theorem.

\begin{theorem}[Decomposition of $\mathscr{P}3$]\label{prop:opt_trans} 
Problem $\mathscr{P}3$ is equivalently transformed into the following outer
subproblem 
\begin{equation*}
\mathscr{P}\text{3-1}:\underset{\mathbf{t}\in\Upsilon}{\text{minimize}}\ \sum_{i\in\mathcal{I}}E^{*}\left(t_{i},t_{i+1}\right),
\end{equation*}
where $E^{*}(t_{i},t_{i+1})$ is the solution to the inner
subproblem 
\begin{align*}
\mathscr{P}\text{3-2}:\underset{\boldsymbol{\mathcal{A}}_{i},\boldsymbol{\mathcal{P}}_{i}}{\text{minimize}} & \quad\sum_{n\in\mathcal{N},k\in\mathcal{K},t\in\mathcal{T}_{i}}a_{n}\left[k,t\right]p_{n}\left[k,t\right]\\
\text{subject to} & \quad\mathbb{E}\left\{ \upsilon\left(t_{i},t_{i+1}\right)\right\} \ge\bar{\upsilon},\\
 & \quad\sum_{k\in\mathcal{K}}a_{n}\left[k,t\right]\le\varepsilon_{\theta},\forall n\in\mathcal{N},t\in\mathcal{T}_{i},\\
 & \quad\boldsymbol{\mathcal{A}}_{i}\in\mathcal{S}_{\mathbf{A}}^{\left|\mathcal{T}_{i}\right|},\boldsymbol{\mathcal{P}}_{i}\in\mathcal{S}_{\mathbf{P}}^{\left|\mathcal{T}_{i}\right|},
\end{align*}
where $\mathcal{T}_{i} = [t_{i},t_{i+1})\cap\mathbb{Z}_{+}$
denotes the $i$th communication interval, and $\boldsymbol{\mathcal{A}}_{i}=\{a_{n}\left[k,t\right]\}_{n\in\mathcal{N},k\in\mathcal{K},t\in\mathcal{T}_{i}}$
and $\boldsymbol{\mathcal{P}}_{i}=\{p_{n}\left[k,t\right]\}_{n\in\mathcal{N},k\in\mathcal{K},t\in\mathcal{T}_{i}}$
are the \ac{rb} and power allocation policy restricted to the time
interval $\mathcal{T}_{i}$, respectively.
\end{theorem}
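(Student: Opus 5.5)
The plan is to show that $\mathscr{P}3$ is a two-stage minimization whose inner stage separates across the communication intervals. First I will rewrite $\mathscr{P}3$ as
\begin{equation*}
\min_{\mathbf{t}\in\Upsilon}\ \min_{\boldsymbol{\mathcal{A}},\boldsymbol{\mathcal{P}}}\ \Big\{E(\boldsymbol{\pi}) : \text{constraints of } \mathscr{P}3 \text{ with } \mathbf{t} \text{ fixed}\Big\}.
\end{equation*}
This is legitimate because a joint minimum over $(\boldsymbol{\mathcal{A}},\boldsymbol{\mathcal{P}},\mathbf{t})$ always equals the iterated minimum. The set $\Upsilon$ is finite, so the outer minimum is attained whenever the inner one is. If the inner problem is infeasible for some $\mathbf{t}$, its value is taken as $+\infty$.

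Next, fix $\mathbf{t}\in\Upsilon$. The sets $\mathcal{T}_i=[t_i,t_{i+1})\cap\mathbb{Z}_+$ are pairwise disjoint. I also need their union to cover the relevant horizon. Slots before $t_1$, if $t_1>1$, carry no rate requirement, so setting $a_n[k,t]=0$ there is feasible and costs nothing; at an optimum these slots contribute zero. I will check each term of $\mathscr{P}3$ separately.
\begin{itemize}
\item The objective splits: $E(\boldsymbol{\pi})=\sum_i\sum_{n,k,t\in\mathcal{T}_i}a_n[k,t]p_n[k,t]$.
\item Each rate constraint $\mathbb{E}\{\upsilon(t_i,t_{i+1})\}\ge\bar{\upsilon}$ involves only the variables indexed by $t\in\mathcal{T}_i$. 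This follows from \eqref{eq:thp_def} and linearity of expectation, since $c_n[k,t]$ depends only on $p_n[k,t]$ and $h_n[k,t]$.
\item The load constraint $\theta(\boldsymbol{\pi})\le\varepsilon_\theta$ is a maximum over $(n,t)$. By \eqref{eq:def_stable} it is therefore equivalent to the per-slot constraints $\sum_k a_n[k,t]\le\varepsilon_\theta$ for all $n,t$, which split across the $\mathcal{T}_i$.
\item The feasible-set constraints $\mathcal{S}_{\mathbf{A}}^{|\mathcal{T}|}$ and $\mathcal{S}_{\mathbf{P}}^{|\mathcal{T}|}$ are products over slots by \eqref{eq:def_fs_A}--\eqref{eq:def_fs_P}, so they also split.
\end{itemize}
Hence the feasible set for fixed $\mathbf{t}$ is the Cartesian product of the feasible sets of the $\mathscr{P}$3-2 instances, and the objective is a sum of functions, each depending on only one factor.

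The key step is the standard fact that minimizing a separable sum over a product set gives the sum of the separate minima. I will prove it with two inequalities.
\begin{itemize}
\item For $\ge$: any feasible $(\boldsymbol{\mathcal{A}},\boldsymbol{\mathcal{P}})$ restricts to a feasible $(\boldsymbol{\mathcal{A}}_i,\boldsymbol{\mathcal{P}}_i)$ for each $i$, so its cost is at least $\sum_i E^*(t_i,t_{i+1})$.
\item For $\le$: I will concatenate optimal, or $\epsilon$-optimal, solutions of each $\mathscr{P}$3-2 into a global policy. It is feasible by the product structure, and its cost is $\sum_i E^*(t_i,t_{i+1})$ (plus $\epsilon$).
\end{itemize}
Substituting this value into the outer minimization gives exactly $\mathscr{P}$3-1. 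The optimal policies also correspond: an optimal $\mathbf{t}^*$ together with the concatenated inner optimizers solves $\mathscr{P}3$.

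The main obstacle I expect is the bookkeeping of the constraint decoupling, rather than any analytic difficulty. Two points need care. First, the max-type load objective must be converted into per-slot constraints, which is what allows it to separate. Second, the boundary slots outside $\bigcup_i\mathcal{T}_i$ must be handled via the zero allocation, and the convention $t_{I+1}=T+1$ must be used consistently. I also need to note that the abort rule means resources used in $\mathcal{T}_i$ count only toward update $i$, which is what justifies using interval-local rate constraints. Existence of the inner minimizers is not required, since the argument works with infima. It will hold anyway once $\mathbf{A}$ is fixed, because the feasible power region is then compact.
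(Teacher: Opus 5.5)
Your proposal is correct and follows the same route as the paper's proof. Fix $\mathbf{t}$ and observe that the energy objective, the interval rate constraints, the per-slot form of $\theta(\boldsymbol{\pi})\le\varepsilon_\theta$, and the per-slot sets $\mathcal{S}_{\mathbf{A}},\mathcal{S}_{\mathbf{P}}$ all separate across the intervals $\mathcal{T}_i$; the inner minimum is then $\sum_i E^*(t_i,t_{i+1})$, and only the minimization over $\mathbf{t}\in\Upsilon$ remains. Your version is more careful than the paper's in three places (the explicit two-sided inequality for the product-set minimization, the $+\infty$ convention for infeasible $\mathbf{t}$, and the treatment of slots not covered by any $\mathcal{T}_i$), but the argument is essentially the same.
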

\begin{IEEEproof}
See Appendix \ref{sec:proof_prop_opt_trans}.
\end{IEEEproof}

\begin{figure}[t!]
    \centering
    \scalebox{0.85}{\input{figures/two_layer}}
    \caption{Schematic representation of the two-layer structure.}
    \label{fig:two_layer}
\end{figure}
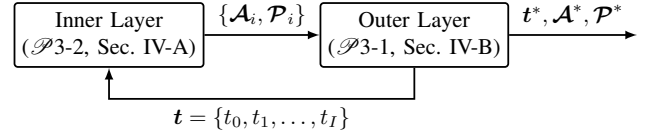

As depicted in Fig.~\ref{fig:two_layer}, instead of directly solving the high-dimensional Problem $\mathscr{P}3$, we propose a two-layer structure that isolates the resource allocation problem ($\mathscr{P}\text{3-2}$) from the sampling control problem ($\mathscr{P}\text{3-1}$). Note that a sampling sequence $\mathbf{t}$ is feasible if and only if all of its inner problems are feasible. The next section develops efficient algorithms to solve these sub-problems.

\section{Graph-Based Low-Complexity Algorithm}\label{sec:algorithm}
This section develops efficient algorithms to solve the decomposed subproblems $\mathscr{P}\text{3-1}$ and $\mathscr{P}\text{3-2}$. We first give a brief overview of the proposed approach.

Section~\ref{subsec:physical_layer_control} focuses on the inner problem of power and access control for a given communication interval. The main difficulty lies in the fact that $\mathscr{P}\text{3-2}$ is a non-convex, mixed-integer program. To address this issue, we first relax the problem into a continuous-variable convex formulation $\mathscr{P}\text{4}$, which exhibits several convenient properties that can be leveraged to construct an optimal policy for $\mathscr{P}\text{3-2}$ with reduced complexity. Based on these results, we develop an efficient algorithm, Algorithm~\ref{alg:opt_power_control}, which is proven to achieve the global optimum in the high-\ac{snr} or strong-\ac{los} regimes.

Section~\ref{subsec:graph} addresses the outer problem of sampling timing control. We show that $\mathscr{P}\text{3-1}$ is equivalent to a shortest-path problem on a finite graph, where the weight of each edge corresponds to the solution of the inner problem for a given communication interval. A graph-based algorithm, Algorithm~\ref{alg:g_control_alg}, is proposed to find an optimal sampling sequence.

A complexity analysis of the proposed algorithms is provided in Section IV-C. In particular, the overall complexity for computing the Pareto frontier is upper bounded by $\mathcal{O}(\bar{\tau}^{2} N K^{3}T\log(\epsilon^{-1}))$, which is linear in the number of \acpl{bs} $N$, and the horizon length $T$, quadratic in the maximum \ac{aoi} threshold $\bar{\tau}$, cubically with the number of \acpl{rb} $K$, and logarithmically with the tolerable error tolerance $\epsilon$.

\subsection{Inner Solution: Power and Access Control}\label{subsec:physical_layer_control}
To resolve the inner problem $\mathscr{P}\text{3-2}$, we first address the intractability of the stochastic sum rate constraint caused by channel randomness.

\subsubsection{A Tractable Lower Bound of the Expected Sum Rate}
The evaluation of the expectation $\mathbb{E}\{c_{n}\left[k,t\right]\}$
is computationally prohibitive due to the integral over the \ac{pdf}
of the fading channel. To circumvent this, we derive a tractable deterministic
lower bound. The key results are illustrated in Fig.~\ref{fig:c_lb}.
\begin{lemma}[Explicit lower bound]\label{lem:c_lb}
The expected capacity
$\mathbb{E}\{c_{n}\left[k,t\right]\}$ is lower bounded by 
\begin{align}
\mathbb{E}\left\{ c_{n}\left[k,t\right]\right\} \ge\bar{c}_{n}\left[k,t\right] \triangleq\log_{2}\left(1+\beta_{n}\left[k,t\right]\bar{\gamma}_{n}\left[k,t\right]\right),
\label{eq:def_c_bar}
\end{align}
where $\bar{\gamma}_{n}\left[k,t\right]=p_{n}\left[k,t\right]g_{n}\left[k,t\right]/\delta^{2}$
denotes the average received \ac{snr}, and $\beta_{n}\left[k,t\right]=e^{\psi\left(\kappa_{n}\left[k,t\right]\right)}/\kappa_{n}\left[k,t\right]$ characterizes the capacity loss due to fading severity, where $\psi(\cdot)$ is the Digamma function. Moreover, the bound is tight in high-\ac{snr} or strong-\ac{los} regime, i.e., $\mathbb{E}\left\{ c_{n}\left[k,t\right]\right\} \to \bar{c}_{n}\left[k,t\right]$ as $\bar{\gamma}_{n}\left[k,t\right] \to \infty$ or $\kappa_{n}\left[k,t\right] \to \infty$.
\end{lemma}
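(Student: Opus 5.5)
Write $h_n[k,t] = g\,\xi$ with $\xi \sim \text{Gamma}(\kappa, 1/\kappa)$, and drop the indices $(n,k,t)$. Then the instantaneous SNR is $\gamma = \bar{\gamma}\,\xi$, and $\mathbb{E}\{c\} = \mathbb{E}\{\log_2(1+\bar{\gamma}\xi)\}$. The bound comes from the convexity of $x \mapsto \log_2(1+e^{x})$. Set $X \triangleq \ln(\bar{\gamma}\xi)$, so that $c = \log_2(1+e^{X})$. Since $\frac{d^2}{dx^2}\ln(1+e^x) = e^x/(1+e^x)^2 > 0$, Jensen's inequality gives
\begin{equation*}
\mathbb{E}\{c\}\ \ge\ \log_2\!\left(1+e^{\mathbb{E}\{X\}}\right).
\end{equation*}
It remains to compute $\mathbb{E}\{\ln \xi\}$ for the Gamma law. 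For $\xi\sim\text{Gamma}(\kappa,\theta)$ we have $\mathbb{E}\{\ln\xi\} = \psi(\kappa)+\ln\theta$, which follows from differentiating $\int_0^\infty x^{s-1}e^{-x/\theta}dx=\Gamma(s)\theta^{s}$ with respect to $s$ at $s=\kappa$. With $\theta = 1/\kappa$ this becomes $\mathbb{E}\{X\} = \ln\bar{\gamma} + \psi(\kappa) - \ln\kappa$. Hence $e^{\mathbb{E}\{X\}} = \bar{\gamma}\,e^{\psi(\kappa)}/\kappa = \beta\bar{\gamma}$, which is exactly \eqref{eq:def_c_bar}. The case $p_n[k,t]=0$ is trivial, since both sides vanish.

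For tightness I would study the gap $\Delta \triangleq \mathbb{E}\{c\} - \bar{c} \ge 0$.

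\emph{High SNR.} Write $\log_2(1+\bar\gamma\xi) = \log_2(\bar\gamma\xi) + \log_2(1+1/(\bar\gamma\xi))$, and similarly $\bar c = \log_2(\beta\bar\gamma) + \log_2(1+1/(\beta\bar\gamma))$. The leading terms coincide in expectation, because $\mathbb{E}\{\log_2\xi\} = \log_2\beta$. Therefore
\begin{equation*}
\Delta = \mathbb{E}\{\log_2(1+1/(\bar\gamma\xi))\} - \log_2(1+1/(\beta\bar\gamma)).
\end{equation*}
The second term tends to zero. The first term tends to zero by monotone convergence as $\bar\gamma\to\infty$. The integrand decreases in $\bar\gamma$, and it is integrable at any fixed $\bar\gamma_0$: near $\xi=0$ we have $\log(1+1/(\bar\gamma_0\xi))\sim -\log\xi$, and $\mathbb{E}|\ln\xi|<\infty$ for any $\kappa>0$.

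\emph{Strong LoS.} As $\kappa\to\infty$, $\xi\to 1$ in $L^2$, since $\mathrm{Var}(\xi)=1/\kappa$. Also $\beta = e^{\psi(\kappa)}/\kappa \to 1$, using $\psi(\kappa) = \ln\kappa - \tfrac{1}{2\kappa} + O(\kappa^{-2})$. So $\bar c\to\log_2(1+\bar\gamma)$. Because $\log_2(1+\bar\gamma x) \le \log_2(1+\bar\gamma) + \bar\gamma|x-1|/\ln 2$ for $x\ge0$, the Lipschitz bound gives $|\mathbb{E}\{c\}-\log_2(1+\bar\gamma)| \le \bar\gamma\,\mathbb{E}|\xi-1|/\ln 2 \le \bar\gamma\kappa^{-1/2}/\ln 2 \to 0$. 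Hence $\Delta\to0$.

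The main obstacle is the high-SNR limit. There the Jensen argument alone says nothing about tightness, so the argument needs both the exact identity $\mathbb{E}\{\ln\xi\}=\ln\beta$, which cancels the growing terms, and the integrability of $\ln\xi$ near zero to justify passing the limit through the expectation.
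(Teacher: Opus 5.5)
Your proof is correct. For the bound itself and for the high-SNR limit it follows the paper's argument. That argument is Jensen's inequality on the convex map $y\mapsto\log_2(1+e^{y})$, the Gamma log-moment $\mathbb{E}\{\ln\xi\}=\psi(\kappa)-\ln\kappa=\ln\beta$, and the splitting $\log_2(1+\gamma)=\log_2\gamma+\log_2(1+1/\gamma)$.

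In the high-SNR part you do more than the paper. The paper simply asserts $\mathbb{E}\{\log_2(1+\gamma)\}=\mathbb{E}\{\log_2\gamma\}+o(1)$. Your monotone-convergence argument, which uses the integrability of $\ln\xi$ near $0$, is exactly the justification that step needs. You also treat the $p_n[k,t]=0$ case, which the paper does not.

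You genuinely differ from the paper in the strong-LOS limit. The paper uses a sandwich: concavity of $\log_2(1+x)$ together with $\mathbb{E}\{\gamma\}=\bar\gamma$ gives $\mathbb{E}\{c\}\le\log_2(1+\bar\gamma)$. Hence $0\le\mathbb{E}\{c\}-\bar c\le\log_2\frac{1+\bar\gamma}{1+\beta\bar\gamma}$, which vanishes because $\beta\to1$. You instead bound $|\mathbb{E}\{c\}-\log_2(1+\bar\gamma)|$ using the Lipschitz constant $\bar\gamma/\ln 2$ and $\mathbb{E}|\xi-1|\le\kappa^{-1/2}$.

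Both routes work for fixed $\bar\gamma$, but the paper's is simpler, since it needs no concentration estimate, and it is also stronger. Because $\beta<1$, the sandwich gives $0\le\mathbb{E}\{c\}-\bar c\le\log_2(1/\beta)=(\ln\kappa-\psi(\kappa))/\ln 2=O(1/\kappa)$ uniformly in $\bar\gamma$. Your bound $\bar\gamma\kappa^{-1/2}/\ln 2$ decays more slowly in $\kappa$ and worsens as the SNR grows. Since you already have $\Delta\ge 0$ from the first part, that one-line concave-Jensen upper bound could replace your whole strong-LOS paragraph.
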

\begin{IEEEproof}
See Appendix \ref{sec:proof_lem_c_lb}.
\end{IEEEproof}

\begin{figure}[t]
    \centering
    \includegraphics[width=0.95\linewidth]{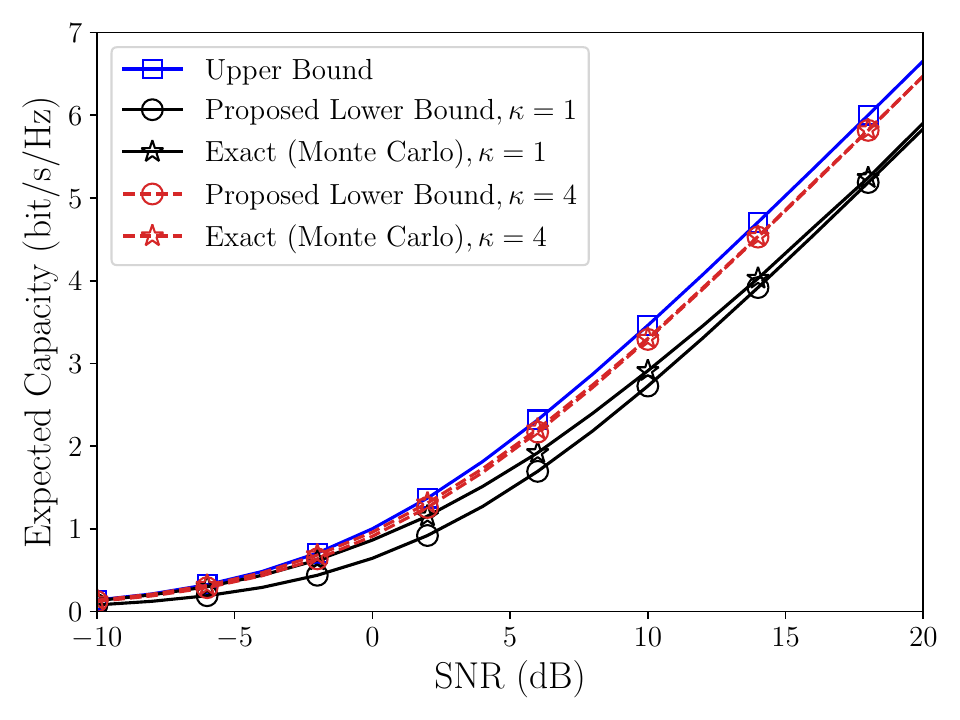}
    \caption{Expected capacity vs. average \ac{snr} for different $\kappa$ ($\kappa=1$: Rayleigh, $\kappa=4$: strong \ac{los}). The common approximation yields a $\kappa$-invariant upper bound, whereas the proposed result gives a lower bound that becomes tight in the high-\ac{snr} or strong-\ac{los} regime.}
    \label{fig:c_lb}
\end{figure}

\begin{figure*}
\centering{}\includegraphics[width=1\textwidth]{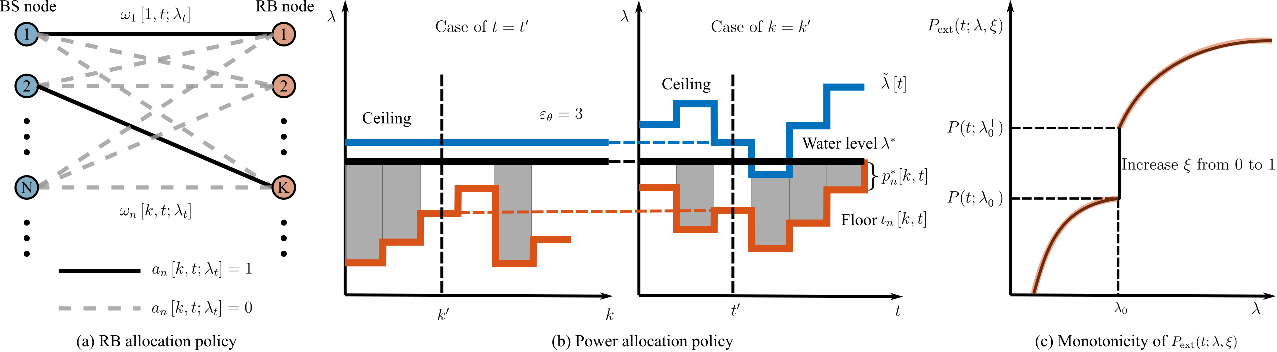}\caption{\label{fig:opt_pc} Illustration of power and access control algorithm. (a) Optimal \textbf{b-matching-based} \ac{rb} allocation: the left nodes represent \acpl{bs}, the right nodes represent \ac{rb} resources. The algorithm selects a minimum-weight matching under the constraint $\mathcal{A}\left(t\right)$.
(b) Optimal power allocation via \textbf{capped water-filling}: the water level is capped by the per-slot sum-power limit. Since activation depends on the \ac{rb}-allocation decision in (a), the \ac{rb} $(k,t)$ may
remain inactive even if its floor lies below the water level. (c) $P_{\text{ext}}$
increases monotonically with $(\lambda,\xi)$. The binary optimal solution exists for all $\lambda$ (the same holds for $\Phi_{\text{ext}}\left(t;\lambda,\xi\right)$).}
\end{figure*}

The channel fading severity $\beta_{n}[k,t] \in (0,1)$ ranges from 0 in deep fading regimes ($\kappa \to 0$) to 0.56 in Rayleigh fading ($\kappa \to 1$), and 1 in strong \ac{los} conditions ($\kappa \to \infty$). In low-altitude networks, the channel is typically dominated by \ac{los} components (\emph{i.e.,} $\kappa \geq 4$) \cite{MatSun:J17}, under which the proposed method is nearly tight, as illustrated in Fig.~\ref{fig:c_lb}.

Leveraging Lemma~\ref{lem:c_lb}, we replace the stochastic constraint
in $\mathscr{P}\text{3-2}$ with a tractable deterministic surrogate.
Consequently, the sum rate requirement is reformulated as:
\begin{equation}
\sum_{n\in\mathcal{N},k\in\mathcal{K},t\in\mathcal{T}_{i}}\bar{c}_{n}\left[k,t\right]a_{n}\left[k,t\right]\ge\bar{\upsilon}.\label{eq:thp_rlx}
\end{equation}

\subsubsection{Convex Relaxation and Transformation\label{subsec:convex_r_t}}
Firstly, we relax the binary \ac{rb} allocation constraint such that
$a_{n}\left[k,t\right]\in\left[0,1\right]$.
Then, to address the non-convexity inherent in the objective and rate-constrained
resource allocation (specifically \eqref{eq:thp_rlx} derived from
\eqref{eq:c_thp_p3}), we employ a variable transformation strategy
inspired by \cite{LiChe:J24a}. We introduce an auxiliary variable
$\phi_{n}\left[k,t\right]=\bar{c}_{n}\left[k,t\right]a_{n}\left[k,t\right]$.
Then, the transmission power variable is transformed to 
\begin{equation}
p_{n}\left[k,t\right]=\begin{cases}
\iota_{n}\left[k,t\right]\left(2^{\frac{\phi_{n}\left[k,t\right]}{a_{n}\left[k,t\right]}}-1\right), & a_{n}\left[k,t\right]>0,\\
0, & a_{n}\left[k,t\right]=0,
\end{cases}\label{eq:def_p_over_a_phi}
\end{equation}
where $\iota_{n}\left[k,t\right]=\delta^{2}/(\beta_{n}\left[k,t\right]g_{n}\left[k,t\right])$
is a constant.

Substituting \eqref{eq:def_p_over_a_phi} into the energy consumption formula in \eqref{eq:def_ee}, the
objective function of $\mathscr{P}\text{3-2}$ is reformulated as
\begin{equation*}
\sum_{n\in\mathcal{N},k\in\mathcal{K},t\in\mathcal{T}_{i}}\iota_{n}\left[k,t\right]a_{n}\left[k,t\right]\left(2^{\frac{\phi_{n}\left[k,t\right]}{a_{n}\left[k,t\right]}}-1\right),
\end{equation*}
which is convex, as each summand is the perspective function
of the convex function $f_{1}(x)=\iota_{n}\left[k,t\right](2^{x}-1)$
\cite{Boy:B04}.

As a result, the original sum-rate constraint \eqref{eq:thp_rlx} becomes linear 
\begin{equation*}
\sum_{n\in\mathcal{N},k\in\mathcal{K},t\in\mathcal{T}_{i}}\phi_{n}\left[k,t\right]\ge\bar{\upsilon}.
\end{equation*}
Similarly, the power constraints $\sum_{n,k}p_{n}\left[k,t\right]\le\bar{p}$ and $p_{n}\left[k,t\right]\ge0$ are converted to
\begin{equation*}
\sum_{n\in\mathcal{N},k\in\mathcal{K}}\iota_{n}\left[k,t\right]a_{n}\left[k,t\right]\left(2^{\frac{\phi_{n}\left[k,t\right]}{a_{n}\left[k,t\right]}}-1\right)\le\bar{p},
\,\,\, \forall t\in\mathcal{T}_{i},
\end{equation*}
and $\phi_{n}\left[k,t\right]\ge0$, respectively.

By combining the deterministic lower bound, continuous relaxation, and variable transformation, $\mathscr{P}\text{3-2}$ is reformulated such that the objective function and all constraints become convex. The resulting relaxed problem is therefore convex, and its formulation is given as follows.
\begin{align}
\mathscr{P}4:&~ \underset{\boldsymbol{\mathcal{A}}_{i},\boldsymbol{\Phi}_{i}}{\text{minimize}} 
 \, \sum_{n\in\mathcal{N},k\in\mathcal{K},t\in\mathcal{T}_{i}}\iota_{n}\left[k,t\right]a_{n}\left[k,t\right]\left(2^{\frac{\phi_{n}\left[k,t\right]}{a_{n}\left[k,t\right]}}-1\right)\nonumber \\
\text{s.t.}~& \sum_{n\in\mathcal{N},k\in\mathcal{K},t\in\mathcal{T}_{i}}\phi_{n}\left[k,t\right]\ge\bar{\upsilon}, \\
&\phi_{n}\left[k,t\right]\ge0,\forall n\in\mathcal{N},k\in\mathcal{K},t\in\mathcal{T}_{i}, \\
&\sum_{n\in\mathcal{N},k\in\mathcal{K}}\iota_{n}\left[k,t\right]a_{n}\left[k,t\right]\left(2^{\frac{\phi_{n}\left[k,t\right]}{a_{n}\left[k,t\right]}}-1\right) \le\bar{p},\forall t\in\mathcal{T}_{i},\\
 & \sum_{k\in\mathcal{K}}a_{n}\left[k,t\right]\le\varepsilon_{\theta},\forall n\in\mathcal{N},t\in\mathcal{T}_{i},\label{eq:cp_c_a1}\\
 & \sum_{n\in\mathcal{N}}a_{n}\left[k,t\right]\le1,\forall k\in\mathcal{K},t\in\mathcal{T}_{i},\label{eq:cp_c_a2}\\
 & a_{n}\left[k,t\right]\in\left[0,1\right],\forall n\in\mathcal{N},k\in\mathcal{K},t\in\mathcal{T}_{i},\label{eq:cp_c_a3}
\end{align}
where $\boldsymbol{\Phi}_{i}\triangleq\{\phi_{n}\left[k,t\right]\}_{n\in\mathcal{N},k\in\mathcal{K},t\in\mathcal{T}_{i}}$.

\subsubsection{Efficient and Binary-Enforcing Algorithm}
One can apply commercial tools such as CVX~\cite{cvx} to solve $\mathscr{P}4$. However, these methods are computationally inefficient for high-dimensional problems. More importantly, they produce non-deterministic access decisions, since the obtained variables $a_{n}[k,t]$ are continuous rather than binary. We are therefore motivated to develop a specialized, efficient algorithm that enforces binary constraints. The main idea is to analyze the \ac{kkt} conditions of the relaxed problem $\mathscr{P}4$ and exploit the resulting structure to retrieve the optimal binary solution.

The next proposition derives the optimal control for $\mathscr{P}4$ based on \ac{kkt} analysis. It shows that the power allocation adheres to a capped water-filling rule, and the \ac{rb} assignment becomes a linear program governed by the water level, as illustrated in Fig.~\ref{fig:opt_pc}.

\begin{proposition}[Optimal resource allocation]\label{prop:opt_p_a}
There exists an optimal
solution to $\mathscr{P}4$ such that for each $t\in\mathcal{T}_{i}$,
the optimal \ac{rb} allocation $\mathbf{A}\left[t;\lambda_{t}\right]\triangleq\{a_{n}[k,t;\lambda_{t}]\}_{n,k}$
belongs to the optimal solution set of the per-slot linear program
\begin{equation}
\mathcal{A}^{*}\left(t;\lambda_{t}\right)\triangleq\arg\underset{\mathbf{A}\left[t\right]\in\mathcal{A}\left(t\right)}{\min}\sum_{n\in\mathcal{N},k\in\mathcal{K}}w_{n}\left[k,t;\lambda_{t}\right]a_{n}\left[k,t\right],\label{eq:opt_a}
\end{equation}
with weights 
\begin{equation*}
w_{n}\left[k,t;\lambda_{t}\right]=p_{n}\left[k,t;\lambda_{t}\right]-\ln2\lambda_{t}\bar{c}_{n}\left[k,t;\lambda_{t}\right],
\end{equation*}
where 
\begin{equation*}
\mathcal{A}\left(t\right)\triangleq\left\{ \left\{ a_{n}\left[k,t\right]\right\} _{n,k}:\text{\eqref{eq:cp_c_a1}--\eqref{eq:cp_c_a3}}\right\}.
\end{equation*}
Here, the optimal power and capacity follow a capped water-filling
structure
\begin{equation}
p_{n}\left[k,t;\lambda_{t}\right]=\left[\lambda_{t}-\iota_{n}\left[k,t\right]\right]^{+},\label{eq:opt_p_p3}
\end{equation}
\begin{equation}
\bar{c}_{n}\left[k,t;\lambda_{t}\right]=\left[\log_{2}\left(\lambda_{t}\right)-\log_{2}\left(\iota_{n}\left[k,t\right]\right)\right]^{+},\label{eq:opt_c_p3}
\end{equation}
where $[x]^{+}\triangleq\max(0,x)$ and the effective per-slot water
level is capped as
\begin{equation*}
\lambda_{t}=\min\left\{ \frac{\lambda^{*}}{\ln2},\tilde{\lambda}_{t}\right\} .
\end{equation*}
The per-slot caping level $\tilde{\lambda}_{t}$ is chosen to satisfy
\begin{align}
P\left(t;\lambda\right) & \triangleq\sum_{n\in\mathcal{N},k\in\mathcal{K}}a_{n}\left[k,t;\lambda\right]\left[\lambda-\iota_{n}\left[k,t\right]\right]^{+}=\bar{p}.\label{eq:equation_for_clip}
\end{align}
The global water level \textup{$\lambda^{*}$} is determined by
\begin{align}
\Phi\left(\lambda\right) & \triangleq\sum_{t\in\mathcal{T}_{i}}\Phi\left(t;\lambda\right)\nonumber \\
 & \triangleq\sum_{t\in\mathcal{T}_{i}}\sum_{n\in\mathcal{N},k\in\mathcal{K}}a_{n}\left[k,t;\lambda_{t}\right]c_{n}\left[k,t;\lambda_{t}\right]=\bar{\upsilon}.\label{eq:equation_for_water_level}
\end{align}
\end{proposition}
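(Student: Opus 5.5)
The plan is to derive the structure from the \ac{kkt} conditions of the convex problem $\mathscr{P}4$. The objective and the per-slot power constraints are sums of perspective functions $a\,f_1(\phi/a)$, closed at $a=0$ by the value $0$ as in \eqref{eq:def_p_over_a_phi}, and the remaining constraints are linear. Assuming $\mathscr{P}4$ is strictly feasible (otherwise the inner interval is simply infeasible), Slater's condition gives strong duality, so the \ac{kkt} conditions characterize optimality.

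\emph{Lagrangian.} I dualize only the sum-rate constraint, with multiplier $\lambda^*\ge0$, and the per-slot power constraints, with multipliers $\mu_t\ge0$. The polytope constraints \eqref{eq:cp_c_a1}--\eqref{eq:cp_c_a3} and $\phi\ge0$ are kept explicit. This gives
\begin{equation*}
L=\sum_{t\in\mathcal{T}_i}(1+\mu_t)\sum_{n,k}\iota_n[k,t]\,a_n[k,t]\bigl(2^{\phi_n[k,t]/a_n[k,t]}-1\bigr)-\lambda^*\Bigl(\sum_{n,k,t}\phi_n[k,t]-\bar{\upsilon}\Bigr)-\sum_{t}\mu_t\bar{p}.
\end{equation*}

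\emph{Minimization over $\phi$ for fixed $a>0$.} Write $r=\phi/a$. Stationarity gives $(1+\mu_t)\iota\ln2\,2^{r}=\lambda^*$, projected onto $r\ge0$. Define the per-slot level $\lambda_t\triangleq\lambda^*/((1+\mu_t)\ln2)$. Then
\begin{equation*}
r=\bigl[\log_2\lambda_t-\log_2\iota_n[k,t]\bigr]^+,\qquad p/a=\bigl[\lambda_t-\iota_n[k,t]\bigr]^+,
\end{equation*}
which are exactly \eqref{eq:opt_c_p3} and \eqref{eq:opt_p_p3}. Both quantities are independent of $a$, which is the key feature of the perspective structure.

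\emph{Minimization over $a$.} Substituting the optimal $r$ back, the Lagrangian becomes linear in each $a_n[k,t]$. Its coefficient is $(1+\mu_t)\bigl(p_n[k,t;\lambda_t]-\ln2\,\lambda_t\,\bar c_n[k,t;\lambda_t]\bigr)=(1+\mu_t)\,w_n[k,t;\lambda_t]$. Since the constraints on $a$ decouple across slots into the polytopes $\mathcal{A}(t)$, and the factor $1+\mu_t>0$ does not change the argmin, the optimal $\mathbf{A}[t]$ lies in the LP solution set $\mathcal{A}^*(t;\lambda_t)$ of \eqref{eq:opt_a}.

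\emph{Complementary slackness.} If the power constraint at slot $t$ is inactive, then $\mu_t=0$ and $\lambda_t=\lambda^*/\ln2$. If it is active, then $\mu_t>0$, so $\lambda_t<\lambda^*/\ln2$ and $\lambda_t$ must solve $P(t;\lambda)=\bar p$, i.e., \eqref{eq:equation_for_clip}. Because $P(t;\cdot)$ is nondecreasing, this yields $\lambda_t=\min\{\lambda^*/\ln2,\tilde\lambda_t\}$. Since the objective is strictly increasing in every $\phi$, the rate constraint must be tight at optimality whenever $\bar{\upsilon}>0$. This gives $\lambda^*>0$ and the water-level equation \eqref{eq:equation_for_water_level}. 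Finally, primal–dual optimality of $(\mathbf{A},\boldsymbol{\Phi},\lambda^*,\{\mu_t\})$ certifies that the constructed point solves $\mathscr{P}4$.

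\emph{Main obstacle.} The hard step is showing that the equations \eqref{eq:equation_for_clip} and \eqref{eq:equation_for_water_level} actually admit solutions. The selection $\lambda\mapsto\mathcal{A}^*(t;\lambda)$ is set-valued, and at breakpoints where LP vertices tie, $P(t;\lambda)$ and $\Phi(\lambda)$ can jump, so an equality need not be attained by any single vertex. My plan is to show that, for fixed $(n,k,t)$, the weight $w_n[k,t;\lambda]$ is nonincreasing in $\lambda$, which makes the selected totals monotone as $\lambda$ increases. At a jump point $\lambda$, the solution set $\mathcal{A}^*(t;\lambda)$ is a convex face of the polytope, and both $P(t;\lambda)$ and $\Phi(t;\lambda)$ are linear in $\mathbf{A}$ on that face. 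Hence every value between the left and right limits is attained by a convex combination of optimal vertices, which is still an optimal LP solution. This intermediate-value argument on the optimal face, in the spirit of Fig.~\ref{fig:opt_pc}(c), yields existence of $\tilde\lambda_t$ and $\lambda^*$. It is also why the statement claims only that \emph{there exists} an optimal solution of the stated form.
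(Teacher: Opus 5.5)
Your proposal is correct and follows essentially the paper's route. It uses a Lagrangian/KKT analysis of $\mathscr{P}4$ that gives capped water-filling from the $\phi$-stationarity, a per-slot LP in $\mathbf{A}[t]$ with weights $(1+\mu_t)\,w_n[k,t;\lambda_t]$, the cap $\lambda_t=\min\{\lambda^*/\ln2,\tilde\lambda_t\}$ from complementary slackness on the power constraints, and a tight rate constraint. The only real difference is bookkeeping: you dualize just the rate and power constraints and minimize the Lagrangian over the explicit polytope $\mathcal{A}(t)$, which has the mild advantage of never differentiating the perspective at $a_n[k,t]=0$, whereas the paper attaches multipliers to every constraint and reinterprets the $a$-stationarity conditions as the KKT system of that LP.

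One step needs tightening. Nonincreasing weights alone do not make the selected totals monotone. What does is the identity $\partial w_n[k,t;\lambda]/\partial\lambda=-\ln2\,\bar c_n[k,t;\lambda]$, which, as in the paper's comparison of the two tied allocations at a switching level $\lambda_0$, forces $\Phi(t;\lambda_0^+)\ge\Phi(t;\lambda_0^-)$ and then $P(t;\lambda_0^+)-P(t;\lambda_0^-)=\ln2\,\lambda_0\bigl(\Phi(t;\lambda_0^+)-\Phi(t;\lambda_0^-)\bigr)\ge0$.
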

\begin{IEEEproof}
See Appendix \ref{sec:proof_prop_opt_p_a}.
\end{IEEEproof}

As a result, Problem $\mathscr{P}4$ is reduced to solving equations \eqref{eq:equation_for_clip}
and \eqref{eq:equation_for_water_level}. To address the potential non-uniqueness in $\mathcal{A}^{*}\left(t;\lambda_{t}\right)$, which may yield multi-valued and discontinuous functions $P\left(t;\lambda\right)$ and $\Phi\left(\lambda\right)$, we define the extended functions $P_{\text{ext}}\left(t;\lambda,\xi\right)$ and $\Phi_{\text{ext}}\left(\lambda,\xi\right)$
using a convex combination of extreme points. These extensions are shown to be optimal, continuous, and non-decreasing in Proposition~\ref{prop:property_cc}.
Additionally, the existence of a binary \ac{rb} allocation solution for any $\lambda$ is proven in Proposition~\ref{prop:Integrality}, which enables the construction of a feasible binary policy.

\begin{proposition}[Binary \ac{rb} allocation]\label{prop:Integrality}
For any $\lambda$, there exists a solution $\mathbf{A}\left[t;\lambda_{t}\right]\in\mathcal{A}^{*}\left(t;\lambda\right)$ such that $a_{n}[k,t;\lambda]\in\left\{ 0,1\right\} $ for all $n\in\mathcal{N},k\in\mathcal{K},t\in\mathcal{T}_{i}$.
\end{proposition}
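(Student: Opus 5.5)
The plan is to prove Proposition~\ref{prop:Integrality} by showing that the per-slot linear program in \eqref{eq:opt_a} has a totally unimodular constraint matrix with integral right-hand sides. The fundamental theorem of linear programming then gives an optimal vertex, and every such vertex is integral.

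Fix $t\in\mathcal{T}_i$ and $\lambda$. The weights $w_n[k,t;\lambda_t]$ are then fixed constants, and \eqref{eq:opt_a} is a linear program over the polytope $\mathcal{A}(t)$ defined by \eqref{eq:cp_c_a1}--\eqref{eq:cp_c_a3}. This polytope is nonempty, since $\mathbf{A}=\mathbf{0}$ is feasible, and bounded, since $a_n[k,t]\in[0,1]$. Hence the minimum is attained, and it is attained at at least one extreme point of $\mathcal{A}(t)$. It therefore suffices to show that every extreme point of $\mathcal{A}(t)$ is binary.

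To show this, I would interpret $\mathcal{A}(t)$ as the fractional $b$-matching polytope of the complete bipartite graph with left nodes $\mathcal{N}$ (BSs) and right nodes $\mathcal{K}$ (RBs), as in Fig.~\ref{fig:opt_pc}(a). Each edge $(n,k)$ carries the variable $a_n[k,t]$. The node capacities are $\varepsilon_\theta$ at each BS node, from \eqref{eq:cp_c_a1}, and $1$ at each RB node, from \eqref{eq:cp_c_a2}. The matrix stacking the constraints \eqref{eq:cp_c_a1} and \eqref{eq:cp_c_a2} is exactly the node--edge incidence matrix of this bipartite graph. Every column has one entry $1$ in the BS block and one entry $1$ in the RB block, so the matrix is totally unimodular by the standard bipartite (Heller--Tompkins) criterion. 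Appending the identity rows for the bounds $0\le a_n[k,t]\le 1$ preserves total unimodularity. The right-hand side $(\varepsilon_\theta,\dots,\varepsilon_\theta,1,\dots,1,\mathbf{1},\mathbf{0})$ is integral because $\varepsilon_\theta\in\mathbb{Z}_+$. By the Hoffman--Kruskal theorem, all vertices of $\mathcal{A}(t)$ are integral, and together with the box constraints they are binary.

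Applying this slot by slot gives a binary minimizer $\mathbf{A}[t;\lambda_t]\in\mathcal{A}^*(t;\lambda_t)$ for each $t\in\mathcal{T}_i$. Because Proposition~\ref{prop:opt_p_a} decouples the RB choice across slots once $\lambda_t$ is fixed, these choices can be stacked into a solution that is binary for all $(n,k,t)$. This holds for every $\lambda$, since the argument never used the sign or value of the weights.

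The only delicate point is bookkeeping rather than mathematics. The weights depend on $\lambda_t$, and $\lambda_t$ is itself defined through $a_n[k,t;\lambda]$ via \eqref{eq:equation_for_clip}. I would sidestep this circularity by stating the result pointwise: for any given value of the water level, the LP has fixed weights, and the claim is only about the existence of a binary element of its argmin set. Verifying total unimodularity is routine. If needed, I would give a self-contained alternative: a fractional vertex would have its fractional edges contain a cycle or a path with fractional endpoints, and alternately perturbing $\pm\epsilon$ along it writes the point as a midpoint of two feasible points, a contradiction.
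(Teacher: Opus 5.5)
Your proposal is correct and follows essentially the same route as the paper. The paper also writes the per-slot LP with the stacked constraint matrix $\mathbf{M}=[\mathbf{I}_N\otimes\mathbf{1}_K^{\text{T}};\,\mathbf{1}_N^{\text{T}}\otimes\mathbf{I}_K]$, shows it is totally unimodular via the two-block row partition, and invokes Hoffman--Kruskal using the integrality of $\mathbf{b}$ (since $\varepsilon_\theta\in\mathbb{Z}_+$); the differences are cosmetic: the paper drops the bounds $a_n[k,t]\le 1$ as redundant instead of appending identity rows, and you make explicit that the optimum is attained at a vertex of the nonempty bounded polytope, which the paper leaves implicit.
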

\begin{IEEEproof}
See Appendix \ref{sec:proof_prop_opt_p_a}.
\end{IEEEproof}

As a result, Problem \eqref{eq:opt_a} can
be solved by the b-matching algorithm. While Proposition \ref{prop:Integrality} guarantees the existence of binary optimal solutions, the set $\mathcal{A}^{*}\left(t;\lambda_{t}^{*}\right)$
may contain multiple optimal \ac{rb} allocations at certain critical water levels.

Let $\lambda_{0}$ denote such a \emph{critical point}. We define two limiting binary solutions $\mathbf{A}^{-}\left[t;\lambda_{0}\right]\triangleq\lim_{\lambda\to\lambda_{0}^-}\mathbf{A}\left[t;\lambda\right]$
and $\mathbf{A}^{+}\left[t;\lambda_{0}\right]\triangleq\lim_{\lambda\to\lambda_{0}^+}\mathbf{A}\left[t;\lambda\right]$,
which always exist according to Proposition~\ref{prop:Integrality}. For any $\xi\in\left[0,1\right]$, we construct the convexified \ac{rb} allocation as $\mathbf{A}\left[t;\lambda_{0},\xi\right]=(1-\xi)\mathbf{A}^{-}\left[t;\lambda_{0}\right]+\xi\mathbf{A}^{+}\left[t;\lambda_{0}\right]$. Accordingly, the extended per-slot power and sum rate are defined, respectively, as
\begin{equation*}
P_{\text{ext}}\left(t;\lambda,\xi\right)=\sum_{n\in\mathcal{N},k\in\mathcal{K}}a_{n}\left[k,t;\lambda,\xi\right]\left[\lambda-\iota_{n}\left[k,t\right]\right]^{+},
\end{equation*}
\begin{equation*}
\Phi_{\text{ext}}\left(t;\lambda,\xi\right)=\sum_{n\in\mathcal{N},k\in\mathcal{K}}a_{n}\left[k,t;\lambda,\xi\right]c_{n}\left[k,t;\lambda_{t}\right].
\end{equation*}

\begin{proposition}[Properties of extended functions]\label{prop:property_cc}
Define the extended domain $\mathcal{D}\triangleq\left\{ \left(\lambda,\xi\right):\lambda\in\mathbb{R}_{0+},\xi\in\left[0,1\right]\right\} $.
Then, the following properties hold for $P_{\text{ext}}\left(t;\lambda,\xi\right)$ (similar conditions hold for $\Phi_{\text{ext}}\left(t;\lambda,\xi\right)$).

1. Optimality: $\mathbf{A}\left[t;\lambda,\xi\right]\in\mathcal{A}^{*}\left(t;\lambda\right)$
for all $\xi\in\left[0,1\right]$; and $P_{\text{ext}}\left(t;\lambda,\xi\right)=\left(1-\xi\right)P\left(t;\lambda^{-}\right)+\xi P\left(t;\lambda^{+}\right)$.

2. Continuity: The extended functions are continuous with respect
to $\xi\in\left[0,1\right]$. Furthermore, at any point $\lambda_{0}$
where discontinuity occurs in $\lambda$, the boundaries satisfy 
\begin{equation*}
\lim_{\lambda\to\lambda_{0}^-}P_{\text{ext}}\left(t;\lambda,\xi\right)=P_{\text{ext}}\left(t;\lambda_{0},0\right),
\end{equation*}
\begin{equation*}
\lim_{\lambda\to\lambda_{0}^+}P_{\text{ext}}\left(t;\lambda,\xi\right)=P_{\text{ext}}\left(t;\lambda_{0},1\right).
\end{equation*}

\begin{algorithm}[t!]
\# Input: $\bar{\upsilon}$, $\varepsilon_{\theta}$, $g_{n}[k,t]$
and $\kappa_{n}[k,t]$ for all $n, k$ and $t \in \mathcal{T}_i$
\begin{enumerate}
\item Obtain $\tilde{\lambda}_{t}$ and $\tilde{\xi}_{t}$ by computing
$P_{\text{ext}}\left(t;\lambda,\xi\right)=\bar{p}$ via bi-section
search for all $t$.
\item Obtain $\lambda^{-}$ and $\lambda^{+}$ with $\Phi_{\text{ext}}\left(\lambda_{t}^{-},0\right)\le\bar{v}\le\Phi_{\text{ext}}\left(\lambda_{t}^{+},0\right)$
via bi-section search.
\item Obtain $\xi^{*}$ by computing $\Phi_{\text{ext}}\left(\lambda_{t},\xi_{t}\right)=\bar{v}$
via bi-section search, with $\xi_{t}=\min\{\xi,\tilde{\xi}_{t}\}$.
\end{enumerate}
\# Output: $p_{n}^{*}\left[k,t\right] \gets p_{n}\left[k,t;\lambda_{t}\right]$,
$a_{n}^{*}\left[k,t\right] \gets a_{n}\left[k,t;\lambda_{t}^{-},\xi^{*}\right]$,
and $E^{*}\left(t_{i},t_{i+1}\right) \gets \sum_{t\in\mathcal{T}_{i}}P_{\text{ext}}\left(t;\lambda_{t}^{-},\xi_{t}^{*}\right).$
\caption{\label{alg:opt_power_control}
Efficient power and access control.}
\end{algorithm}

\begin{algorithm}[ht!]
\# Input: $\theta$, $g_{n}[k,t]$ and $\kappa_{n}[k,t]$ for all $n, k, t$
\begin{enumerate}
\item Construct the graph $\mathscr{G}$ based on Fig.~\ref{fig:Sample-timing-graph}. Compute the weights by solving $\mathscr{P}\text{3-2}$ using Algorithm~\ref{alg:opt_power_control}.
\item Find the optimal sampling time sequence $\mathbf{t}^{*}$ using the shortest-path algorithm on the graph $\mathscr{G}$.
\end{enumerate}
\# Output: $\boldsymbol{\mathcal{A}}^*$, $\boldsymbol{\mathcal{P}}^*$, $\mathbf{t}^{*}$, and
$E^{*}(\theta)$
\caption{\label{alg:g_control_alg}
Graph-based sampling timing control.}
\end{algorithm}

3. Monotonicity: $P_{\text{ext}}\left(t;\lambda_{1},\xi_{1}\right)\le P_{\text{ext}}\left(t;\lambda_{2},\xi_{2}\right)$
if $\lambda_{1}\le\lambda_{2}$ for any $\xi_{1}$ and $\xi_{2}$;
and $P_{\text{ext}}\left(t;\lambda,\xi_{1}\right)\le P_{\text{ext}}\left(t;\lambda,\xi_{2}\right)$
if $\xi_{1}\le\xi_{2}$ for any $\lambda$.
\end{proposition}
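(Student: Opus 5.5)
The proof treats the per-slot linear program \eqref{eq:opt_a} as a parametric LP in $\lambda$. The feasible polytope $\mathcal{A}(t)$ does not depend on $\lambda$; only the weights $w_{n}[k,t;\lambda]$ do. Each weight is continuous in $\lambda$. It equals $0$ for $\lambda\le\iota_{n}[k,t]$. For larger $\lambda$ it equals $\lambda-\iota_n-\lambda\ln(\lambda/\iota_n)$, which is non-increasing in $\lambda$, since its derivative is $-\ln(\lambda/\iota_n)\le 0$. The polytope has finitely many vertices, and by Proposition~\ref{prop:Integrality} these can be taken binary. Piecewise analyticity of the weights then gives finitely many critical points on any bounded interval. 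Between critical points the minimizing vertex is locally constant, so the one-sided limits $\mathbf{A}^{\pm}[t;\lambda_0]$ exist and are binary vertices. By continuity of the weights, both are optimal at $\lambda_0$: a limit of optimal vertices is optimal for the limiting weights.

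\textbf{Optimality.} Both $\mathbf{A}^{-}$ and $\mathbf{A}^{+}$ lie in $\mathcal{A}^{*}(t;\lambda_0)$. Since $\mathcal{A}^{*}(t;\lambda_0)$ is the optimal face of an LP, it is convex, and so it contains every convex combination $(1-\xi)\mathbf{A}^{-}+\xi\mathbf{A}^{+}$. The identity for $P_{\text{ext}}$ holds because $P_{\text{ext}}$ is linear in $\mathbf{A}$ for fixed $\lambda$. Continuity of $[\lambda-\iota]^{+}$ gives $P(t;\lambda_0^{\pm})=\sum a^{\pm}[\lambda_0-\iota]^{+}$.

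\textbf{Continuity.} In $\xi$ this is immediate, since $P_{\text{ext}}$ is affine in $\xi$. For the one-sided limits in $\lambda$, take $\lambda$ in a small left neighborhood of $\lambda_0$. There the allocation equals $\mathbf{A}^{-}[t;\lambda_0]$ and does not depend on $\xi$, because away from critical points the convexification is trivial. The powers $[\lambda-\iota]^{+}$ are continuous, so the limit is $\sum a^{-}[\lambda_0-\iota]^{+}=P_{\text{ext}}(t;\lambda_0,0)$. The right limit is analogous and gives $P_{\text{ext}}(t;\lambda_0,1)$.

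\textbf{Monotonicity.} Monotonicity in $\xi$ at a fixed $\lambda$ reduces, by the affine form, to $P(t;\lambda_0^{-})\le P(t;\lambda_0^{+})$. Monotonicity in $\lambda$ across different critical points follows from monotonicity in $\xi$, the continuity statements, and monotonicity on each open interval between critical points. On such an interval the allocation is fixed and each active term $[\lambda-\iota]^{+}$ is non-decreasing, so $P$ is non-decreasing there.

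The main obstacle is therefore the jump inequality $P(t;\lambda_0^{-})\le P(t;\lambda_0^{+})$, that is, showing that the vertex selected just above a critical point uses at least as much power. I would prove it with a standard parametric-LP exchange argument. For $\lambda_1<\lambda_0<\lambda_2$ close to $\lambda_0$, optimality of $\mathbf{A}^{-}$ at $\lambda_1$ and of $\mathbf{A}^{+}$ at $\lambda_2$ gives
\begin{equation*}
\sum (w(\lambda_1)-w(\lambda_2))(a^{+}-a^{-})\ge 0.
\end{equation*}
This says that as $\lambda$ increases, the chosen vertex moves toward entries whose weights decrease fastest. The derivative of $w$ is $-\ln(\lambda/\iota)$, which is most negative for the smallest $\iota$, that is, the best channels. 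I would then need to connect this to the power $[\lambda-\iota]^{+}$, which is largest for small $\iota$. The argument reduces to the claim that $-w$ and $p$ are co-monotone in $\iota$ at a fixed $\lambda$, so that $\mathbf{A}^{+}$ replaces entries of larger $\iota$ by entries of smaller $\iota$, or activates more entries. I would first try to prove this directly using the b-matching structure of $\mathcal{A}(t)$, via alternating-path exchanges. If that fails, I would fall back on the KKT interpretation: $P_{\text{ext}}$ is the derivative of a convex dual function in $\lambda$, so it is monotone, and the same argument applies verbatim to $\Phi_{\text{ext}}$.
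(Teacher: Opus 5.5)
Everything except the jump step is sound, and in places it is more careful than the paper. That covers finitely many critical points with a locally constant binary minimizer, optimality of both limits $\mathbf{A}^{\pm}[t;\lambda_0]$ at $\lambda_0$, convexity of the optimal face, affinity in $\xi$, and monotonicity on each open interval. The gap is the step you yourself flag, $P(t;\lambda_0^{-})\le P(t;\lambda_0^{+})$. You leave it open, and neither of your plans closes it.

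Entrywise co-monotonicity in $\iota$ cannot control $\sum p\,\Delta a$ over an exchange that adds and removes different numbers of entries. At fixed $\lambda$, $p=\lambda(1-2^{-\bar c})$ is concave in the rate. So an alternating path that adds one entry of rate $2$ and drops two of rate $1$ moves toward smaller $\iota$. It is consistent with the first-order form of your exchange inequality ($\sum\bar c\,\Delta a=0$), yet it lowers the power by $\lambda/4$.

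Your fallback swaps the roles of $P$ and $\Phi$. The per-slot value $V(\lambda)=\min_{\mathbf{A}[t]\in\mathcal{A}(t)}\sum_{n,k}w_n[k,t;\lambda]a_n[k,t]$ is concave, with $V'=-\ln 2\,\Phi$ (one-sided at kinks). The dual-derivative argument therefore gives monotonicity of $\Phi$. By contrast, $P=V-\lambda V'$ is not a derivative of $V$, and ``the same argument verbatim'' does not transfer.

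The missing ingredient is the tie at $\lambda_0$, which is what the paper uses. You already showed that both $\mathbf{A}^{\pm}$ are optimal for the weights at $\lambda_0$. Hence $\sum_{n,k}w_n[k,t;\lambda_0]\,(a^{+}_n[k,t]-a^{-}_n[k,t])=0$, and with $w=p-\ln 2\,\lambda\bar c$ this reads
\[
P(t;\lambda_0^{+})-P(t;\lambda_0^{-})=\ln 2\,\lambda_0\bigl(\Phi(t;\lambda_0^{+})-\Phi(t;\lambda_0^{-})\bigr).
\]
This is the identity the paper derives from continuity of the LP value at a switching point.

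Your own exchange inequality then supplies the sign. Divide it by $\lambda_2-\lambda_1>0$ and let $\lambda_1,\lambda_2\to\lambda_0$. Since $w$ is $C^1$ with $-w'=[\ln(\lambda/\iota)]^{+}=\ln 2\,\bar c$, you obtain $\sum_{n,k}\bar c_n[k,t;\lambda_0]\,(a^{+}_n[k,t]-a^{-}_n[k,t])\ge 0$. That is, $\Phi(t;\lambda_0^{-})\le\Phi(t;\lambda_0^{+})$, and hence $P(t;\lambda_0^{-})\le P(t;\lambda_0^{+})$ because $\lambda_0>0$.

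This is the paper's argument: its summed one-sided optimality inequalities, made precise, are exactly this limit. It runs in the opposite order from your plan, proving monotonicity of $\Phi$ first and then of $P$ via the tie. No b-matching exchange structure is needed.
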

\begin{IEEEproof}
See Appendix \ref{sec:proof_prop_opt_p_a}.
\end{IEEEproof}

Proposition~\ref{prop:property_cc} establish three key results:
continuity ensures the existence of a solution to equations \eqref{eq:equation_for_clip}
and \eqref{eq:equation_for_water_level}; monotonicity allows for
efficient bisection-searching algorithmic design; and optimality confirms
that the found solution is optimal.

Building on the results in Propositions
\ref{prop:opt_p_a}--\ref{prop:property_cc}, we propose a power and access control algorithm (Algorithm~\ref{alg:opt_power_control}), which determines the optimal solution to the relaxed Problem $\mathscr{P}4$
by effectively solving the equations $P_{\text{ext}}(t;\lambda,\xi)=\bar{p}$
for all $t$ and $\Phi_{\text{ext}}(\lambda_{t},\xi_{t})\triangleq\sum_{t}\Phi_{\text{ext}}\left(t;\lambda_{t},\xi_{t}\right)=\bar{v}$.
The procedure executes a two-stage iterative search: it first computes
the local caping thresholds $\tilde{\lambda}_{t}$ and combining parameters $\tilde{\xi}_{t}$, followed by the determination of the global water-level
$\lambda^{*}$ and instantaneous $\xi_{t}^{*}$. For ensuring the binary of \ac{rb} allocation, one can simply choose
the policy based on $\lambda_{t}^{+}$ and the energy consumption is $\sum_{t\in\mathcal{T}_{i}}P_{\text{ext}}\left(t;\lambda_{t}^{+},0\right)$.

\subsection{Outer Solution: Sampling Timing Control}
\label{subsec:graph}
This section addresses $\mathscr{P}\text{3-1}$ by transforming it into a shortest path problem on a weighted directed acyclic graph.

\subsubsection{Graph Construction}
We first construct a timing-control graph $\mathscr{G}=\{\mathbf{v},\mathbf{e},\mathbf{w}\}$,
as depicted in Fig.~\ref{fig:Sample-timing-graph}. The vertex set $\mathbf{v}=\{1,\cdots, T+1\}$ represents the discrete time indices of all possible sampling instants. The terminal node $T+1$ is introduced to mark the boundary of the optimization horizon, ensuring the final transmission interval is accounted for. The directed edge set $\mathbf{e}=\{(v_{i},v_{j})\}$
represents valid transmission intervals. An edge exists from node $i$ to node $j$ (where $v_{i},v_{j}\in\mathbf{v}$) if and only if the interval duration satisfies the \ac{aoi} constraint, {\em i.e.}, $1\le t_{i}-t_{j}\le\bar{\tau}$.
The weight set $\mathbf{w}=\{w_{i,j}\}$ assigns a cost to each edge, defined as the minimum energy consumption required for complete data delivery over the interval $[t_{i},t_{i+1})\cap\mathbb{Z}_{+}$. Specifically, $w_{i,j}=E^{*}(t_{i},t_{j})$, which is obtained by solving the inner problem $\mathscr{P}\text{3-2}$.

\begin{figure}[t!]
    \centering
    \scalebox{0.95}{\input{figures/graph_construction}}
    \caption{Construction of the timing-control graph. Each vertex represents a sampling instant, and each directed edge denotes a communication interval. The weight of each feasible edge is obtained by solving the inner optimization problem associated with its endpoints.}
    \label{fig:Sample-timing-graph}
\end{figure}
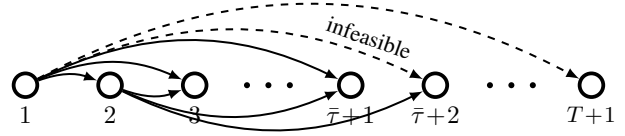

\subsubsection{Graph Equivalence}
By construction, any feasible sampling sequence $\mathbf{t}=\{1=t_{0}<t_{1}<\cdots<t_{I+1}=T+1\}$ maps bijectively to a directed path $1\to t_{1}\to\cdots\to t_{I+1}=T+1$ in $\mathscr{G}$. Each edge $(t_{i},t_{j})$ in $\mathscr{G}$ satisfies the \ac{aoi} constraint $1\le t_{j}-t_{i}\le\bar{\tau}$, as shown in Fig.~\ref{fig:Sample-timing-graph}. Thus,
any path from node $1$ to node $T+1$ represents a sampling sequence that satisfies $t_{0}=1$, $t_{I+1}=T+1$, and $t_{k}\in\mathcal{T}, \forall k$. Conversely, any feasible solution
in $\Upsilon$ constitutes a valid path from node $0$ to node $T+1$ in $\mathscr{G}$, and vice versa.

The weight of each feasible edge $(t_{i},t_{i+1})$ is defined as $w_{i,i+1}=E^{*}(t_{i},t_{i+1})$. The total length of a path is $\sum_{i}w_{i,i+1}$, which
equals the objective function $\sum_{i}E^{*}(t_{i},t_{i+1})$. Hence, $\mathscr{P}\text{3-1}$ is equivalent to a graph shortest-path problem. This equivalence is formalized in the following proposition.

\begin{proposition}[Equivalence of $\mathscr{P}\text{3-1}$]\label{prop:opt_P21}
Finding the optimal sampling sequence $\mathbf{t}^{*}$ for $\mathscr{P}\text{3-1}$
is equivalent to finding the shortest path from node $1$ to node
$T+1$ in $\mathscr{G}$.
\end{proposition}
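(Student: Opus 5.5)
We prove the proposition by exhibiting a cost-preserving bijection between the feasible set $\Upsilon$ of $\mathscr{P}\text{3-1}$ and the set of directed paths from node $1$ to node $T+1$ in $\mathscr{G}$. Since the objective of $\mathscr{P}\text{3-1}$ is additive over consecutive sampling intervals, and a path length is additive over edges, matching feasible points with paths and objective values with path lengths makes the two minimization problems identical.

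\emph{Step 1 (sequences to paths).} Take any $\mathbf{t}\in\Upsilon$ and augment it with the endpoints $t_0=1$ and $t_{I+1}=T+1$. Consecutive indices satisfy $1\le t_{i+1}-t_i\le\bar{\tau}$ by \eqref{eq:aoi_c_s2}. By construction of $\mathbf{e}$, each pair $(t_i,t_{i+1})$ is therefore an edge of $\mathscr{G}$, so $\mathbf{t}$ maps to the path $1\to t_1\to\cdots\to T+1$.

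\emph{Step 2 (paths to sequences).} Every edge $(v_i,v_j)$ satisfies $1\le v_j-v_i\le\bar{\tau}$. Hence the vertex labels along any path from $1$ to $T+1$ are strictly increasing. Thus $\mathscr{G}$ is acyclic, every path is simple, and its interior vertices form a sampling sequence in $\mathcal{T}$ satisfying \eqref{eq:aoi_c_s1}--\eqref{eq:aoi_c_s2}, i.e., an element of $\Upsilon$. The two maps are mutually inverse, because a path is determined by its ordered vertex set.

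\emph{Step 3 (costs agree).} Under this bijection the path length is $\sum_i w_{t_i,t_{i+1}}=\sum_i E^*(t_i,t_{i+1})$, which is exactly the objective of $\mathscr{P}\text{3-1}$. The only care needed here is feasibility of the inner problems. If $\mathscr{P}\text{3-2}$ is infeasible on some interval, we set $E^*=+\infty$, or equivalently delete that edge. Sequences that are infeasible for $\mathscr{P}\text{3-1}$ (by Theorem~\ref{prop:opt_trans}, those with some infeasible inner problem) then correspond exactly to paths of infinite length or non-existent paths. Both optimization problems therefore have the same optimal value, and any minimizer of one maps to a minimizer of the other.

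Finally, all edge weights are nonnegative energies and the graph is a finite DAG, so a shortest path exists whenever any finite-length path exists. It can be found by Dijkstra's algorithm or by dynamic programming in topological order. The main subtlety is bookkeeping rather than depth: the index convention ($t_0=1$ versus a node $0$) and the handling of infeasible edges must be fixed so that the correspondence is genuinely bijective and cost-preserving.
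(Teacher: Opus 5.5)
Your proposal is correct and follows essentially the same route as the paper: a bijection between feasible sampling sequences in $\Upsilon$ and directed paths from node $1$ to node $T+1$, combined with the observation that the path length $\sum_i w_{i,i+1}$ equals the objective $\sum_i E^{*}(t_i,t_{i+1})$ of $\mathscr{P}\text{3-1}$. Your additional bookkeeping makes explicit several points the paper's argument leaves informal: acyclicity via strictly increasing labels, treating infeasible inner problems as $E^{*}=+\infty$ or deleted edges, and fixing the $t_0=1$ convention so that the first interval $[1,t_1)$ is both constrained and counted.
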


\subsubsection{Graph-Based Algorithm}
As a result, Problem $\mathscr{P}\text{3-1}$ is equivalent to a shortest path problem on a weighted directed graph. The graph-based algorithm is summarized in Algorithm~\ref{alg:g_control_alg}. It proceeds in two phases: (i) construct the timing-control graph $\mathscr{G}$ with edge weights given by the optimal interval energy $E^*(\cdot,\cdot)$; and (ii) find the shortest path from node $1$ to node $T+1$. 

\subsection{Optimality and Complexity Analysis}
Overall, Algorithm~\ref{alg:g_control_alg} yields an asymptotically optimal joint timing, power, and access control policy in the high-\ac{snr} or strong-\ac{los} regime, with complexity $\mathcal{O}(\bar{\tau}^{2}NK^{2}T\log(\epsilon^{-1}))$. Hence, the resulting Pareto frontier of $\mathscr{P}1$ is asymptotically optimal and can be obtained with overall complexity $\mathcal{O}(\bar{\tau}^{2} NK^{3}T\log(\epsilon^{-1}))$. The detailed analysis is as follows.

\subsubsection{Optimality} 
The optimality of Algorithm~\ref{alg:opt_power_control} is guaranteed because its output satisfies the \ac{kkt} conditions established in Proposition~\ref{prop:opt_p_a}. The only approximation gap arises from the lower bound of the expected sum rate. The asymptotic tightness of this bound is established in Lemma~\ref{lem:c_lb}, implying that Algorithm~\ref{alg:opt_power_control} achieves asymptotic optimality in the high-\ac{snr} or strong-\ac{los} regime. Moreover, since the graph search for timing control is globally optimal by Proposition~\ref{prop:opt_P21}, Algorithm~\ref{alg:g_control_alg} also achieves asymptotic optimality in the high-\ac{snr} or strong-\ac{los} regime.

\subsubsection{Complexity} The computational complexity of Algorithm~\ref{alg:opt_power_control} is dominated by the weighted bipartite matching subroutine (successive shortest path~\cite{AhuMagOrl:B93}) executed within the bisection search over $T$ time slots. With the matching complexity of $\mathcal{O}\left(NK^{2}+\left(NK+K^{2}\right)\log\left(N+K\right)\right)\approx\mathcal{O}\left(NK^{2}\right)$ and a bisection accuracy of $\mathcal{O}\left(\log\left(\epsilon^{-1}\right)\right)$, the total complexity of Algorithm~\ref{alg:opt_power_control} is $\mathcal{O}\left(NK^{2}T\log\left(\epsilon^{-1}\right)\right)$.

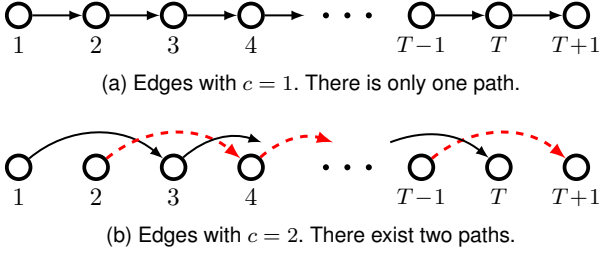
\begin{figure}[t!]
    \centering
    \subfloat[Edges with $c = 1$. There is only one path.]{
        \scalebox{0.95}{\input{figures/edge1}}
        \label{fig:edge1}
    }
    \\
    \subfloat[Edges with $c=2$. There exist two paths.]{
        \scalebox{0.95}{\input{figures/edge2}}
        \label{fig:edge2}
    }
    \caption{Illustration of edge division based on communication interval $c$.}
    \label{fig:edge_illustration}
\end{figure}

Then we analyze the complexity of Algorithm~\ref{alg:g_control_alg}. To analyze the complexity of constructing the graph $\mathscr{G}$, we partition the edges into $\bar{\tau}$ groups indexed by $c\in{1,\cdots,\bar{\tau}}$. In group $c$, every edge $(v_{i},v_{j})$ induces a transmission interval of length $c$; that is, $v_{j}-v_{i}=c$, as depicted in Fig.~\ref{fig:edge_illustration}. Since any feasible edge in $\mathscr{G}$ must satisfy $1 \le t_{i}-t_{j}\le\bar{\tau}$,
all these $\bar{\tau}$ groups collectively constitutes the
complete edge set of $\mathscr{G}$.

For each group $c$, there are $c$ paths from near $1$ to near $T+1$, and the cost for calculating the weight of edges in this group is bounded by $cNK^{2}T\log(\epsilon^{-1})$. For example, when $c=1$, there is only one path $1\to2\to3\to\cdots\to T+1$. Therefore, the complexity is 
\begin{equation*}
\sum_{i=1}^{T}NK^{2}\log\left(\epsilon^{-1}\right)\left(t_{i+1}-t_{i}\right)=NK^{2}T\log\left(\epsilon^{-1}\right).
\end{equation*}
When $c=2$, there are only two possible paths $1\to3\to5\to\cdots\to T$ and $2\to4\to6\to\cdots\to T+1$, and the communication interval is $2$. Therefore, the complexity is 
\begin{equation*}
2\sum_{i=1}^{\lfloor T/2\rfloor}NK^{2}\log\left(\epsilon^{-1}\right)\left(t_{i+1}-t_{i}\right)\le2NK^{2}T\log\left(\epsilon^{-1}\right).
\end{equation*}
Following the same reasoning, the complexity for groups with
$c\in\{3,\cdots,\bar{\tau}\}$ can be derived analogously.

Hence, the complexity of graph construction is bounded by
\begin{equation*}
\sum_{c=1}^{\bar{\tau}}cNK^{2}T\log\left(\epsilon^{-1}\right)=\frac{\left(\bar{\tau}+1\right)\bar{\tau}}{2}NK^{2}T\log\left(\epsilon^{-1}\right)
\end{equation*}
which is the order of $\mathcal{O}\left(\bar{\tau}^{2}NK^{2}T\log\left(\epsilon^{-1}\right)\right).$

Finally, recall that the Pareto frontier is computed over $\theta$ values. The overall complexity is $\mathcal{O}(\bar{\tau}^{2}NK^{3}T\log(\epsilon^{-1}))$. 

\section{Extension to Problem Variants}\label{sec:variants}
Problem $\mathscr{P}1$ is adopted as a canonical formulation. In practice, one may encounter a broad class of variants, such that objective transformations and single-objective specializations. We next discuss three representative variants and specify conditions under which our approach remains applicable.

\subsection{Objective Transformation}
To represent heterogeneous requirements, we introduce a generic objective vector 
\begin{equation*}
\mathbf{g}\left(\theta,E\right)=\left(g_{1}\left(\theta\right),g_{2}\left(E\right)\right)^{\text{T}}:\mathbb{R}^{2}\to\mathbb{R}^{2},
\end{equation*}
where $g_{1}(\theta):\mathbb{R}\to\mathbb{R}$ penalizes the spatiotemporal
load cap $\theta$, and $g_{2}(E):\mathbb{R}\to\mathbb{R}$ penalizes
the \ac{uav} energy consumption $E$. The resulting nonlinear transformed bi-objective problem is formulated as
\begin{equation*}
\mathscr{P}_1^\mathrm{v}:\min_{\boldsymbol{\pi}\in\Pi}\ \left\{ g_{1}\left(\theta\left(\boldsymbol{\pi}\right)\right),g_{2}\left(E\left(\boldsymbol{\pi}\right)\right)\right\} ,\text{s.t. }\tau\left[t\right]\le\bar{\tau},\forall t\in\mathcal{T}.
\end{equation*}

Some practical choices of transformation functions include (i) linear models for constant marginal costs \cite{GatRibPap:J14}, (i) barrier-type metrics for delay-sensitive congestion \cite{Ros:B14}, and (iii) non-convex logistic forms representing saturation or risk thresholds
~\cite{LeeKwo:J09,VaiSStaKou:C25}. 

Let $\mathcal{C}$ denote the Pareto frontier of the original Problem $\mathscr{P}1$, and let $\mathcal{C}_{g}$ denote the Pareto frontier of the transformed Problem $\mathscr{P}_1^\mathrm{v}$. The following theorem establishes that, under the mild assumption that $\mathbf{g}(\cdot)=(g_{1}(\cdot),g_{2}(\cdot))$ is component-wise strictly increasing, the Pareto frontier of $\mathscr{P}_1^\mathrm{v}$ can be readily obtained without resolving the optimization problem. 

\begin{theorem}[Order-preserving equivalence]\label{thm:Frontier_transformation}
If $\mathbf{g}(\cdot)=(g_{1}(\cdot),g_{2}(\cdot))$ is component-wise strictly increasing over the feasible ranges of
$\theta$ and $E$, then $\mathscr{P}_1^\mathrm{v}$ and $\mathscr{P}_1$ share the same set of Pareto-optimal policies, and
\begin{equation*}
\mathcal{C}_{g}=\mathbf{g}\left(\mathcal{C}\right)\triangleq\left\{ \left(g_{1}\left(\theta\right),g_{2}\left(E\right)\right):\left(\theta,E\right)\in\mathcal{C}\right\}.
\end{equation*}
\end{theorem}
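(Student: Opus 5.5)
The plan is to show that a strictly increasing, component-wise map preserves and reflects the Pareto dominance relation. Pareto optimality then transfers between $\mathscr{P}_1$ and $\mathscr{P}_1^\mathrm{v}$ policy by policy, and the frontier identity follows by taking images. Both problems have the same feasible set, namely all $\boldsymbol{\pi}\in\Pi$ satisfying $\tau[t]\le\bar{\tau}$ for all $t$. So the only difference lies in how objective vectors are compared, and no re-optimization is needed.

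\textbf{Step 1 (order equivalence).} For a strictly increasing scalar $g_i$ on the relevant range, I would check that $x\le y \iff g_i(x)\le g_i(y)$ and $x<y\iff g_i(x)<g_i(y)$. The forward directions are monotonicity. For the reverse direction, if $x>y$ then $g_i(x)>g_i(y)$, which contradicts $g_i(x)\le g_i(y)$. Equivalently, $g_i$ is injective and order-embedding on its domain. Applying this to $g_1$ on the feasible values of $\theta$, which lie in $\mathbb{Z}_+\cap[0,K]$, and to $g_2$ on the feasible values of $E\in\mathbb{R}_+$, I get the following for any two feasible policies $\boldsymbol{\pi},\boldsymbol{\pi}'$. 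The policy $\boldsymbol{\pi}'$ dominates $\boldsymbol{\pi}$ in $(\theta,E)$, meaning both components are $\le$ and at least one is strict, if and only if $\boldsymbol{\pi}'$ dominates $\boldsymbol{\pi}$ in $(g_1\circ\theta,g_2\circ E)$.

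\textbf{Step 2 (same Pareto-optimal set).} By Definition~\ref{def:patero}, $\boldsymbol{\pi}^*$ is Pareto-optimal for $\mathscr{P}_1$ exactly when no feasible $\boldsymbol{\pi}'$ dominates it in $(\theta,E)$. By Step 1 this holds exactly when no feasible $\boldsymbol{\pi}'$ dominates it in the transformed objectives, which is Pareto optimality for $\mathscr{P}_1^\mathrm{v}$. Since the feasible sets coincide, the two Pareto-optimal sets are equal.

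\textbf{Step 3 (frontiers).} The frontier $\mathcal{C}_g$ is the set of images $(g_1(\theta(\boldsymbol{\pi}^*)),g_2(E(\boldsymbol{\pi}^*)))$ over Pareto-optimal $\boldsymbol{\pi}^*$ of $\mathscr{P}_1^\mathrm{v}$. By Step 2 these are the same policies whose $(\theta,E)$-images form $\mathcal{C}$, as characterized in Proposition~\ref{prop:pareto_front}. Hence $\mathcal{C}_g=\mathbf{g}(\mathcal{C})$, with both inclusions immediate. Injectivity of $\mathbf{g}$ moreover makes this a bijection between the two frontiers.

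\textbf{Main obstacle.} The only delicate point is Step 1. Domination requires one \emph{strict} inequality, so I need strictness to be both preserved and reflected. Mere monotonicity, meaning non-decreasing, would fail here, because a flat region of $g_2$ could turn a strict improvement in $E$ into a tie. Strictness on the feasible ranges is exactly what rules this out. I would also note explicitly that the assumption is needed only on the attainable values of $\theta$ and $E$, since every comparison in the argument involves objective values of feasible policies.
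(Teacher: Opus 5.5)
Your proposal is correct and follows essentially the same route as the paper. The paper also shows that componentwise strict monotonicity of $\mathbf{g}$ preserves and reflects Pareto dominance, concludes that the Pareto-optimal policy sets of $\mathscr{P}_1$ and $\mathscr{P}_1^\mathrm{v}$ coincide, and then takes images to get $\mathcal{C}_g=\mathbf{g}(\mathcal{C})$. Your explicit check that strict inequalities are reflected as well as preserved is exactly what the paper summarizes by calling $\mathbf{g}$ an order isomorphism, for which it cites an external reference.
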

\begin{IEEEproof}
See Appendix~\ref{sec:proof:thm:coincide}.
\end{IEEEproof}

\begin{figure*}
\begin{centering}
\includegraphics[width=\textwidth]{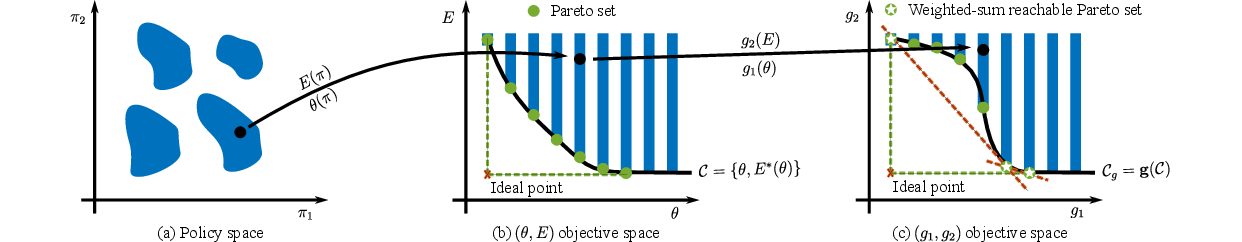}
\par\end{centering}
\caption{\label{fig:Pareto_illu}Illustration of Pareto optimality for a two-variable,
two-objective optimization problem example. (a) The nonconvex policy
space. (b) Bi-objective $(E,\theta)$ space, where the feasible space
and the objective space are non-convex and non-continuous, thereby
classical heuristic algorithms are challenging to find the complete
Pareto frontier. (c) Bi-objective $(g_{2},g_{1})$ space, where the
Pareto frontier is non-concave, thereby, the weighting method cannot
guarantee the discovery of all Pareto optima. Conversely, one can
map a point on the transformed Pareto frontier back to the original
frontier and then to the policy space, thereby obtaining an optimal
policy for the corresponding variant problems.}
\end{figure*}

\subsection{Preference-based Scalarization\label{subsec:Preference-based-scalarization}}
For preset user preferences, the bi-objective Pareto optimization reduces to a single-objective problem. Let $u:\mathbb{R}^{2}\to\mathbb{R}$ be a nonlinear increasing function that aggregates the objective vector $(\theta,E)$ into a single measure. The resulting scalarized optimization problem is formulated as
\begin{align*}
\mathscr{P}^\mathrm{v}_2:\underset{\boldsymbol{\pi}\in\Pi}{\text{min}} & \ u\left(\mathbf{g}\left(\theta\left(\boldsymbol{\pi}\right),E\left(\boldsymbol{\pi}\right)\right)\right),\text{s.t. }\tau\left[t\right]\le\bar{\tau},\forall t\in\mathcal{T}.
\end{align*}

The relationship between the optimal solution of this scalarized problem and the Pareto frontier of the original multi-objective problem is established in the following.

\begin{theorem}[{Sufficiency of monotone utility~\cite[Theorem 2.6.2]{Kai:B98}}]\label{thm:value_function}
If the value function $u\left(\cdot\right)$ is strongly increasing,
\emph{i.e.}, for any $\mathbf{x},\mathbf{y}\in\mathbb{R}^{2}$ with
$\mathbf{x}\preceq\mathbf{y}$, it holds that $u\left(\mathbf{x}\right)<u\left(\mathbf{y}\right)$,
then any solution to $\mathscr{P}^\mathrm{v}_2$ lies in the
Pareto frontier of $\mathscr{P}1$.
\end{theorem}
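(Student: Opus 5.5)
The plan is a direct proof by contradiction, using the strong monotonicity of $u$ together with Theorem~\ref{thm:Frontier_transformation} to move between the transformed and original objective spaces. Throughout, $\mathbf{x}\preceq\mathbf{y}$ is read in the Pareto sense: componentwise $\le$ with at least one strict inequality, which is the reading under which ``strongly increasing'' is meaningful.

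Let $\boldsymbol{\pi}^{\circ}$ be an optimal solution of $\mathscr{P}^\mathrm{v}_2$, which is feasible for the peak \ac{aoi} constraint. Suppose, for contradiction, that $\boldsymbol{\pi}^{\circ}$ is not Pareto-optimal for $\mathscr{P}_1^\mathrm{v}$. By Definition~\ref{def:patero} applied to the transformed objectives, there is then a feasible $\boldsymbol{\pi}'$ with $\mathbf{g}(\theta(\boldsymbol{\pi}'),E(\boldsymbol{\pi}'))\preceq\mathbf{g}(\theta(\boldsymbol{\pi}^{\circ}),E(\boldsymbol{\pi}^{\circ}))$, meaning both components are $\le$ and at least one is strict. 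Strong monotonicity of $u$ gives $u(\mathbf{g}(\cdot(\boldsymbol{\pi}')))<u(\mathbf{g}(\cdot(\boldsymbol{\pi}^{\circ})))$. Since $\boldsymbol{\pi}'$ is feasible for $\mathscr{P}^\mathrm{v}_2$, this contradicts the optimality of $\boldsymbol{\pi}^{\circ}$. Hence $\boldsymbol{\pi}^{\circ}$ is Pareto-optimal for $\mathscr{P}_1^\mathrm{v}$.

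It remains to transfer this conclusion to $\mathscr{P}1$. Assuming $\mathbf{g}$ is componentwise strictly increasing, Theorem~\ref{thm:Frontier_transformation} states that $\mathscr{P}_1^\mathrm{v}$ and $\mathscr{P}1$ share the same Pareto-optimal policies. So $\boldsymbol{\pi}^{\circ}$ is Pareto-optimal for $\mathscr{P}1$, and its image $(\theta(\boldsymbol{\pi}^{\circ}),E(\boldsymbol{\pi}^{\circ}))$ lies in $\mathcal{C}$; by Proposition~\ref{prop:pareto_front}, this point has the form $(\varepsilon_\theta,E^*(\varepsilon_\theta))$. In the special case $\mathbf{g}=\mathrm{id}$, the contradiction argument applies directly to $(\theta,E)$ and no transfer step is needed.

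The main obstacle is a hypothesis gap rather than a hard step. The statement does not explicitly require $\mathbf{g}$ to be strictly increasing, yet the transfer step needs it. I would make that assumption explicit, inherited from the setting of Theorem~\ref{thm:Frontier_transformation}. Alternatively, I would interpret $u$ as strongly increasing in $(\theta,E)$ through the composition $u\circ\mathbf{g}$, which only needs $u\circ\mathbf{g}$ to be strongly increasing and lets the contradiction argument run directly in the original objective space.
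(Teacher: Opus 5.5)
Your proof is correct and is essentially the standard argument behind the result the paper cites without proof (\cite[Theorem 2.6.2]{Kai:B98}): a feasible policy dominating an optimizer of $\mathscr{P}^\mathrm{v}_2$ would strictly decrease the strongly increasing utility, which is a contradiction. Your extra transfer step is also right and worth keeping: $\mathscr{P}^\mathrm{v}_2$ minimizes $u\circ\mathbf{g}$ rather than $u$, so the conclusion about $\mathscr{P}1$ needs $\mathbf{g}$ to be componentwise strictly increasing (as assumed for Theorem~\ref{thm:Frontier_transformation}), a hypothesis the paper leaves implicit and without which the statement can fail, and passing through Theorem~\ref{thm:Frontier_transformation}, or using directly that such a $\mathbf{g}$ preserves dominance, closes this gap.
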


A canonical example satisfying this monotonicity condition is the weighted $L_{p}$ metric, defined as 
\begin{equation*}
u=\left(\alpha\left|\theta-\theta^{\prime}\right|^{p}+\left(1-\alpha\right)\left|E-E^{\prime}\right|^{p}\right)^{1/p},\alpha\in\left[0,1\right],p\ge1.
\end{equation*}
Here, the weight $\alpha$ explicitly quantifies the trade-off preference between the load penalty and energy consumption, and $\left(\theta^{\prime},E^{\prime}\right)$
is a special target. The metric becomes a weighted-sum metric when $p=1$ and $\theta^{\prime}=E^{\prime}=0$.

\subsection{Budget-Constrained Optimization}

This formulation focuses on a single primary objective while treating the secondary objective as a constraint (referred to as the $\varepsilon$-constraint method). When the energy consumption is the primary concern, the goal is to minimize the energy cost $E$ or $g_{2}\left(E\right)$ subject to an admissible upper bound $\varepsilon_{\theta}$ on the spatiotemporal load
\begin{align*}
\mathscr{P}_\mathrm{v}^3:\underset{\boldsymbol{\pi}\in\Pi}{\text{min}} & \ g_{2}\left(E\left(\boldsymbol{\pi}\right)\right),\text{s.t. }\tau\left[t\right]\le\bar{\tau},\forall t\text{ and }g_{1}\left(\theta\left(\boldsymbol{\pi}\right)\right)\le\varepsilon_{\theta}.
\end{align*}

Similarly, one can formulate a load-cap minimization problem with an energy constraint.

\begin{theorem}[{Sufficiency of partial optimization~\cite[Theorem 2.10.3]{Kai:B98}}]
\label{thm:partial_optimization}
Any optimal solution to $\mathscr{P}_\mathrm{v}^3$ (where the constraint
is active) is Pareto optimal for the original problem $\mathscr{P}1$.
\end{theorem}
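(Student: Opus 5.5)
The plan is to argue by contradiction, using only the definition of Pareto optimality (Definition~\ref{def:patero}) and the structure of $\mathscr{P}_\mathrm{v}^3$. Fix $\varepsilon_{\theta}$ and let $\boldsymbol{\pi}^{*}$ be optimal for $\mathscr{P}_\mathrm{v}^3$ with the load constraint active, i.e., $g_{1}(\theta(\boldsymbol{\pi}^{*}))=\varepsilon_{\theta}$. As in Theorem~\ref{thm:Frontier_transformation}, assume $g_{1}$ and $g_{2}$ are strictly increasing over the feasible ranges. Suppose some feasible $\boldsymbol{\pi}'$ (satisfying the peak \ac{aoi} constraint and $\boldsymbol{\pi}'\in\Pi$) dominates $\boldsymbol{\pi}^{*}$ in $\mathscr{P}1$. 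Then $\theta(\boldsymbol{\pi}')\le\theta(\boldsymbol{\pi}^{*})$ and $E(\boldsymbol{\pi}')\le E(\boldsymbol{\pi}^{*})$, with at least one inequality strict.

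The first step is to check that $\boldsymbol{\pi}'$ is feasible for $\mathscr{P}_\mathrm{v}^3$. Monotonicity of $g_{1}$ gives $g_{1}(\theta(\boldsymbol{\pi}'))\le g_{1}(\theta(\boldsymbol{\pi}^{*}))=\varepsilon_{\theta}$, and the \ac{aoi} constraint is common to both problems. Optimality of $\boldsymbol{\pi}^{*}$ then forces $g_{2}(E(\boldsymbol{\pi}'))\ge g_{2}(E(\boldsymbol{\pi}^{*}))$. Since $g_{2}$ is strictly increasing, this gives $E(\boldsymbol{\pi}')\ge E(\boldsymbol{\pi}^{*})$, hence $E(\boldsymbol{\pi}')=E(\boldsymbol{\pi}^{*})$. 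The strict inequality in the domination must therefore be $\theta(\boldsymbol{\pi}')<\theta(\boldsymbol{\pi}^{*})$.

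The second step, and the main obstacle, is to rule out this remaining case. Here the word "active" has to be read carefully. Activity alone, $g_1(\theta(\boldsymbol{\pi}^*))=\varepsilon_\theta$, does not prevent a tie in $E$ at a smaller $\theta$. So I would interpret "optimal with active constraint" as in~\cite[Theorem 2.10.3]{Kai:B98}: either the solution is unique, or, among optimal solutions, it attains the smallest $\theta$ (lexicographic selection), so that the constraint cannot be relaxed without loss.

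Under uniqueness, $\boldsymbol{\pi}'$ would be a second optimal solution, since it is feasible and achieves the same $g_2(E)$, which is a contradiction. Under lexicographic selection, $\boldsymbol{\pi}'$ is optimal with strictly smaller $\theta$, again a contradiction. Alternatively, I would invoke the structure of Proposition~\ref{prop:pareto_front}: there, $E^{*}(\cdot)$ is nonincreasing in $\varepsilon_\theta$ and frontier points have $\theta=\varepsilon_\theta$ with $E^*(\varepsilon_\theta-1)>E^*(\varepsilon_\theta)$. A tie $E(\boldsymbol{\pi}')=E^*(\varepsilon_\theta)$ at $\theta(\boldsymbol{\pi}')\le\varepsilon_\theta-1$ would contradict that strict decrease, which is what "active" encodes.

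Either way no dominating $\boldsymbol{\pi}'$ exists, so $\boldsymbol{\pi}^{*}$ is Pareto optimal for $\mathscr{P}1$. By Theorem~\ref{thm:Frontier_transformation}, the same conclusion holds for the transformed problem. The symmetric load-minimization variant follows by swapping the roles of $\theta$ and $E$.
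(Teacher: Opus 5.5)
Your proof is correct under the strengthened reading you adopt, and that strengthening is unavoidable. The paper gives no argument of its own, only the citation \cite[Theorem 2.10.3]{Kai:B98}, and the statement read literally is false in this model. Setting $a_n[k,t]=1$ with $p_n[k,t]=0$ on an idle RB leaves $E$ unchanged. It also leaves the delivered rate unchanged, and hence the AoI constraint, while it can raise $\theta$ up to the cap. So, whenever idle RBs exist, any optimal policy with slack in the load constraint yields a second optimal policy whose constraint is active and which the first policy dominates.

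Your Step~1 is the standard $\varepsilon$-constraint argument. On its own it gives only weak Pareto optimality, and it genuinely needs $g_1$ nondecreasing and $g_2$ strictly increasing, which the statement leaves implicit; you were right to import that assumption. Of your closures for the remaining case, the lexicographic one is clean as written. The uniqueness closure is the classical sufficient condition, but here it must be read in objective space, i.e., all optimal policies share one $(\theta,E)$. Dummy allocations exist, and powers on unallocated RBs do not enter $E$, so uniqueness in the decision space essentially never holds. Read that way, the branch would be vacuous.

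For the third closure, state $E^*(\varepsilon_\theta-1)>E^*(\varepsilon_\theta)$ as an explicit hypothesis, with $\varepsilon_\theta$ read on the $\theta$-scale via $g_1$, rather than deriving it from Proposition~\ref{prop:pareto_front}. First, activity does not encode it: dummy allocations make the constraint active even above $\overline{\theta}$. Second, the strict decrease in Lemma~\ref{lem:monotonic_E_theta_strictly} rests on a saturation step ("no improvement under relaxation implies the unconstrained minimum is reached") that is not justified for this mixed-integer problem. Since $E^*(\cdot)$ is a minimum over sampling paths of sums of interval costs, nothing in that proof excludes a plateau strictly below $\overline{\theta}$.

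With the hypothesis stated explicitly, your chain $E(\boldsymbol{\pi}')\ge E^*(\theta(\boldsymbol{\pi}'))\ge E^*(\varepsilon_\theta-1)>E^*(\varepsilon_\theta)=E(\boldsymbol{\pi}')$ is airtight. It is also the most useful of the three versions, because it pins down exactly which budgets return Pareto-optimal policies.
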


The established theorems provide a transfer rule for system optimization, as illustrated in Fig.~\ref{fig:Pareto_illu}. Once the Pareto frontier $\mathcal{C}$ of $\mathscr{P}1$ is computed, the solutions for the transformed-objective problem $\mathscr{P}_\mathrm{v}^1$, the preference-based single-objective formulation $\mathscr{P}_\mathrm{v}^2$, and the budget-constrained single-objective formulation $\mathscr{P}_\mathrm{v}^3$ can be obtained through direct mapping or by searching over the frontier $\mathcal{C}$. In other words, \emph{we can recover the solutions to these variants without re-solving the problems.} This offers two main benefits. First, we bypass the analytical and computational difficulties of directly solving the variants. When the variants are non-convex or intractable, these transfer rules enable efficient recovery of the optimal solutions. Second, it shows strong potential for emergency response and policy adaptation in safety-critical applications.


\section{Simulation Results\label{sec:Simulation}}
We consider a \ac{uav}-based patrol system, where a \ac{uav} follows a circular trajectory to monitor a $200\times200\,m^{2}$ area. The \ac{uav} operates at an altitude of $50$ m with a flight speed of 6 m/s. On the ground, $N$ \acpl{bs} are deployed, with their positions randomly generated.

The channel gain is realized by $h_{n}\left[k,t\right]=g_{n}\left[k,t\right]\xi_{n}\left[k,t\right]$ according to \eqref{eq:channel_model}. Specifically, the shape parameters $\kappa_{n}\left[k,t\right]$ of Gamma distribution of small-scale fading $\xi_{n}\left[k,t\right]$ are set randomly in $[1,30]$. Same as \cite{LiChe:J24b}, the large-scale fading $g_{n}\left[k,t\right]$ includes path loss and shadowing, where the path loss is generated
by 3GPP \ac{umi} model \cite{TR36814} and the channel block state is generated by \ac{los} probability model \cite{MozSadBen:J17}, while the shadowing is modeled by a log-normal distribution, with zero mean and a variance of 8, and a correlation distance of 5 m. The default implementation parameters are listed in Table \ref{tab:Implementation settings}.

\begin{table}[t]
\caption{\label{tab:Implementation settings}Default Implementation Parameters}

\centering{}%
\begin{tabular}{>{\raggedright}p{0.3\columnwidth}|>{\raggedright}p{0.6\linewidth}}
\hline 
\textbf{Parameter} & \textbf{Description}\tabularnewline
\hline 
\rowcolor{lightcyan}Patrol \ac{uav} trajectory & Circular trajectory at 50\,m altitude.\tabularnewline
\ac{uav} speed & Uniform in $6$ m/s.\tabularnewline
\rowcolor{lightcyan}Base station location & Uniformly distributed at ground level (0\,m).\tabularnewline
Carrier frequency & $f_{\text{c}}=3\text{ GHz}$.\tabularnewline
\rowcolor{lightcyan}Bandwidth & $B=10\text{ MHz}$.\tabularnewline
Noise power & $\sigma^{2}=-90\text{ dBm}$.\tabularnewline
\rowcolor{lightcyan}Path loss (LOS) & $22.0+28.0\log_{10}(d)+20\log_{10}(f_{\text{c}})$.\tabularnewline
Path loss (NLOS) & $22.7+36.7\log_{10}(d)+26\log_{10}(f_{\text{c}})$.\tabularnewline
\rowcolor{lightcyan}LOS probability & $\mathbb{P}(\text{LOS},\theta)=(1+6\times\text{exp}(-0.15[\theta-6])^{-1}$,
where $\theta$ is elevation angle.\tabularnewline
Shadowing & Log-normal distribution with 0\,dB mean, 8\,dB variance, and 5\,m
correlation distance.\tabularnewline
\hline 
\end{tabular}
\end{table}

\subsection{Performance on Power and Access Control}
Fig.~\ref{fig:algorithm_1_analysis} illustrates the performance
of the power and access control algorithm in Algorithm \ref{alg:opt_power_control}
based on 100 repeated random simulations, including computational complexity, optimality, and feasibility. As a benchmark, CVX \cite{cvx} is used to solve the relaxed problem $\mathscr{P}4$; hence, its objective value serves as a lower bound.

Fig.~\ref{fig:Complexity_analysis} compares the computational complexity. For practical evaluation, only cases with running time below 100~s are reported for the CVX-based baseline. The proposed algorithm exhibits an empirical slope of approximately 1.3, indicating a runtime scaling of about $\mathcal{O}\left(K^{1.3}\right)$ over
the tested regime, which is below the theoretical upper bound $\mathcal{O}\left(K^{2}\right)$. Across the tested range, the proposed algorithm is more than 200 times faster than the CVX-based baseline and the speedup increases with the number of \acpl{rb}, highlighting the suitability of the proposed method for practical implementation.

Fig.~\ref{fig:Optimality_analysis} shows that the curves produced by the proposed algorithm and the CVX-based baseline overlap across all tested settings, indicating that the proposed solution closely matches the lower bound and is thus near-optimal. Moreover, the proposed algorithm always returns binary \ac{rb} allocations and therefore
remains feasible for all cases, as shown in Fig.~\ref{fig:Feasibility_analysis}. In contrast, the relaxed CVX solution increasingly violates the binary constraints as the number of \acpl{rb} grows, with the violation ratio reaching 50\% when $K=150$.

\subsection{Performance on Sampling Timing Control}
We compare the age-aware sampling scheme with the following three baselines (all use the proposed optimal power and access control).
\begin{itemize}
\item \emph{Instantaneous rate \cite{MatShe:J25}:} Trade-off spatiotemporal
load cap and energy efficiency under the piecewise rate constraint, \emph{i.e.}, $c_{m}(t)\ge S/\bar{\tau}$, $\forall t$.
\item \emph{Average rate \cite{ZhaLiRonZen:J25}:} Trade-off spatiotemporal
load cap and energy efficiency under the average rate constraint, \emph{i.e.}, $\sum_{t}c_{m}(t)/T\ge S/\bar{\tau}$.
\item \emph{Periodic sampling:} The sampling time is fixed as $t_{k}=(k-1)\bar{\tau}$,
while the resource allocation strategy follows the proposed schemes.
\end{itemize}

\begin{figure*}[t]
    \centering
    \subfloat[Complexity analysis.\label{fig:Complexity_analysis}]{
        \includegraphics[width=0.33\textwidth]{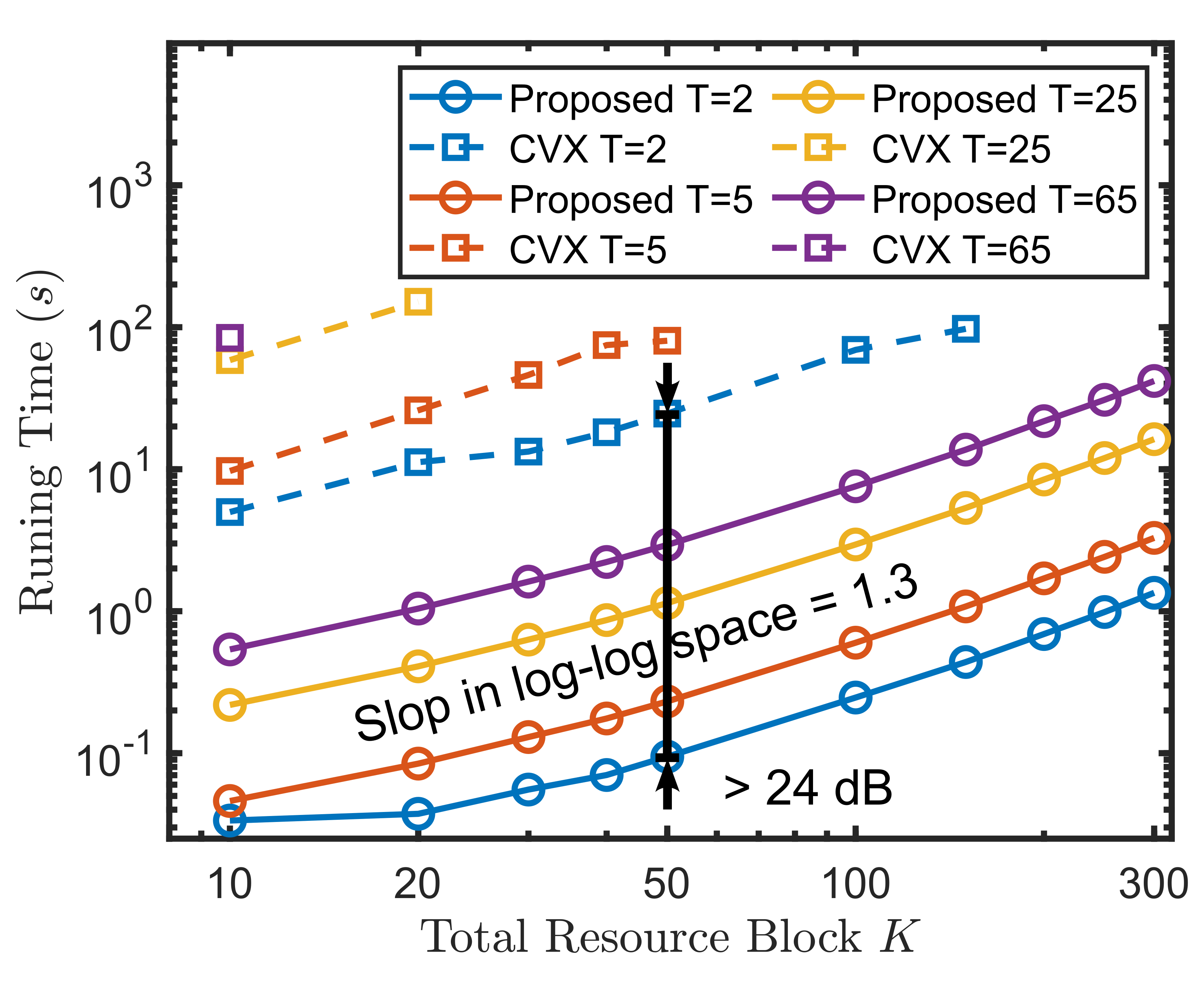}
    }
    \subfloat[Optimality analysis.\label{fig:Optimality_analysis}]{
        \includegraphics[width=0.33\textwidth]{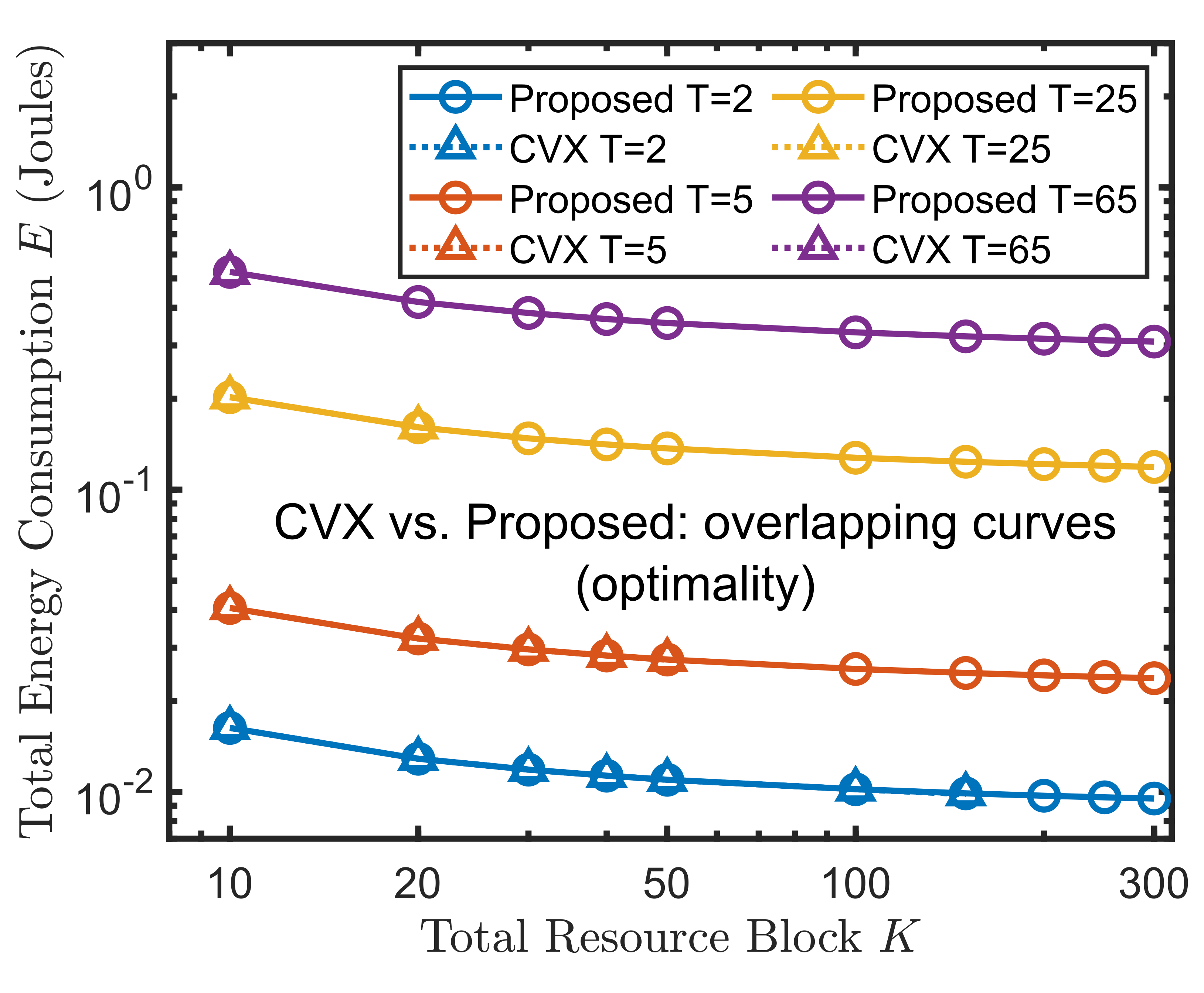}
    }
    \subfloat[Feasibility analysis.\label{fig:Feasibility_analysis}]{
        \includegraphics[width=0.33\textwidth]{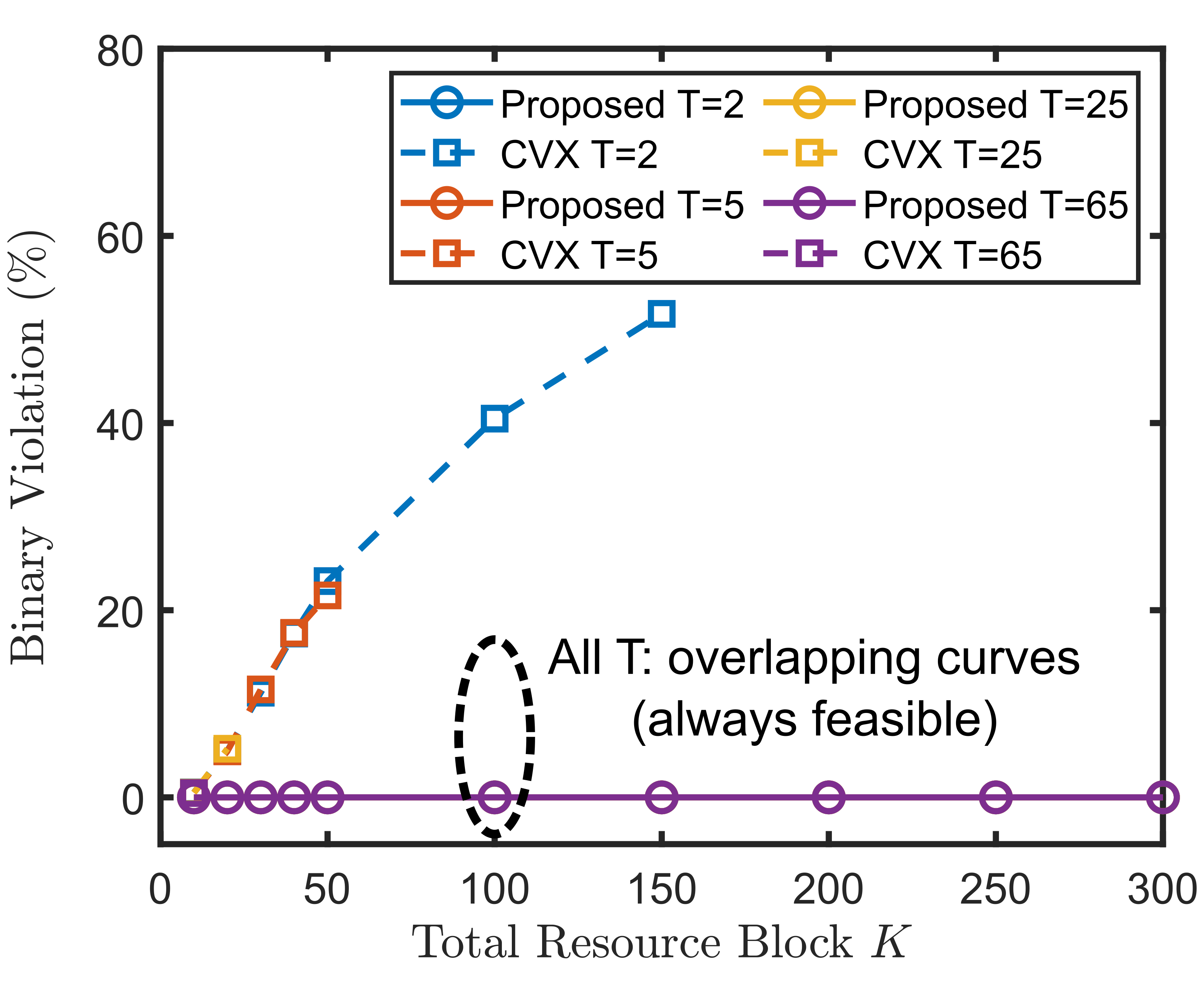}
    }
    \caption{Complexity, feasibility, and optimality of Algorithm \ref{alg:opt_power_control} versus $K$ and $T$ for $N=5$.}
    \label{fig:algorithm_1_analysis}
\end{figure*}

\begin{figure*}[t]
    \centering
    \subfloat[Pareto frontier.\label{fig:pareto_opt}]{
        \includegraphics[width=0.33\textwidth]{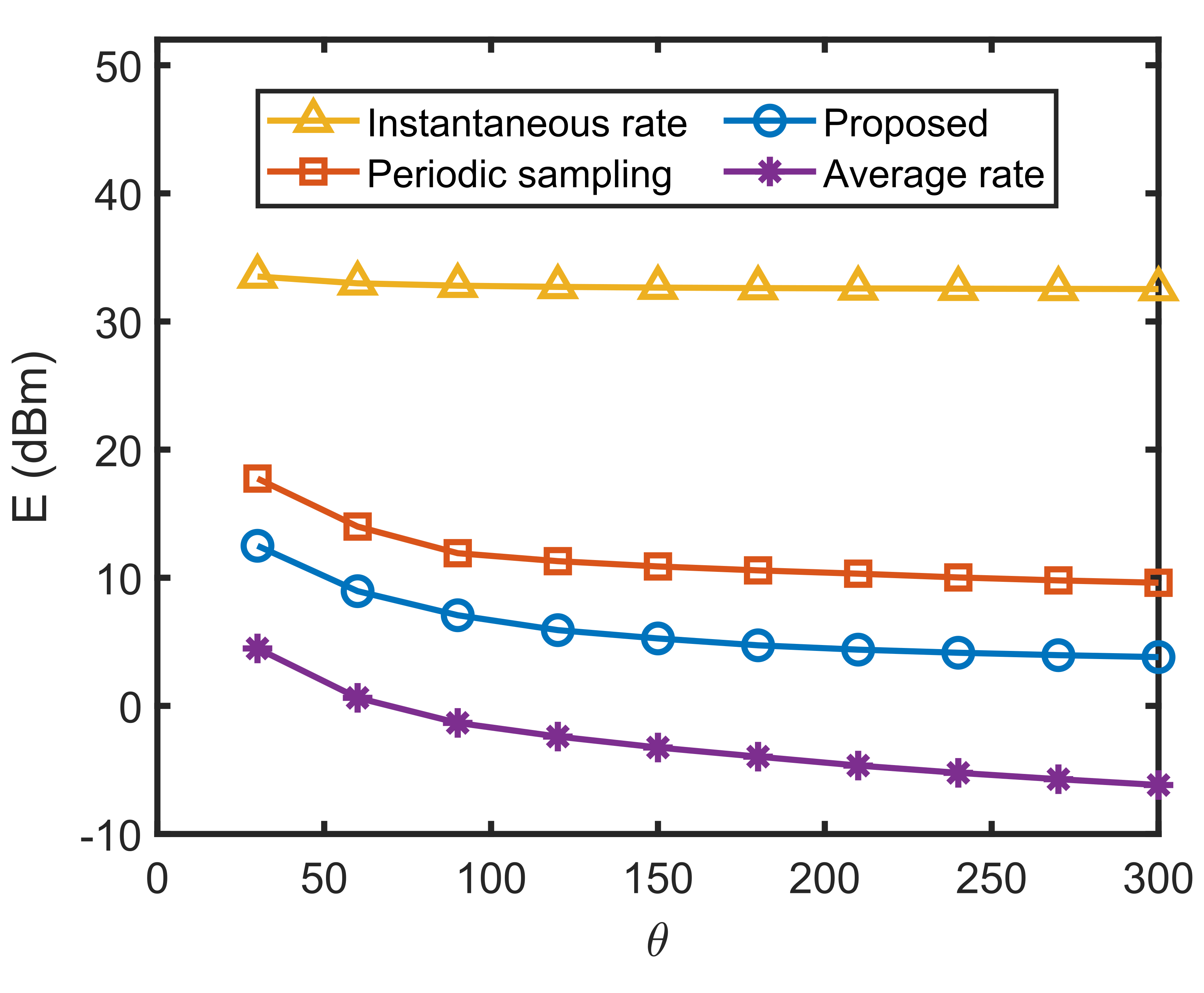}
    }
    \subfloat[Success rate.\label{fig:AOI_over_theta}]{
        \includegraphics[width=0.33\textwidth]{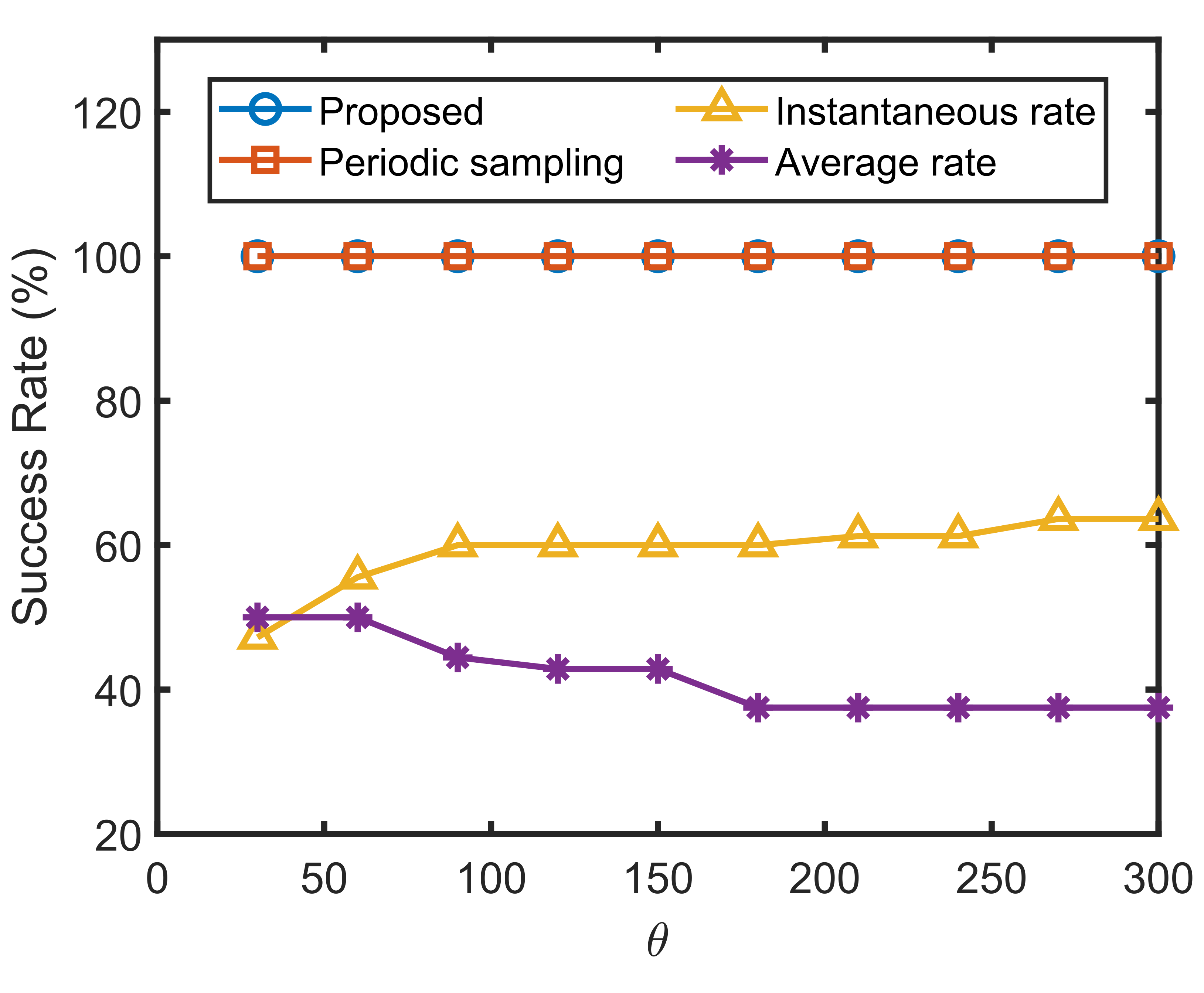}
    }
    \subfloat[Stability.\label{fig:Theta_over_T}]{
        \includegraphics[width=0.33\textwidth]{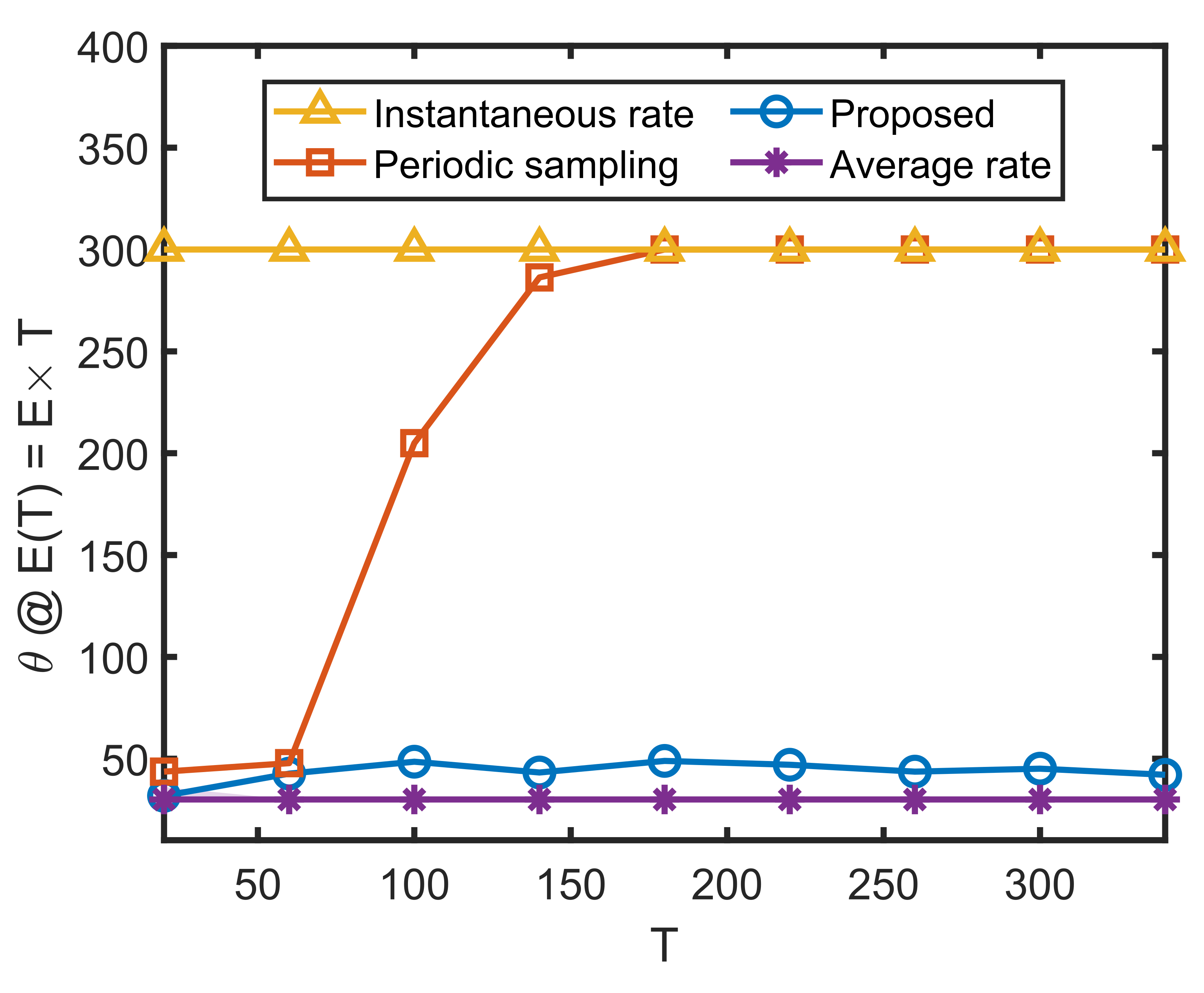}
    }
    \caption{Pareto frontier, success rate, and stability analysis for Algorithm \ref{alg:g_control_alg} with $N=5$.}
    \label{fig:Whole_Performance}
\end{figure*}

Fig.~\ref{fig:pareto_opt} illustrates the Pareto frontiers achieved by these schemes. It is observed that the non-predictive scheme (instantaneous rate) is significantly suboptimal, by more than 20 dB compared with the other three predictive schemes. This highlights the efficiency of the proposed MPComm framework. Although both the Periodic and proposed age-aware schemes exploit predictive information, the latter achieves substantial performance improvement by adaptively controlling the sampling instants. For example, given an energy budget of $E=10\text{ dBm}$, the Periodic sampling method requires approximately 300 \acpl{rb} to complete the task.
In contrast, the proposed status-aware method needs only about 50 \acpl{rb}.

Fig.~\ref{fig:AOI_over_theta} illustrates the fulfillment of the aerial timeliness requirement. First, although the average-rate method exhibits superior energy efficiency and \ac{rb} utilization compared with other baselines, it fails to satisfy the \ac{aoi} requirement. This is because it optimizes the long-term average sum rate rather than per-interval performance; thus, intervals with poor channel conditions remain underserved, leading to unstable \ac{aoi} satisfaction even as $\theta$ increases. The instantaneous-rate method shows clear disadvantages, as it consumes the most resources while still failing to meet the peak \ac{aoi} constraint. In contrast, the proposed predictive age-aware scheme satisfies the \ac{aoi} constraint across all cases. 

Fig.~\ref{fig:Theta_over_T} evaluates the stability of \ac{rb}
utilization for fixed average energy. The proposed algorithm achieves stable \ac{rb} utilization by explicitly controlling the
sampling timing, and this stability is maintained even under dynamic conditions. This behavior supports the reliability of the proposed design over long-term operation. In contrast, without sampling control, the \ac{rb} utilization increases over time and eventually saturates
at the maximum budget when $T>200\text{ s}$.

\section{Conclusion\label{sec:Conclusion}}
In this paper, we propose an MPComm-based strategy for timely status updates in low-altitude networks. We formulate a bi-objective optimization problem to balance energy consumption and terrestrial channel occupation, subject to a data freshness constraint. We characterize the Pareto frontier using the $\epsilon$-constraint method and develop a graph-based algorithm consisting of a low-complexity power and access control layer and an age-aware sampling control layer. The algorithm is asymptotically optimal, with a theoretical worst-case complexity of $\mathcal{O}(\bar{\tau}^{2} NK^{3}T\log(\epsilon^{-1}))$, and that its empirical complexity scales as $\mathcal{O}(\bar{\tau}^{2} NK^{2.3}T\log(\epsilon^{-1}))$. Moreover, the proposed approach is applicable to a broad class of problem variants, including objective transformations and single-objective specializations. Numerical results show that our approach achieves up to a sixfold reduction in \ac{rb} usage and a 6 dB energy saving compared to several benchmark schemes. 

\appendices
\section{Proof of Proposition \ref{prop:pareto_front}\label{sec:proof_prop_pareto_front}}

We will prove that any point in $\mathcal{C}$ is Pareto-optimal for
Problem $\mathscr{P}1$ and all Pareto-optimal points of $\mathscr{P}1$
lie on $\mathcal{C}$. To establish this result, we first prove two
key lemmas concerning the monotonicity of the $E^{*}(\varepsilon_{\theta})$.

\subsection{Monotonicity of $E^{*}(\varepsilon_{\theta})$}
\begin{lemma}
\label{lem:monotonic_E_theta} For any $\varepsilon_{\theta,1}<\varepsilon_{\theta,2}$,
$E^{*}\left(\varepsilon_{\theta,1}\right)\ge E^{*}\left(\varepsilon_{\theta,2}\right)$.
\end{lemma}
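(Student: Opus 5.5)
The claim follows from a feasible-set inclusion argument for $\mathscr{P}\text{2}$. Denote by $\mathcal{F}(\varepsilon_{\theta})\subseteq\Pi$ the feasible set of $\mathscr{P}\text{2}$ for a given $\varepsilon_{\theta}$. This is the set of policies $\boldsymbol{\pi}\in\Pi$ satisfying the peak \ac{aoi} constraint $\tau[t]\le\bar{\tau}$ for all $t\in\mathcal{T}$ together with $\theta(\boldsymbol{\pi})\le\varepsilon_{\theta}$. The key observation is that the objective $E(\boldsymbol{\pi})$ and the \ac{aoi} constraint do not depend on $\varepsilon_{\theta}$; the parameter enters only through the load-cap constraint.

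\textbf{Step 1 (nesting).} Fix $\varepsilon_{\theta,1}<\varepsilon_{\theta,2}$ and take any $\boldsymbol{\pi}\in\mathcal{F}(\varepsilon_{\theta,1})$. It satisfies the \ac{aoi} constraint, and $\theta(\boldsymbol{\pi})\le\varepsilon_{\theta,1}<\varepsilon_{\theta,2}$. Hence $\boldsymbol{\pi}\in\mathcal{F}(\varepsilon_{\theta,2})$, which gives $\mathcal{F}(\varepsilon_{\theta,1})\subseteq\mathcal{F}(\varepsilon_{\theta,2})$.

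\textbf{Step 2 (value comparison).} Since $E^{*}(\varepsilon_{\theta})=\inf_{\boldsymbol{\pi}\in\mathcal{F}(\varepsilon_{\theta})}E(\boldsymbol{\pi})$, an infimum over a larger set cannot be larger:
\begin{equation*}
E^{*}(\varepsilon_{\theta,2})=\inf_{\boldsymbol{\pi}\in\mathcal{F}(\varepsilon_{\theta,2})}E(\boldsymbol{\pi})\le\inf_{\boldsymbol{\pi}\in\mathcal{F}(\varepsilon_{\theta,1})}E(\boldsymbol{\pi})=E^{*}(\varepsilon_{\theta,1}).
\end{equation*}
Concretely, if $\boldsymbol{\pi}_{1}^{*}$ attains $E^{*}(\varepsilon_{\theta,1})$, it is feasible for the larger problem, so $E^{*}(\varepsilon_{\theta,2})\le E(\boldsymbol{\pi}_{1}^{*})=E^{*}(\varepsilon_{\theta,1})$. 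This is exactly the claimed inequality.

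\textbf{Main obstacle: infeasible or unattained cases.} First, if $\mathcal{F}(\varepsilon_{\theta,1})=\emptyset$, we use the convention $E^{*}(\varepsilon_{\theta,1})=+\infty$, and the inequality holds trivially. Second, the optimum might not be attained. Working with infima, as in Step 2, avoids needing attainment, so the inequality holds in either case. As a remark, attainment does hold: for each fixed $\boldsymbol{\mathcal{A}}$ and each of the finitely many sampling sequences $\mathbf{t}$, the power feasible set $\mathcal{S}_{\mathbf{P}}^{|\mathcal{T}|}$ is compact and $E$ is continuous.
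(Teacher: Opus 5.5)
Your proof is correct and follows essentially the same route as the paper. The paper likewise shows $\mathcal{F}(\varepsilon_{\theta,1})\subseteq\mathcal{F}(\varepsilon_{\theta,2})$, handling the empty case separately, and concludes that minimizing $E$ over the larger set cannot give a larger optimum; your use of infima with the $+\infty$ convention just makes explicit the attainment and infeasibility details the paper leaves implicit.
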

\begin{IEEEproof}
Given any $\varepsilon_{\theta,1}<\varepsilon_{\theta,2}$, we will
first prove $\mathcal{F}(\varepsilon_{\theta,1})\subseteq\mathcal{F}(\varepsilon_{\theta,2})$,
where $\mathcal{F}(\varepsilon_{\theta})$ is the feasible set of
Problem $\mathscr{P}2$ under load cap $\varepsilon_{\theta}$, \emph{i.e.},
\begin{equation}
\mathcal{F}\left(\varepsilon_{\theta}\right)=\left\{ \boldsymbol{\pi}\in\Pi:\tau\left[t\right]\le\bar{\tau},\forall t\text{ and }\theta\left(\boldsymbol{\pi}\right)\le\varepsilon_{\theta}\right\} .\label{eq:def_f_s}
\end{equation}

If $\mathcal{F}(\varepsilon_{\theta,1})=\emptyset$, we have $\mathcal{F}(\varepsilon_{\theta,1})\subseteq\mathcal{F}(\varepsilon_{\theta,2})$. Otherwise,
$\mathcal{F}(\varepsilon_{\theta,1})\neq\emptyset$, for any $\boldsymbol{\pi}\in\mathcal{F}(\varepsilon_{\theta,1})$,
the load cap constraint ensures 
\begin{equation*}
\theta\left(\boldsymbol{\pi}\right)=\sum_{k\in\mathcal{K}}a_{n}\left[k,t\right]\le\varepsilon_{\theta,1}<\varepsilon_{\theta,2},\forall n,t.
\end{equation*}
Moreover, all other constraints are independent of $\varepsilon_{\theta,}$,
the policy $\boldsymbol{\pi}$ also satisfies \eqref{eq:aoi_P1_c1}
and \eqref{eq:basic_P1_c1}. Therefore, $\boldsymbol{\pi}\in\mathcal{F}(\varepsilon_{\theta,2})$
according to its defination in \eqref{eq:def_f_s}, which establishes
$\mathcal{F}(\varepsilon_{\theta,1})\subseteq\mathcal{F}(\varepsilon_{\theta,2})$.

Minimizing the same objective $E$ in $\mathscr{P}2$,
over a larger feasible set, yields a lower optimum. Therefore,
\begin{equation*}
E^{*}\left(\varepsilon_{\theta,1}\right)\ge E^{*}\left(\varepsilon_{\theta,2}\right),\forall\varepsilon_{\theta,1}<\varepsilon_{\theta,2},
\end{equation*}
showing that $E^{*}(\varepsilon_{\theta})$ is non-increasing over
$\varepsilon_{\theta}$.
\end{IEEEproof}
\begin{lemma}
\label{lem:monotonic_E_theta_strictly} For any $\varepsilon_{\theta,1}<\varepsilon_{\theta,2}$
with $\varepsilon_{\theta,1},\varepsilon_{\theta,2}\in[\underline{\theta},\overline{\theta}]\cap\mathbb{Z}_{+}$,
$E^{*}(\varepsilon_{\theta,1})>E^{*}(\varepsilon_{\theta,2})$.
\end{lemma}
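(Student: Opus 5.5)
The plan is to reduce the claim to consecutive integers and argue by contradiction, using the definition of $\overline{\theta}$ in \eqref{eq:def_theta_up} together with the structure of the inner problem. By Lemma~\ref{lem:monotonic_E_theta}, $E^{*}(\cdot)$ is non-increasing. So it suffices to show that for every integer $\varepsilon_{\theta}$ with $\underline{\theta}\le\varepsilon_{\theta}<\overline{\theta}$ we have $E^{*}(\varepsilon_{\theta})>E^{*}(\varepsilon_{\theta}+1)$. Chaining these strict one-step decreases with the weak monotonicity then gives $E^{*}(\varepsilon_{\theta,1})>E^{*}(\varepsilon_{\theta,2})$ for any $\varepsilon_{\theta,1}<\varepsilon_{\theta,2}$ in $[\underline{\theta},\overline{\theta}]\cap\mathbb{Z}_{+}$. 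Two boundary facts are needed. Since $\varepsilon_{\theta}\ge\underline{\theta}=\theta^{\star}$, the set $\mathcal{F}(\varepsilon_{\theta})$ is nonempty and $E^{*}(\varepsilon_{\theta})$ is finite. Since $\varepsilon_{\theta}<\overline{\theta}$, the minimality in \eqref{eq:def_theta_up} gives $E^{*}(\varepsilon_{\theta})\ne E^{\star}$, hence $E^{*}(\varepsilon_{\theta})>E^{\star}$.

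Suppose for contradiction that $E^{*}(\varepsilon_{\theta})=E^{*}(\varepsilon_{\theta}+1)$ for such an $\varepsilon_{\theta}$. Let $(\mathbf{t},\boldsymbol{\pi})$ be optimal for $\mathscr{P}3$ at cap $\varepsilon_{\theta}$. By Theorem~\ref{prop:opt_trans}, each interval $\mathcal{T}_{i}$ is solved optimally by its inner problem $\mathscr{P}\text{3-2}$. I will work with the inner problem through its convex surrogate $\mathscr{P}4$, using Propositions~\ref{prop:opt_p_a}--\ref{prop:Integrality} to keep binary optimal allocations.

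The core step is to show that the per-$(n,t)$ load constraints \eqref{eq:cp_c_a1} must be active with a strictly positive multiplier in at least one interval. If every load multiplier were zero in every interval, the KKT system of $\mathscr{P}4$ at cap $\varepsilon_{\theta}$ would coincide with the KKT system of the problem with the load cap removed (cap $K$). By convexity of $\mathscr{P}4$, each interval's energy would then equal its uncapped optimum. Since Theorem~\ref{prop:opt_trans} also applies to the uncapped problem, the outer shortest path would give total energy $E^{\star}$, contradicting $E^{*}(\varepsilon_{\theta})>E^{\star}$.

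So some pair $(n,t)$ has exactly $\varepsilon_{\theta}$ active resource blocks and a positive multiplier $\mu_{n,t}>0$. At cap $\varepsilon_{\theta}+1$, I will construct an explicit strict improvement. Pick a block $k'$ that is unused at slot $t$ and whose floor satisfies $\iota_{n}[k',t]<\lambda_{t}$; positivity of $\mu_{n,t}$ is exactly what guarantees such a block exists. Activate it with $a_{n}[k',t]=1$ and shift a small rate $\delta$ onto it from an active block at the same slot. The marginal energy cost on the new block is $\iota_{n}[k',t]\ln 2$ per unit rate. The marginal saving on the active block is its water level $\lambda_{t}\ln 2$, which is strictly larger. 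A first-order expansion of $\iota(2^{x}-1)$ then shows that the total energy strictly decreases for small $\delta$. The per-slot power budget is not violated, since total power goes down. The rate constraint is unchanged, and the new allocation is still binary. This gives $E^{*}(\varepsilon_{\theta}+1)<E^{*}(\varepsilon_{\theta})$, the desired contradiction.

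I expect the main obstacle to be the step linking the positive multiplier to the existence of the improving block $k'$, and more generally making the argument rigorous for the original binary problem rather than for $\mathscr{P}4$. Two difficulties arise here. Multipliers need not be unique. And the capped water level may be set by $\tilde{\lambda}_{t}$ rather than $\lambda^{*}$. To handle this I would use the b-matching characterization in \eqref{eq:opt_a}. If no unused block at slot $t$ has a negative weight $w_{n}[k',t;\lambda_{t}]$, then raising the load cap at $(n,t)$ does not change the set $\mathcal{A}^{*}(t;\lambda_{t})$. A zero multiplier would then be valid at that $(n,t)$, so we could switch to a multiplier vector with $\mu_{n,t}=0$. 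Iterating this over all $(n,t)$ would either produce an improving block or reduce to the all-zero case already excluded above. Finally, I will rely on the identification of $\mathscr{P}4$ with the inner problem under the lower bound of Lemma~\ref{lem:c_lb}, which the paper uses throughout.
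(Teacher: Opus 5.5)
Your reduction to consecutive caps and the two boundary facts are fine. The proof breaks in the branch where all load multipliers vanish. Zero multipliers in every interval of the optimal sequence $\mathbf{t}$ at cap $\varepsilon_{\theta}$ only show that the edge weights along \emph{that} path of $\mathscr{G}$ already equal their uncapped values. That is, $E^{*}(\varepsilon_{\theta})=\sum_{i}E^{*}_{K}(t_{i},t_{i+1})\ge E^{\star}$, where $E^{*}_{K}$ is the inner value at cap $K$. Equality does not follow. $E^{\star}$ is the shortest path under the uncapped weights, and edges off your path can become much cheaper when the cap is raised. For example, an interval may be infeasible under a small cap because of the per-slot budget $\bar p$. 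The shortest path then switches to another $\mathbf{t}'$. Put differently, $E^{*}(\cdot)=\min_{\mathbf{t}}F_{\mathbf{t}}(\cdot)$ is a pointwise minimum over sampling sequences. For a fixed $\mathbf{t}$ the relaxed value is convex in the cap, so ``flat implies saturated'' holds there, but a minimum of such functions can be flat and then drop. Your other branch also has holes. First, $\mu_{n,t}>0$ does not yield an \emph{unused} block with $\iota_{n}[k',t]<\lambda_{t}$: the cap can bind while every block is held by another BS, and the $\mathscr{P}4$ optimum may be the fractional $\xi$-combination. Second, reasoning through $\mathscr{P}4$ is about the surrogate $\bar c$, not about $E^{*}$ of $\mathscr{P}2$.

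This cannot be patched, because the statement is false without further assumptions. Take $N=1$, $K=3$, $T=3$, $\bar\tau=2$, $\bar\upsilon=3$, $\bar p=3.6$, and strong-LOS channels (rate $\log_{2}(1+p/\iota)$). Set the floors to $\iota[k,1]=1$ for all $k$, $\iota[1,2]=0.01$, $\iota[1,3]=0.5$, and $100$ for all other blocks. The paths $1\to3\to4$, $1\to2\to4$ and $1\to2\to3\to4$ give the intervals $\{1,2\},\{3\}$; $\{1\},\{2,3\}$; and $\{1\},\{2\},\{3\}$.

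The first path costs $0.07+3.5=3.57$ for every cap $\ge1$, since its water levels $0.08$ and $4$ lie below all other floors in its intervals. The other two paths must deliver $3$ bits in slot $1$ alone. This needs power $7$, $2(2^{1.5}-1)\approx3.66$ and $3$ with $1$, $2$ and $3$ blocks, so both are infeasible for caps $\le2$. At cap $3$ the second path costs $3+0.07=3.07$. Hence $\theta^{\star}=1$, $E^{\star}=3.07$ and $\overline{\theta}=3$, yet $E^{*}(1)=E^{*}(2)=3.57$. All inequalities are strict, so the example persists for large finite $\kappa$. At cap $1$ the optimal path is fully saturated, which is exactly the case your second branch would turn into $E^{*}(1)=E^{\star}$.

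The paper's own proof fails at the same point. Its passage from ``relaxing the cap yields no improvement'' to ``$E^{*}(\varepsilon_{\theta,1})=E^{\star}$'' via \cite[Theorem 3.2.2]{Kai:B98} is the same unjustified inference that your proposal tries to justify. What is true is that the Pareto points are exactly the pairs $(\varepsilon_{\theta},E^{*}(\varepsilon_{\theta}))$ with $E^{*}(\varepsilon_{\theta})<E^{*}(\varepsilon_{\theta}-1)$. So $\mathcal{C}$ should be restricted to the caps at which $E^{*}$ strictly drops.
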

\begin{IEEEproof}
We prove this by contradiction. Suppose that there exist $\underline{\theta}\le\varepsilon_{\theta,1}<\varepsilon_{\theta,2}\le\overline{\theta}$,
such that $E^{*}(\varepsilon_{\theta,1})=E^{*}(\varepsilon_{\theta,2})$.
Let $\boldsymbol{\pi}^{*}\in\mathcal{F}(\varepsilon_{\theta,1})$
be any optimal policy for Problem $\mathscr{P}2$ given the constraint
$\varepsilon_{\theta,1}$. By definition, $\theta(\boldsymbol{\pi}^{*})\le\varepsilon_{\theta,1}$
and $E(\boldsymbol{\pi}^{*})=E^{*}(\varepsilon_{\theta,1})$. Since
$\varepsilon_{\theta,1}<\varepsilon_{\theta,2}$, it follows $\mathcal{F}(\varepsilon_{\theta,1})\subseteq\mathcal{F}(\varepsilon_{\theta,2})$
according to Lemma~\ref{lem:monotonic_E_theta}. Consequently, $\boldsymbol{\pi}^{*}$
is feasible for the problem with relaxed constraint $\varepsilon_{\theta,2}$.

Given the assumption $E^{*}(\varepsilon_{\theta,1})=E^{*}(\varepsilon_{\theta,2})$,
the policy $\boldsymbol{\pi}^{*}$ is also optimal for Problem $\mathscr{P}2$
under constraint $\varepsilon_{\theta,2}$, satisfying the resource
inequality strictly 
\begin{equation*}
\theta\left(\boldsymbol{\pi}^{*}\right)\le\varepsilon_{\theta,1}<\varepsilon_{\theta,2}.
\end{equation*}
The strict slackness of the constraint implies that the constraint $\theta\left(\boldsymbol{\pi}^{*}\right)<\varepsilon_{\theta,2}$
is inactive. From the principles of multiobjective optimization \cite[Theorem 3.2.2]{Kai:B98},
if relaxing a constraint yields no improvement in the objective function,
the objective must have reached its local unconstrained minimum with
respect to that resource. Thus, the system is saturated at $\varepsilon_{\theta,1}$,
implying 
\begin{equation*}
E^{*}\left(\varepsilon_{\theta,1}\right)=E^{\star}\Rightarrow\overline{\theta}\le\varepsilon_{\theta,1},
\end{equation*}
which contradicts the initial assumption that $\varepsilon_{\theta,1}<\varepsilon_{\theta,2}\le\overline{\theta}$
(specifically $\varepsilon_{\theta,1}<\overline{\theta}$).

Therefore, the assumption is false, and we conclude that $E^{*}\left(\varepsilon_{\theta,1}\right)>E^{*}\left(\varepsilon_{\theta,2}\right)$
for any $\varepsilon_{\theta,1}<\varepsilon_{\theta,2}$ with $\varepsilon_{\theta,1},\varepsilon_{\theta,2}\in[\underline{\theta},\overline{\theta}]\cap\mathbb{Z}_{+}$.
\end{IEEEproof}

\subsection{Points on $\mathcal{C}$ are Pareto-optimal}
We prove this by contradiction. Suppose that a point $(\varepsilon_{\theta},E^{*}(\varepsilon_{\theta}))$
with $\varepsilon_{\theta}\in[\underline{\theta},\overline{\theta}]\cap\mathbb{Z}_{+}$
is not Pareto-optimal. This implies the existence of a feasible pair
$(\theta^{\prime},E^{\prime})$ that dominates $(\varepsilon_{\theta},E^{*}(\varepsilon_{\theta}))$,
meaning 
\begin{equation*}
\theta^{\prime}\le\varepsilon_{\theta},E^{\prime}\le E^{*}\left(\varepsilon_{\theta}\right),\text{ and }\left(\varepsilon_{\theta},E^{*}\left(\varepsilon_{\theta}\right)\right)\neq\left(\theta^{\prime},E^{\prime}\right).
\end{equation*}

First, by the definition of the optimal value function $E^{*}(\cdot)$,
any feasible pair must satisfy $E^{\prime}\ge E^{*}\left(\theta^{\prime}\right)$.
Combining this with the dominance condition yields 
\begin{equation}
E^{*}\left(\theta^{\prime}\right)\le E^{\prime}\le E^{*}\left(\varepsilon_{\theta}\right).\label{eq:ineq_E}
\end{equation}

We consider two cases for $\theta^{\prime}$. Case $\theta^{\prime}<\varepsilon_{\theta}$:
Since $\varepsilon_{\theta}\le\overline{\theta}$, strict monotonicity
from Lemma \ref{lem:monotonic_E_theta_strictly} implies $E^{*}\left(\theta^{\prime}\right)>E^{*}\left(\varepsilon_{\theta}\right)$.
This directly contradicts inequality \eqref{eq:ineq_E}. Case $\theta^{\prime}=\varepsilon_{\theta}$:
Inequality \eqref{eq:ineq_E} implies $E^{*}\left(\varepsilon_{\theta}\right)\le E^{\prime}\le E^{*}\left(\varepsilon_{\theta}\right)$,
forcing $E^{\prime}=E^{*}\left(\varepsilon_{\theta}\right)$. Thus,
$(\varepsilon_{\theta},E^{*}(\varepsilon_{\theta}))=(\theta^{\prime},E^{\prime})$,
which contradicts the requirement that a dominating point must be
distinct from the original point.

Thus, we conclude that since no such dominating pair exists, every point on $\mathcal{C}$ is Pareto-optimal.

\subsection{Every Pareto-Optimal Feasible Pair Lies on $\mathcal{C}_{}$.}

Let $(\theta^{\prime},E^{\prime})$ be any Pareto-optimal point. We
will prove $(\theta^{\prime},E^{\prime})\in\mathcal{C}_{}$, that
is $E^{\prime}=E^{*}\left(\theta^{\prime}\right)$ and $\theta^{\prime}\le\bar{\theta}$.

Feasibility implies $E^{\prime}\ge E^{*}\left(\theta^{\prime}\right)$.
If $E^{\prime}>E^{*}\left(\theta^{\prime}\right)$, the point $(\theta^{\prime},E^{*}(\theta^{\prime}))$
is feasible and strictly dominate $(\theta^{\prime},E^{\prime})$.
Thus, Pareto optimality necessitates $E^{\prime}=E^{*}(\theta^{\prime})$.

Suppose $\theta^{\prime}>\bar{\theta}$. By the definition of the
saturation point $\bar{\theta}$ in \eqref{eq:def_theta_up}, we have
$E^{*}(\theta^{\prime})=E^{*}(\bar{\theta})$. It implies consuming strictly less resource for the same energy performance.
Thus, Pareto optimality necessitates $\theta^{\prime}\le\bar{\theta}$.

Combining these results, any Pareto-optimal point must satisfy $E^{\prime}=E^{*}(\theta^{\prime})$
and $\theta^{\prime}\le\bar{\theta}$, which is exactly the definition
of $\mathcal{C}$.

\section{Proof of Theorem \ref{prop:opt_trans}\label{sec:proof_prop_opt_trans}}

The total energy consumption in the objective function of $\mathscr{P}3$
is additive over time. By partitioning the time horizon into $|\mathcal{I}|$
disjoint intervals defined by $\mathbf{t}$, the objective function
can be expanded as
\begin{equation*}
E=\sum_{i\in\mathcal{I}}\sum_{n\in\mathcal{N},k\in\mathcal{K},t\in\mathcal{T}_{i}}a_{n}\left[k,t\right]p_{n}\left[k,t\right].
\end{equation*}
For any feasible $\mathbf{t}$, the time intervals $\mathcal{T}_{i}\triangleq[t_{i},t_{i+1})\cap\mathbb{Z}_{+}$
are deterministic and disjoint over $i$. Consequently, the local
policy variables $\boldsymbol{\mathcal{A}}_{i}=\{a_{n}\left[k,t\right]\}_{n\in\mathcal{N},k\in\mathcal{K},t\in\mathcal{T}_{i}}$
and $\boldsymbol{\mathcal{P}}_{i}=\{p_{n}\left[k,t\right]\}_{n\in\mathcal{N},k\in\mathcal{K},t\in\mathcal{T}_{i}}$
are mutually strictly coupled only within their respective intervals.
Specifically, the constraints in \eqref{eq:c_thp_p3} and the policy
feasibility $\boldsymbol{\mathcal{A}}\in\mathcal{S}_{\mathbf{A}}^{\left|\mathcal{T}\right|}$
and $\boldsymbol{\mathcal{P}}\in\mathcal{S}_{\mathbf{P}}^{\left|\mathcal{T}\right|}$
apply independently to each interval $i$.

Similarly, the constraint $\theta\left(\boldsymbol{\pi}\right)\le\varepsilon_{\theta}$,
as defined in \eqref{eq:def_stable}, is equivalent to 
\begin{equation*}
\sum_{k\in\mathcal{K}}a_{n}\left[k,t\right]\le\varepsilon_{\theta},\forall n\in\mathcal{N},t\in\mathcal{T}_{i},
\end{equation*}
for all $i\in\mathcal{I}$. As a result, the constraint $\theta\left(\boldsymbol{\pi}\right)\le\varepsilon_{\theta}$
can be decoupled into $I$ independent constraints.

Due to this block-separable structure of both the objective function
and the constraints, the minimization over the global power and \ac{rb}
allocation policy, \emph{i.e.}, $\boldsymbol{\mathcal{A}}$ and $\boldsymbol{\mathcal{P}}$,
decomposes into a summation of independent subproblems 
\begin{align*}
\underset{\boldsymbol{\mathcal{A}}_{i},\boldsymbol{\mathcal{P}}_{i}}{\text{minimize}} & \quad\sum_{n\in\mathcal{N},k\in\mathcal{K},t\in\mathcal{T}_{i}}a_{n}\left[k,t\right]p_{n}\left[k,t\right]\\
\text{subject to} & \quad\mathbb{E}\left\{ \upsilon\left(t_{i},t_{i+1}\right)\right\} \ge\bar{\upsilon},\\
 & \quad\sum_{k\in\mathcal{K}}a_{n}\left[k,t\right]\le\varepsilon_{\theta},\forall n\in\mathcal{N},t\in\mathcal{T}_{i},\\
 & \quad\boldsymbol{\mathcal{A}}_{i}\in\mathcal{S}_{\mathbf{A}}^{\left|\mathcal{T}_{i}\right|},\boldsymbol{\mathcal{P}}_{i}\in\mathcal{S}_{\mathbf{P}}^{\left|\mathcal{T}_{i}\right|}.
\end{align*}

Let $E^{*}(t_{i},t_{i+1})$ denote the optimal value of the inner
minimization term for the $i$th interval. Substituting this back
into the global problem $\mathscr{P}3$ and optimizing over the remaining
variable $\mathbf{t}$, we obtain the equivalent master problem
\begin{equation*}
\underset{\mathbf{t}}{\text{min}}\ \sum_{i\in\mathcal{I}}E^{*}\left(t_{i},t_{i+1}\right)\text{ s.t. }\mathbf{t}\in\Upsilon.
\end{equation*}

\section{Proof of Lemma \ref{lem:c_lb}\label{sec:proof_lem_c_lb}}

Define a new random variable $Y=\ln(\gamma_{n}[k,t])$, such that
$\gamma_{n}[k,t]=e^{Y}$. Substituting this into the capacity equation
\begin{equation*}
\mathbb{E}\left\{ c_{n}\left[k,t\right]\right\} =\mathbb{E}_{Y}\left\{ \log_{2}\left(1+e^{Y}\right)\right\} .
\end{equation*}

Consider the auxiliary function $f(y)=\log_{2}(1+e^{y})$. We examine
its convexity by computing the second derivative with respect to $y$
\begin{equation*}
f^{\prime}\left(y\right)=\frac{1}{\ln2}\frac{e^{y}}{1+e^{y}},\,f^{\prime\prime}\left(y\right)=\frac{1}{\ln2}\frac{e^{y}}{\left(1+e^{y}\right)^{2}}.
\end{equation*}
Since $e^{y}>0$ for all real $y$, the second derivative $f^{\prime\prime}(y)$
is strictly positive. Thus, $f(y)$ is a strictly convex function.
Applying Jensen's inequality to the capacity equation, we have 
\begin{equation*}
\mathbb{E}_{Y}\left\{ \log_{2}\left(1+e^{Y}\right)\right\} \ge\log_{2}\left(1+e^{\mathbb{E}\left\{ Y\right\} }\right).
\end{equation*}

According to the definition of $\gamma_{n}\left[k,t\right]$ in \eqref{eq:def_snr}
and $h_{n}\left[k,t\right]$ in \eqref{eq:channel_model}, $\gamma_{n}\left[k,t\right]$
is a Gamma-distributed variable 
\begin{equation*}
\gamma_{n}\left[k,t\right]\sim\text{Gamma}\left(\kappa_{n}\left[k,t\right],\frac{p_{n}\left[k,t\right]g_{n}\left[k,t\right]}{\delta^{2}\kappa_{n}\left[k,t\right]}\right).
\end{equation*}
Then, the expected value of its natural logarithm is known in closed
form 
\begin{align*}
 & \mathbb{E}\left\{ \ln\left(\gamma_{n}\left[k,t\right]\right)\right\} \\
 & =\psi\left(\kappa_{n}\left[k,t\right]\right)+\ln\left(\underbrace{\frac{p_{n}\left[k,t\right]g_{n}\left[k,t\right]}{\delta^{2}}}_{\bar{\gamma}_{n}\left[k,t\right]}\frac{1}{\kappa_{n}\left[k,t\right]}\right).
\end{align*}
where $\psi\left(\cdot\right)$ is the Digamma function. Substituting
this back into the exponential term 
\begin{equation*}
e^{\mathbb{E}\left\{ Y\right\} }=e^{\psi\left(\kappa_{n}\left[k,t\right]\right)+\ln\left(\frac{\bar{\gamma}_{n}\left[k,t\right]}{\kappa_{n}\left[k,t\right]}\right)}=\frac{e^{\psi\left(\kappa_{n}\left[k,t\right]\right)}}{\kappa_{n}\left[k,t\right]}\bar{\gamma}_{n}\left[k,t\right].
\end{equation*}
Define the scaling coefficient $\beta_{n}\left[k,t\right]=e^{\psi\left(\kappa_{n}\left[k,t\right]\right)}/\kappa_{n}\left[k,t\right]$.
Substituting back into the Jensen bound, we have 
\begin{equation}
\mathbb{E}\left\{ c_{n}\left[k,t\right]\right\} \ge\log_{2}\left(1+\beta_{n}\left[k,t\right]\bar{\gamma}_{n}\left[k,t\right]\right).\label{eq:lb_a}
\end{equation}

Next, we analyze the tightness of the lower bound. As \(\kappa_n[k,t]\to\infty\), the asymptotic expansion of the digamma function gives
\begin{equation*}
\psi(\kappa_n[k,t])=\ln(\kappa_n[k,t])-\frac{1}{2\kappa_n[k,t]}+O\!\left(\frac{1}{\kappa_n^2[k,t]}\right),
\end{equation*}
which implies \(\beta_n[k,t]\to 1\). Then, \eqref{eq:lb_a} becomes
\begin{equation}
\mathbb{E}\left\{ c_{n}\left[k,t\right]\right\} \ge\log_{2}\left(1+\bar{\gamma}_{n}\left[k,t\right]\right).\label{eq:lb_k_infty}
\end{equation}

Moreover, since \(\log_2(1+x)\) is concave, Jensen's inequality yields
\begin{equation}
\mathbb{E}\left\{ c_n[k,t]\right\} \le \log_2\left(1+\mathbb{E}\{\gamma_n[k,t]\}\right)
=\log_2\left(1+\bar\gamma_n[k,t]\right). \label{eq:up_k}
\end{equation}
Combining \eqref{eq:lb_k_infty} and \eqref{eq:up_k}, we have
\begin{equation*}
\lim_{\kappa_n[k,t]\to\infty}
\left(\mathbb{E}\left\{ c_n[k,t]\right\}-\bar c_n[k,t]\right)=0.
\end{equation*}

As \(\bar\gamma_n[k,t]\to\infty\), we have
\begin{equation*}
\mathbb{E}\!\left\{\log_2\!\left(1+\gamma_n[k,t]\right)\right\}
=
\mathbb{E}\!\left\{\log_2\!\left(\gamma_n[k,t]\right)\right\}+o(1).
\end{equation*}
Then, according to the expected value of its natural logarithm, we have
\begin{equation*}
\mathbb{E}\!\left\{\log_2\!\left(\gamma_n[k,t]\right)\right\}
=
\frac{1}{\ln 2}
\left(
\psi(\kappa_n[k,t])
+\ln\!\frac{\bar\gamma_n[k,t]}{\kappa_n[k,t]}
\right).
\end{equation*}
Using \(\beta_n[k,t]=e^{\psi(\kappa_n[k,t])}/\kappa_n[k,t]\), this becomes
\begin{equation*}
\mathbb{E}\!\left\{\log_2\!\left(\gamma_n[k,t]\right)\right\}
=
\log_2\!\left(\beta_n[k,t]\bar\gamma_n[k,t]\right).
\end{equation*}
On the other hand,
\begin{equation*}
\log_2\!\left(1+\beta_n[k,t]\bar\gamma_n[k,t]\right)
=
\log_2\!\left(\beta_n[k,t]\bar\gamma_n[k,t]\right)+o(1).
\end{equation*}
Therefore,
\begin{equation*}
\lim_{\bar\gamma_n[k,t]\to\infty}
\left(\mathbb{E}\left\{ c_n[k,t]\right\}-\bar c_n[k,t]\right)=0.
\end{equation*}

\section{Proof of Proposition \ref{prop:opt_p_a}, \ref{prop:Integrality}
and \ref{prop:property_cc}\label{sec:proof_prop_opt_p_a}}

We prove Proposition \ref{prop:opt_p_a} via the \ac{kkt} conditions
of $\mathscr{P}4$, constructing an explicit primal-dual policy, and
verifying that it satisfies all \ac{kkt} conditions. Since $\mathscr{P}4$
is convex and satisfies Slater\textquoteright s condition, the \ac{kkt}
conditions are necessary and sufficient for optimality \cite[Section 5.5.3]{Boy:B04}.

\subsection{Necessary and Sufficient Optimality Conditions}

Introduce Lagrange multiplier $\lambda$, $\{\mu_{1}[t]\}_{t\in\mathcal{T}_{i}}$,
$\{\mu_{2}[n,t]\}_{n\in\mathcal{N},t\in\mathcal{T}_{i}}$, $\{\mu_{3}[k,t]\}_{k\in\mathcal{K},t\in\mathcal{T}_{i}}$,
and $\{\mu_{4}[n,k,t],\mu_{5}[n,k,t],\mu_{6}[n,k,t]\}_{n\in\mathcal{N},k\in\mathcal{K},t\in\mathcal{T}_{i}}$,
associated with the constraints in $\mathscr{P}4$. The Lagrangian
is
\begin{align*}
 & \mathcal{L}=\sum_{n\in\mathcal{N},k\in\mathcal{K},t\in\mathcal{T}_{i}}\iota_{n}\left[k,t\right]a_{n}\left[k,t\right]\left(2^{\frac{\phi_{n}\left[k,t\right]}{a_{n}\left[k,t\right]}}-1\right)\\
 & +\lambda\left(\bar{\upsilon}-\sum_{n\in\mathcal{N},k\in\mathcal{K},t\in\mathcal{T}_{i}}\phi_{n}\left[k,t\right]\right)\\
 & +\sum_{t\in\mathcal{T}_{i}}\mu_{1}\left[t\right]\left(\sum_{n\in\mathcal{N},k\in\mathcal{K}}\iota_{n}\left[k,t\right]a_{n}\left[k,t\right]\left(2^{\frac{\phi_{n}\left[k,t\right]}{a_{n}\left[k,t\right]}}-1\right)-\bar{p}\right)\\
 & +\sum_{n\in\mathcal{N},t\in\mathcal{T}_{i}}\mu_{2}\left[n,t\right]\left(\sum_{k\in\mathcal{K}}a_{n}\left[k,t\right]-\varepsilon_{\theta}\right)\\
 & +\sum_{k\in\mathcal{K},t\in\mathcal{T}_{i}}\mu_{3}\left[k,t\right]\left(\sum_{n\in\mathcal{N}}a_{n}\left[k,t\right]-1\right)\\
 & +\sum_{n\in\mathcal{N},k\in\mathcal{K},t\in\mathcal{T}_{i}}\mu_{4}\left[n,k,t\right]\left(-a_{n}\left[k,t\right]\right)\\
 & +\sum_{n\in\mathcal{N},k\in\mathcal{K},t\in\mathcal{T}_{i}}\mu_{5}\left[n,k,t\right]\left(a_{n}\left[k,t\right]-1\right)\\
 & +\sum_{n\in\mathcal{N},k\in\mathcal{K},t\in\mathcal{T}_{i}}\mu_{6}\left[n,k,t\right]\left(-\phi_{n}\left[k,t\right]\right).
\end{align*}
The \ac{kkt} conditions for $\mathscr{P}4$ consist of 1) Primal
and dual feasibility: Constraints of $\mathscr{P}4$ and non-negativity
of multipliers; 2) Complementary slackness 
\begin{equation}
\lambda\left(\bar{\upsilon}-\sum_{n\in\mathcal{N},k\in\mathcal{K},t\in\mathcal{T}_{i}}\phi_{n}\left[k,t\right]\right)=0,\label{eq:kkt_c1}
\end{equation}
\begin{multline}
\mu_{1}\left[t\right]\left(\sum_{n\in\mathcal{N},k\in\mathcal{K}}\iota_{n}\left[k,t\right]a_{n}\left[k,t\right]\left(2^{\frac{\phi_{n}\left[k,t\right]}{a_{n}\left[k,t\right]}}-1\right)-\bar{p}\right)=0,\\
\forall t\in\mathcal{T}_{i},\label{eq:kkt_c2}
\end{multline}
\begin{equation}
\mu_{2}\left[n,t\right]\left(\sum_{k\in\mathcal{K}}a_{n}\left[k,t\right]-\varepsilon_{\theta}\right)=0,\forall n\in\mathcal{N},t\in\mathcal{T}_{i},\label{eq:kkt_c_add}
\end{equation}
\begin{equation}
\mu_{3}\left[k,t\right]\left(\sum_{n\in\mathcal{N}}a_{n}\left[k,t\right]-1\right)=0,\forall k\in\mathcal{K},t\in\mathcal{T}_{i},\label{eq:kkt_c3}
\end{equation}
\begin{equation}
\mu_{4}\left[n,k,t\right]\left(-a_{n}\left[k,t\right]\right)=0,\forall n\in\mathcal{N},k\in\mathcal{K},t\in\mathcal{T}_{i},\label{eq:kkt_c4}
\end{equation}
\begin{equation}
\mu_{5}\left[n,k,t\right]\left(a_{n}\left[k,t\right]-1\right)=0,\forall n\in\mathcal{N},k\in\mathcal{K},t\in\mathcal{T}_{i},\label{eq:kkt_c5}
\end{equation}
\begin{equation}
\mu_{6}\left[n,k,t\right]\left(-\phi_{n}\left[k,t\right]\right)=0,\forall n\in\mathcal{N},k\in\mathcal{K},t\in\mathcal{T}_{i}.\label{eq:kkt_c6}
\end{equation}
3) Stationarity
\begin{equation}
\frac{\partial\mathcal{L}}{\partial\phi_{n}\left[k,t\right]}=0,\forall n\in\mathcal{N},k\in\mathcal{K},t\in\mathcal{T}_{i},\label{eq:kkt_add_2}
\end{equation}
\begin{equation}
\frac{\partial\mathcal{L}}{\partial a_{n}\left[k,t\right]}=0,\forall n\in\mathcal{N},k\in\mathcal{K},t\in\mathcal{T}_{i}.\label{eq:kkt_add3}
\end{equation}

In the sequel, we derive the optimal policy from these conditions.

\subsection{Optimal Spectral Efficiency (Water-Filling)\label{subsec:opt_p}}

Taking the partial derivative with respect to $\phi_{n}\left[k,t\right]$
in \eqref{eq:kkt_add_2}, and substituting $\bar{c}_{n}\left[k,t\right]=\phi_{n}\left[k,t\right]/a_{n}\left[k,t\right]$,
we have 
\begin{equation}
\left(1+\mu_{1}\left[t\right]\right)\iota_{n}\left[k,t\right]\ln2\cdot2^{\bar{c}_{n}\left[k,t\right]}-\lambda-\mu_{6}\left[n,k,t\right]=0.\label{eq:kkt_c7}
\end{equation}
If $\bar{c}_{n}\left[k,t\right]>0$, then $a_{n}\left[k,t\right]>0$
according to the definition of $p_{n}\left[k,t\right]$ in \eqref{eq:def_p_over_a_phi}.
We have $\mu_{6}\left[n,k,t\right]=0$ according to \eqref{eq:kkt_c6},
then \eqref{eq:kkt_c7} becomes
\begin{equation*}
\left(1+\mu_{1}\left[t\right]\right)\iota_{n}\left[k,t\right]\ln2\cdot2^{\bar{c}_{n}\left[k,t\right]}-\lambda=0.
\end{equation*}
Solving for $\bar{c}_{n}\left[k,t\right]$ and ensuring non-negativity,
we have 
\begin{equation}
\bar{c}_{n}^{*}\left[k,t\right]=\left[\log_{2}\left(\underbrace{\frac{\lambda}{\ln2\left(1+\mu_{1}\left[t\right]\right)}}_{\triangleq\lambda_{t}}\frac{1}{\iota_{n}\left[k,t\right]}\right)\right]^{+}.\label{eq:opt_c}
\end{equation}
Thus, the optimal power allocation is 
\begin{equation}
p_{n}^{*}\left[k,t\right]=\left[\lambda_{t}-\iota_{n}\left[k,t\right]\right]^{+}.\label{eq:opt_p}
\end{equation}

\subsection{Optimal \ac{rb} Allocation\label{subsec:opt_a}}

Similarly, taking the partial derivative with respect to $a_{n}\left[k,t\right]$
in \eqref{eq:kkt_add3}, and substituting $\bar{c}_{n}\left[k,t\right]=\phi_{n}\left[k,t\right]/a_{n}\left[k,t\right]$,
we have 
\begin{multline}
\underbrace{\left(1+\mu_{1}\left[t\right]\right)\iota_{n}\left[k,t\right]\left(2^{\bar{c}_{n}\left[k,t\right]}\left(1-\ln2\bar{c}_{n}\left[k,t\right]\right)-1\right)}_{H_{n}\left[k,t\right]}\\
\quad+\mu_{2}\left[n,t\right]+\text{\ensuremath{\mu_{3}\left[k,t\right]}}-\mu_{4}\left[n,k,t\right]+\mu_{5}\left[n,k,t\right]=0.\label{eq:kkt_c8}
\end{multline}

The stationarity condition \eqref{eq:kkt_c8} with corresponding complementary
slackness conditions \eqref{eq:kkt_c_add}-- \eqref{eq:kkt_c5}
can be equivalently viewed from a Lagrangian function $\mathcal{L}_{2}=\sum_{t\in\mathcal{T}_{i}}\mathcal{L}_{3}\left[t\right]$,
where $\mathcal{L}_{3}\left[t\right]$ is defined as 
\begin{align*}
\mathcal{L}_{3}\left[t\right]= & \sum_{n\in\mathcal{N},k\in\mathcal{K}}H_{n}\left[k,t\right]a_{n}\left[k,t\right]\\
 & +\sum_{n\in\mathcal{N}}\mu_{2}\left[n,t\right]\left(\sum_{k\in\mathcal{K}}a_{n}\left[k,t\right]-\varepsilon_{\theta}\right)\\
 & +\sum_{k\in\mathcal{K}}\mu_{3}\left[k,t\right]\left(\sum_{n\in\mathcal{N}}a_{n}\left[k,t\right]-1\right)\\
 & +\sum_{n\in\mathcal{N},k\in\mathcal{K}}\mu_{4}\left[n,k,t\right]\left(-a_{n}\left[k,t\right]\right)\\
 & +\sum_{n\in\mathcal{N},k\in\mathcal{K}}\mu_{5}\left[n,k,t\right]\left(a_{n}\left[k,t\right]-1\right).
\end{align*}
This Lagrangian function $\mathcal{L}_{3}\left[t\right]$ corresponds
to the following \ac{rb} allocation problem 
\begin{align}
\underset{\{a_{n}\left[k,t\right]\}_{n\in\mathcal{N},k\in\mathcal{K}}}{\text{minimize}} & \sum_{n\in\mathcal{N},k\in\mathcal{K}}H_{n}\left[k,t\right]a_{n}\left[k,t\right],\label{eq:s1_obj}\\
\text{subject to}\quad & \sum_{k\in\mathcal{K}}a_{n}\left[k,t\right]\le\varepsilon_{\theta},\forall n\in\mathcal{N},\label{eq:s1_c1}\\
 & \sum_{n\in\mathcal{N}}a_{n}\left[k,t\right]\le1,\forall k\in\mathcal{K},\label{eq:s1_c2}\\
 & a_{n}\left[k,t\right]\in\left[0,1\right],\forall n\in\mathcal{N},k\in\mathcal{K}.\label{eq:s1_c3}
\end{align}

By substituting the optimal $\bar{c}_{n}^{*}\left[k,t\right]$ in
\eqref{eq:opt_c} into $H_{n}\left[k,t\right]$, we can express $H_{n}\left[k,t\right]$
as
\begin{align*}
 & H_{n}\left[k,t\right]\\
 & =\left(1+\mu_{1}\left[t\right]\right)\iota_{n}\left[k,t\right]\left(2^{\bar{c}_{n}^{*}\left[k,t\right]}\left(1-\ln2\bar{c}_{n}^{*}\left[k,t\right]\right)-1\right)\\
 & =\left(1+\mu_{1}\left[t\right]\right)\left(\lambda_{t}-\iota_{n}\left[k,t\right]-\ln2\lambda_{t}\bar{c}_{n}^{*}\left[k,t\right]\right)\\
 & =\underbrace{\left(1+\mu_{1}\left[t\right]\right)}_{\ge1}\left(p_{n}^{*}\left[k,t\right]-\ln2\lambda_{t}\bar{c}_{n}^{*}\left[k,t\right]\right).
\end{align*}
Denote $w_{n}\left[k,t\right]=p_{n}^{*}\left[k,t\right]-\ln2\lambda_{t}\bar{c}_{n}^{*}\left[k,t\right]$,
therefore, the optimal \ac{rb} allocation $\mathbf{A}^{*}\left[t;\lambda_{t}\right]$
is the solution to 
\begin{equation}
\underset{\mathbf{A}\left[t\right]\in\boldsymbol{\mathcal{A}}\left(t\right)}{\text{min}}\sum_{n\in\mathcal{N},k\in\mathcal{K}}w_{n}\left[k,t\right]a_{n}\left[k,t\right]\text{ s.t. }\text{\eqref{eq:s1_c1}--\eqref{eq:s1_c3}.}\label{eq:p_a}
\end{equation}

\subsection{Capped Water Level}

Denote $\tilde{\lambda}_{t}$ as the solution to $p^{*}\left[t;\Lambda\right]=\bar{p}$.
Then, complementary slackness \eqref{eq:kkt_c2} implies two regimes
\begin{itemize}
\item Inactive constraint ($p^{*}[t;\tilde{\lambda}_{t}]<\bar{p}$): It
means $\mu_{1}[t]=0$, yielding $\lambda_{t}=\lambda/\ln2$;
\item Active constraint ($p^{*}[t;\tilde{\lambda}_{t}]=\bar{p}$): It implies
$\lambda_{t}=\tilde{\lambda_{t}}$.
\end{itemize}
Equivalently, the optimal effective water level is the capped value
\begin{equation}
\lambda_{t}=\min\left\{ \frac{\lambda}{\ln2},\tilde{\lambda_{t}}\right\} .\label{eq:opt_lambda_form}
\end{equation}

\subsection{Optimal Lagrangian Parameter}

Since $\bar{c}_{n}^{*}\left[k,t;\lambda_{t}\right]a_{n}^{*}\left[k,t;\lambda_{t}\right]>0$
if and only if $\lambda_{t}>0$ ({\em i.e.}, $\lambda>0$), any strictly
positive rate requirement $\bar{\upsilon}>0$ implies $\lambda>0$.
Thus, the rate constraint is active according to \eqref{eq:kkt_c1},
and the optimal $\lambda$ is the unique solution to the following
equation
\begin{equation}
\sum_{n\in\mathcal{N},k\in\mathcal{K},t\in\mathcal{T}_{i}}\bar{c}_{n}^{*}\left[k,t;\lambda_{t}\right]a_{n}^{*}\left[k,t;\lambda_{t}\right]=\bar{\upsilon}.\label{eq:opt_lambda}
\end{equation}

\subsection{Optimal Binary \ac{rb} Policy\label{subsec:opt_a_01}}

We will prove that there exists an optimal binary \ac{rb} allocation
policy for any $\lambda_{t}$, {\em i.e.}, $a_{n}^{*}\left[k,t\right]\in\{0,1\}$
for all $n\in\mathcal{N}$ and $k\in\mathcal{K}$.

Stack the variables $\mathbf{A}\left[t\right]$ for any $t$ into
a vector 
\begin{multline*}
\mathbf{x}_{t}\triangleq[a_{1}\left[1,t\right],\cdots,a_{1}\left[K,t\right],a_{2}\left[1,t\right],\cdots,a_{2}\left[K,t\right]\\
,\cdots a_{N}\left[1,t\right],\cdots,a_{N}\left[k,t\right]]^{\text{T}}.
\end{multline*}
With this definition, problem \eqref{eq:p_a} can be written in standard
linear programming form as 
\begin{equation*}
\min_{\mathbf{x}_{t}}\,\boldsymbol{w}_{t}^{\text{T}}\mathbf{x}_{t},\,\text{s.t.}\,\mathbf{M}\mathbf{x}_{t}\le\mathbf{b},\mathbf{x}_{t}\ge0,
\end{equation*}
where vector $\boldsymbol{w}_{t}$ collects the coefficients of $w_{n}\left[k,t\right]$,
and the constraint matrix $\mathbf{M}$ and right-hand side vector
$\mathbf{b}$ are given by
\begin{equation*}
\mathbf{M}=\left[\begin{array}{c}
\mathbf{I}_{N}\otimes\mathbf{1}_{K}^{\text{T}}\\
\mathbf{1}_{N}^{\text{T}}\otimes\mathbf{I}_{K}
\end{array}\right],\,\mathbf{b}=\left[\begin{array}{c}
\varepsilon_{\theta}\mathbf{1}_{N}\\
\mathbf{1}_{K}
\end{array}\right].
\end{equation*}
Note that $\sum_{n\in\mathcal{N}}a_{n}\left[k,t\right]\le1$ and $a_{n}\left[k,t\right]\ge0$
imply $a_{n}\left[k,t\right]\le1$. Thus, the explicit upper bound
constraint is redundant.

Each column of $\mathbf{M}$ contains exactly two nonzero entries,
both equal to $1$, and the rows of $\mathbf{M}$ can be partitioned
into two disjoint sets corresponding to user constraints and resource-block
constraints, with each column having at most one nonzero entry in
each set. Hence, $\mathbf{M}$ is a totally unimodular matrix \cite{HofKru:B09}.
Moreover, since $\varepsilon_{\theta}\in\mathbb{Z}_{+}$, the vector
$\mathbf{b}$ is integral. By the Hoffman-Kruskal theorem \cite{HofKru:B09},
problem \eqref{eq:p_a} has integer optima, {\em i.e.}, $a_{n}^{*}\left[k,t\right]\in\{0,1\}$
for all $n\in\mathcal{N}$ and $k\in\mathcal{K}$.

\subsection{Optimality, Continuity and Monotonicity\label{subsec:monotonicity}}

Define the per-slot sum rate and the consumed power at slot $t$
as 
\begin{equation}
\Phi^{*}\left[t;\Lambda\right]\triangleq\sum_{n\in\mathcal{N},k\in\mathcal{K}}\bar{c}_{n}^{*}\left[k,t;\Lambda\right]a_{n}^{*}\left[k,t;\Lambda\right],\label{eq:def_phi_t}
\end{equation}
\begin{equation}
P^{*}\left[t;\Lambda\right]\triangleq\sum_{n\in\mathcal{N},k\in\mathcal{K}}p_{n}^{*}\left[k,t;\Lambda\right]a_{n}^{*}\left[k,t;\Lambda\right],\label{eq:def_p_t}
\end{equation}
where $\Lambda\ge0$. Since the optimal control variables $\{\bar{c}_{n}^{*}\left[k,t;\Lambda\right],p_{n}^{*}\left[k,t;\Lambda\right],a_{n}^{*}\left[k,t;\Lambda\right]\}$
derived in Sections \ref{subsec:opt_p} and \ref{subsec:opt_a} are
determined by $\Lambda$, both $\phi^{*}\left[t;\Lambda\right]$ and
$p^{*}\left[t;\Lambda\right]$ can be viewed as functions of $\Lambda$.

We define the set of switching points where the binary allocation
changes as
\begin{equation*}
\mathcal{S}_{\text{s}}=\left\{ \Lambda_{0}\ge0;\mathbf{A}^{*}\left[t;\Lambda_{0}\right]\neq\lim_{\Lambda\to\Lambda_{0}}\mathbf{A}^{*}\left[t;\Lambda\right]\right\} ,
\end{equation*}
where $\mathbf{A}^{*}\left[t;\Lambda_{0}\right]$ is the integer solution
to \eqref{eq:p_a}. The remaining non-switching domain is defined
as
\begin{equation*}
\mathcal{S}_{\text{ns}}=\bigcup_{i\in\left\{ 0,\cdots,|\mathcal{S}_{\text{s}}|\right\} }s_{i},\,s_{i}=\left(\mathcal{S}_{\text{s}}\left[i\right],\mathcal{S}_{\text{s}}\left[i+1\right]\right),
\end{equation*}
where $\mathcal{S}_{\text{s}}\left[i\right]$ denotes the $i$th smallest
point in $\mathcal{S}_{\text{s}}\bigcup\left\{ 0,\infty\right\} $
. It is clear that the closure of these sets satisfies $\mathcal{S}_{\text{s}}\bigcup\mathcal{S}_{\text{ns}}=\mathbb{R}_{0+}$.

For the first segment $s_{0}$ (where $0\le\Lambda<\iota_{n}\left[k,t\right]$),
all \acpl{rb} are inactive, yielding $\phi^{*}\left[t;\Lambda\right]=0$
and $p^{*}\left[t;\Lambda\right]=0$. For each subsequent segment
$s_{i}$ with $i>1$, the \ac{rb} allocation $\mathbf{A}^{*}\left[t;\Lambda\right]$
remains constant. Consequently, both $\phi^{*}\left[t;\Lambda\right]$
and $p^{*}\left[t;\Lambda\right]$ are finite linear combinations
of $\left\{ \bar{c}_{n}^{*}\left[k,t;\Lambda\right]\right\} $ and
$\left\{ p_{n}^{*}\left[k,t;\Lambda\right]\right\} $ which are continuously
increasing in $\Lambda$. Thus, $\phi^{*}\left[t;\Lambda\right]$
and $p^{*}\left[t;\Lambda\right]$ are continuous and strictly increasing
over any interval $\Lambda\in s_{i}$.

Next, we analyze the behavior at the switching points. Since the problem
problem \eqref{eq:p_a} has a continuous objective and a compact feasible
set, its optimal objective value is continuous in $\Lambda$. Evaluating
the optimality condition at a switching point $\Lambda$ (where $\Lambda^{-}=\Lambda=\Lambda^{+}$),
we have 
\begin{equation}
p^{*}\left[t;\Lambda^{-}\right]-\ln2\Lambda\phi^{*}\left[t;\Lambda^{-}\right]=p^{*}\left[t;\Lambda^{+}\right]-\ln2\Lambda\phi^{*}\left[t;\Lambda^{+}\right].\label{eq:continuity_c2}
\end{equation}
Since $\Lambda>0$, the change in power and sum rate must share
the same sign.

The optimality of the policy at $\Lambda^{-}$ implies it minimizes
the Lagrangian at that specific water level 
\begin{equation*}
p^{*}\left[t;\Lambda^{-}\right]-\ln2\Lambda^{-}\phi^{*}\left[t;\Lambda^{-}\right]\le p^{*}\left[t;\Lambda^{+}\right]-\ln2\Lambda^{-}\phi^{*}\left[t;\Lambda^{+}\right].
\end{equation*}
Similarly, the optimality of the policy at $\Lambda^{+}$ implies
\begin{equation*}
p^{*}\left[t;\Lambda^{+}\right]-\ln2\Lambda^{+}\phi^{*}\left[t;\Lambda^{+}\right]\le p^{*}\left[t;\Lambda^{-}\right]-\ln2\Lambda^{+}\phi^{*}\left[t;\Lambda^{-}\right].
\end{equation*}
Summing these two inequalities, we have 
\begin{equation*}
\left(\Lambda^{+}-\Lambda^{-}\right)\left(\phi^{*}\left[t;\Lambda^{-}\right]-\phi^{*}\left[t;\Lambda^{+}\right]\right)\le0.
\end{equation*}
Since $\Lambda^{+}>\Lambda^{-}$, we have $\phi^{*}\left[t;\Lambda^{-}\right]\le\phi^{*}\left[t;\Lambda^{+}\right]$,
then $p^{*}\left[t;\Lambda^{-}\right]\le p^{*}\left[t;\Lambda^{+}\right]$
according to \eqref{eq:continuity_c2}. Therefor, $\phi^{*}\left[t;\Lambda\right]$
and $p^{*}\left[t;\Lambda\right]$ are strictly increasing over $\Lambda>\iota_{n}\left[k,t\right]$.

To reach the per-slot sum rate in $[\phi^{*}[t;\Lambda^{-}],\phi^{*}[t;\Lambda^{+}]]$
and power in $[p^{*}[t;\Lambda^{-}],p^{*}[t;\Lambda^{+}]]$ if $\phi^{*}\left[t;\Lambda^{-}\right]\neq\phi^{*}\left[t;\Lambda^{+}\right]$
and $p^{*}\left[t;\Lambda^{-}\right]\neq p^{*}\left[t;\Lambda^{+}\right]$,
we define the convex combination of $\mathbf{A}^{*}\left[t;\Lambda^{-}\right]$
and $\mathbf{A}^{*}\left[t;\Lambda^{+}\right]$ as 
\begin{equation*}
\mathbf{A}^{\xi}\left[t;\Lambda\right]=\xi\mathbf{A}^{*}\left[t;\Lambda^{-}\right]+\left(1-\xi\right)\mathbf{A}^{*}\left[t;\Lambda^{+}\right],
\end{equation*}
and prove that $\mathbf{A}^{\xi}\left[t;\Lambda\right]$ is optimal
for any $\xi$.

Since $p^{*}\left[t\right]$ and $\phi^{*}\left[t\right]$ are linear
combinations of continuous terms $\left\{ \bar{c}_{n}^{*}\left[k,t;\Lambda\right]\right\} $
and $\left\{ p_{n}^{*}\left[k,t;\Lambda\right]\right\} $, thus 
\begin{equation}
P_{\text{ext}}\left[t;\Lambda,\xi\right]-\ln2\Lambda\phi^{*}\left[t;\Lambda,\xi\right]=p^{*}\left[t;\Lambda\right]-\ln2\Lambda\phi^{*}\left[t;\Lambda\right]\label{eq:opt_xi}
\end{equation}
where 
\begin{align*}
P_{\text{ext}}\left[t;\Lambda,\xi\right] & \triangleq\xi\sum_{n\in\mathcal{N},k\in\mathcal{K}}p_{n}^{*}\left[k,t;\Lambda\right]a_{n}^{*}\left[k,t;\Lambda^{-}\right]\\
 & \quad\left(1-\xi\right)\sum_{n\in\mathcal{N},k\in\mathcal{K}}p_{n}^{*}\left[k,t;\Lambda\right]a_{n}^{*}\left[k,t;\Lambda^{+}\right]\\
 & \overset{(a)}{=}\xi\sum_{n\in\mathcal{N},k\in\mathcal{K}}p_{n}^{*}\left[k,t;\Lambda^{-}\right]a_{n}^{*}\left[k,t;\Lambda^{-}\right]\\
 & \quad\left(1-\xi\right)\sum_{n\in\mathcal{N},k\in\mathcal{K}}p_{n}^{*}\left[k,t;\Lambda^{+}\right]a_{n}^{*}\left[k,t;\Lambda^{+}\right]\\
 & =\xi p^{*}\left[t;\Lambda^{-}\right]+\left(1-\xi\right)p^{*}\left[t;\Lambda^{+}\right],
\end{align*}
with (a) holds because $p_{n}^{*}\left[k,t;\Lambda\right]$ is continuous
over $\Lambda$. Similarly, 
\begin{equation*}
\Phi_{\text{ext}}\left[t;\Lambda,\xi\right]=\xi\phi^{*}\left[t;\Lambda^{-}\right]+\left(1-\xi\right)\phi^{*}\left[t;\Lambda^{+}\right].
\end{equation*}
Thus, $\mathbf{A}^{\xi}\left[t;\Lambda\right]$ is optimal to problem
\eqref{eq:p_a} because it has the same optimal value as \eqref{eq:opt_xi},
$P_{\text{ext}}\left[t;\Lambda,0\right]=p^{*}\left[t;\Lambda^{-}\right]$
and $P_{\text{ext}}\left[t;\Lambda,1\right]=p^{*}\left[t;\Lambda^{+}\right]$.

\section{Proof of Theorem \ref{thm:Frontier_transformation}\label{sec:proof:thm:coincide}}

To establish the order-preserving equivalence between $\mathscr{P}\text{1}$
and $\mathscr{P}^\mathrm{v}_1$, we formalize the preservation of dominance
relations under the monotone cost mapping, similar to \cite{ZarPar:J17,NeuSte:J25}.
According to the definition of Pareto optimality in Definition \ref{def:patero},
a policy $\boldsymbol{\pi}\in\Pi$ is efficient for $\mathscr{P}\text{1}$
if there exists no $\boldsymbol{\pi}^{\prime}\in\Pi$ such that $\boldsymbol{f}\left(\boldsymbol{\pi}^{\prime}\right)\preceq\boldsymbol{f}\left(\boldsymbol{\pi}\right)$,
where $\boldsymbol{f}(\boldsymbol{\pi})\triangleq(\theta(\boldsymbol{\pi}),E(\boldsymbol{\pi}))^{\text{T}}$.
Similarly, $\boldsymbol{\pi}\in\Pi$ is efficient for $\mathscr{P}\text{1-1}$
if there exists no $\boldsymbol{\pi}^{\prime}\in\Pi$ such that $\boldsymbol{g}(\boldsymbol{f}(\boldsymbol{\pi}^{\prime}))\preceq\boldsymbol{g}(\boldsymbol{f}(\boldsymbol{\pi}))$.

Since each cost function $g_{j}\left(\cdot\right)$ is strictly increasing,
the mapping $g\left(\cdot\right)$ constitutes an order isomorphism
between the original objective space $\boldsymbol{f}\left(\boldsymbol{\pi}\right)$
and the cost space $\boldsymbol{g}(\boldsymbol{f}(\boldsymbol{\pi}))$
\cite[Proposition 1.3.5]{Schr:B16}. Consequently, the dominance relations
are preserved such that 
\begin{equation*}
\boldsymbol{f}\left(\boldsymbol{\pi}^{\prime}\right)\preceq\boldsymbol{f}\left(\boldsymbol{\pi}\right)\Longleftrightarrow\boldsymbol{g}\left(\boldsymbol{f}\left(\boldsymbol{\pi}^{\prime}\right)\right)\preceq\boldsymbol{g}\left(\boldsymbol{f}\left(\boldsymbol{\pi}\right)\right).
\end{equation*}

Now, let $\Pi_{f}^{*}$ and $\Pi_{g}^{*}$ denote Pareto-optimal sets
for $\mathscr{P}\text{1}$ and $\mathscr{P}\text{1-1}$, respectively.
If $\boldsymbol{\pi}\in\Pi_{f}^{*}$ then for all $\boldsymbol{\pi}^{\prime}\in\Pi$,
$f\left(\boldsymbol{\pi}^{\prime}\right)\npreceq f\left(\boldsymbol{\pi}\right)$.
By the isomorphism above, this holds if and only if $\boldsymbol{g}(\boldsymbol{f}(\boldsymbol{\pi}^{\prime}))\npreceq\boldsymbol{g}(\boldsymbol{f}(\boldsymbol{\pi}))$
for all $\boldsymbol{\pi}^{\prime}\in\Pi$, which implies $\boldsymbol{\pi}\in\Pi_{g}^{*}$.
Conversely, if $\boldsymbol{\pi}\in\Pi_{g}^{*}$, the same logic implies
$\boldsymbol{\pi}\in\Pi_{f}^{*}$. Thus, $\Pi_{f}^{*}=\Pi_{g}^{*}$,
proving that the set of efficient policies for both problems coincides
exactly.

Finally, the Pareto frontier in the cost space is obtained by
\begin{multline*}
\mathcal{C}_{g}=\left\{ \boldsymbol{g}\left(\boldsymbol{f}\left(\boldsymbol{\pi}\right)\right)|\boldsymbol{\pi}\in\Pi_{g}^{*}\right\} =\left\{ \boldsymbol{g}\left(\boldsymbol{f}\left(\boldsymbol{\pi}\right)\right)|\boldsymbol{\pi}\in\Pi_{f}^{*}\right\} \\
=\left\{ \left(g_{1}\left(\theta\right),g_{2}\left(E\right)\right):\left(\theta,E\right)\in\mathcal{C}\right\} .
\end{multline*}

\bibliographystyle{IEEEtran}
\bibliography{bib_files/abrv, bib_files/StringDefinitions, bib_files/BL}

\end{document}

%% file: bib_files/BL_acronym.tex
\newac{speb}{SPEB}{square position error bound}
\newac[plural=EFIMs,firstplural=Fisher information matrices (EFIMs)]{efim}{EFIM}{Fisher information matrix}
\newac{ne}{NE}{Nash equilibrium}
\newac{mse}{MSE}{mean squared error}
\newac{toa}{TOA}{time-of-arrival}
\newac{snr}{SNR}{signal-to-noise ratio}
\newac{lan}{LAN}{local area network}
\newac{psd}{PSD}{positive semidefinite}
\newac{pd}{PD}{positive definite}
\newac{wrt}{w.r.t.}{with respect to}
\newac{lhs}{L.H.S.}{left hand side}
\newac{wp1}{w.p.1}{with probability 1}
\newac{kkt}{KKT}{Karush-Kuhn-Tucker}
\newac{wlog}{w.l.o.g.}{without loss of generality}
\newac{mle}{MLE}{maximum likelihood estimation}
\newac{gps}{GPS}{global positioning system}
\newac{rssi}{RSSI}{received signal strength indicator}
\newac{mimo}{MIMO}{multiple-input multiple-output}
\newac{csi}{CSI}{channel state information}
\newac{fdd}{FDD}{frequency division duplexing}
\newac{ms}{MS}{mobile station}
\newac{bs}{BS}{base station}
\newac{d2d}{D2D}{device-to-device}
\newac{slnr}{SLNR}{signal-to-interference-leakage-and-noise-ratio}
\newac{ula}{ULA}{uniform linear antenna array}
\newac{pas}{PAS}{power angular spectrum}
\newac{mmse}{MMSE}{minimum mean square error}
\newac{zf}{ZF}{zero-forcing}
\newac{rzf}{RZF}{regularized zero-forcing}
\newac{as}{AS}{angular spread}
\newac{aod}{AOD}{angle of departure}
\newac{iid}{i.i.d.}{independent and identically distributed} 
\newac{sinr}{SINR}{signal-to-interference-and-noise ratio}
\newac{tdd}{TDD}{time-division duplex}
\newac{rvq}{RVQ}{random vector quantization}
\newac{rhs}{R.H.S.}{right hand side}
\newac{mrc}{MRC}{maximum ratio combining}
\newac{cdf}{CDF}{cumulative distribution function}
\newac{a.s.}{a.s.}{almost surely}
\newac{los}{LOS}{line-of-sight}
\newac{jsdm}{JSDM}{joint spatial division and multiplexing}
\newac{map}{MAP}{maximum a posteriori}
\newac{klt}{KLT}{Karhunen-Lo\`eve Transform}
\newac{lbe}{LBE}{link bargaining equilibrium}
\newac{se}{SE}{Stackelberg equilibrium}
\newac{uav}{UAV}{unmanned aerial vehicle}
\newac{nlos}{NLOS}{non-line-of-sight}
\newac{pdf}{PDF}{probability density function}
\newac{em}{EM}{expectation-maximization}
\newac{knn}{KNN}{$k$-nearest neighbor}
\newac{svd}{SVD}{singular value decomposition}
\newac{nmf}{NMF}{non-negative matrix factorization}
\newac{umf}{UMF}{unimodality-constrained matrix factorization}
\newac{rmse}{RMSE}{rooted mean squared error}
\newac{olos}{OLOS}{obstructed line-of-sight}
\newac{mmw}{mmW}{millimeter wave}
\newac{ber}{BER}{bit error rate}
\newac{rss}{RSS}{received signal strength}
\newac{lp}{LP}{linear program}
\newac{ufw}{U-FW}{unimodal Frank-Wolfe}
\newac{utf}{UTF}{unimodality-constrained tensor factorization}
\newac{fw}{FW}{Frank-Wolfe}
\newac{iot}{IoT}{Internet-of-Things}
\newac{mae}{MAE}{mean absolute error}
\newac{crb}{CRB}{Cram\'er-Rao bound}
\newac{aoa}{AoA}{angle of arrival}
\newac{wcl}{WCL}{weighted centroid localization}
\newac{rb}{RB}{resource block}
\newac{aoi}{AoI}{age of information}
\newac{umi}{UMi}{Urban Micro}
\newac{ofdm}{OFDM}{orthogonal frequency division multiplexing}
\newac{iff}{iff}{if and only if}
\newac{qos}{QoS}{quality of service}

%% file: figures/two_layer.tex
\begin{tikzpicture}[
    node distance=1.5cm,
    box/.style={rectangle, draw, minimum width=2.5cm, minimum height=1cm, align=center},
    dashed box/.style={rectangle, draw, dashed, minimum width=8cm, minimum height=3cm}
]

\node [box, thick, rounded corners=1.5pt, rectangle, minimum width=1.5cm, minimum height=1cm] (spi) {Inner Layer\\ ($\mathscr{P}\text{3-2}$, Sec.~\ref{subsec:physical_layer_control})};
\node [box, thick, rounded corners=1.5pt, rectangle, right=1.8cm of spi, minimum width=2cm, minimum height=1cm] (lagrange) {Outer Layer\\
($\mathscr{P}\text{3-1}$, Sec.~\ref{subsec:graph})};
\node[right=2.0cm of lagrange] (lam) {};

\draw[->, thick] (spi) -- node[above] {$\{\boldsymbol{\mathcal{A}}_i, \boldsymbol{\mathcal{P}}_i\}$} (lagrange);
\draw[->, thick] (lagrange) -- ++(0, -1.) -| node[near start, below] {$\boldsymbol{t} = \{t_0, t_1, \ldots, t_I\}$} (spi);
\draw[->, thick] (lagrange) -- node[above] {$\boldsymbol{t}^*,\boldsymbol{\mathcal{A}}^*, \boldsymbol{\mathcal{P}}^*$} coordinate[midway] (mid-cd) (lam);

\end{tikzpicture}

%% file: figures/graph_construction.tex
\begin{tikzpicture}[scale=1]
\node[draw, circle, minimum size=0.2cm, line width=1.5] (n1) at (0,0) {};
\node[below=0.5em] at (n1) {$1$};

\node [draw, circle, right=0.8cm of n1, minimum size=0.2cm, line width=1.5] (n2) {};
\node[below=0.5em] at (n2) {$2$};

\node [draw, circle, right=0.8cm of n2, minimum size=0.2cm, line width=1.5] (n3) {};
\node[below=0.5em] at (n3) {$3$};

\fill (3.1,0) circle (1.2pt);
\fill (3.4,0) circle (1.2pt);
\fill (3.7,0) circle (1.2pt);

\node [draw, circle, right=1.8cm of n3, minimum size=0.2cm, line width=1.5] (n4) {};
\node[below=0.5em] at (n4) {$\bar{\tau}\!+\!1$};

\node [draw, circle, right=0.8cm of n4, minimum size=0.2cm, line width=1.5] (n5) {};
\node[below=0.5em] at (n5) {$\bar{\tau}\!+\!2$};

\fill (6.5,0) circle (1.2pt);
\fill (6.8,0) circle (1.2pt);
\fill (7.1,0) circle (1.2pt);

\node [draw, circle, right=1.8cm of n5, minimum size=0.2cm, line width=1.5] (n6) {};
\node[below=0.5em] at (n6) {\small $T\!+\!1$};

\draw[->, thick] (n1) to[bend left=20] (n2);
\draw[->, thick] (n1) to[bend left=24] (n3);
\draw[->, thick] (n1) to[bend left=26] (n4);
\draw[->, thick, dashed] (n1) to[bend left=26] node[pos=0.85, sloped, above] {\small infeasible} (n5);
\draw[->, thick, dashed] (n1) to[bend left=28] (n6);
\draw[->, thick] (n2) to[bend right=20] (n3);
\draw[->, thick] (n2) to[bend right=24] (n4);
\draw[->, thick] (n2) to[bend right=26] (n5);

\end{tikzpicture}

%% file: figures/edge1.tex
\begin{tikzpicture}[scale=1]
\node[draw, circle, minimum size=0.2cm, line width=1.5] (n1) at (0,0) {};
\node[below=0.5em] at (n1) {$1$};

\node [draw, circle, right=0.7cm of n1, minimum size=0.2cm, line width=1.5] (n2) {};
\node[below=0.5em] at (n2) {$2$};

\node [draw, circle, right=0.7cm of n2, minimum size=0.2cm, line width=1.5] (n3) {};
\node[below=0.5em] at (n3) {$3$};
\node [draw, circle, right=0.7cm of n3, minimum size=0.2cm, line width=1.5] (n4) {};
\node[below=0.5em] at (n4) {$4$};

\fill (4.3,0) circle (1.2pt);
\fill (4.6,0) circle (1.2pt);
\fill (4.9,0) circle (1.2pt);

\node [draw, circle, right=2.cm of n4, minimum size=0.2cm, line width=1.5] (n5) {};
\node[below=0.5em] at (n5) {\small $T\!-\!1$};

\node [draw, circle, right=0.7cm of n5, minimum size=0.2cm, line width=1.5] (n6) {};
\node[below=0.5em] at (n6) {\small $T$};

\node [draw, circle, right=0.7cm of n6, minimum size=0.2cm, line width=1.5] (n7) {};
\node[below=0.5em] at (n7) {\small $T\!+\!1$};

\draw[->, thick] (n1) -- (n2);
\draw[->, thick] (n2) -- (n3);
\draw[->, thick] (n3) -- (n4);
\draw[->, thick] (n4) -- (4.,0);
\draw[->, thick] (n5) -- (n6);
\draw[->, thick] (n6) -- (n7);

\end{tikzpicture}

%% file: figures/edge2.tex
\begin{tikzpicture}[scale=1]
\node[draw, circle, minimum size=0.2cm, line width=1.5] (n1) at (0,0) {};
\node[below=0.5em] at (n1) {$1$};

\node [draw, circle, right=0.7cm of n1, minimum size=0.2cm, line width=1.5] (n2) {};
\node[below=0.5em] at (n2) {$2$};

\node [draw, circle, right=0.7cm of n2, minimum size=0.2cm, line width=1.5] (n3) {};
\node[below=0.5em] at (n3) {$3$};
\node [draw, circle, right=0.7cm of n3, minimum size=0.2cm, line width=1.5] (n4) {};
\node[below=0.5em] at (n4) {$4$};

\fill (4.3,0) circle (1.2pt);
\fill (4.6,0) circle (1.2pt);
\fill (4.9,0) circle (1.2pt);

\node [draw, circle, right=2.cm of n4, minimum size=0.2cm, line width=1.5] (n5) {};
\node[below=0.5em] at (n5) {\small $T\!-\!1$};

\node [draw, circle, right=0.7cm of n5, minimum size=0.2cm, line width=1.5] (n6) {};
\node[below=0.5em] at (n6) {\small $T$};

\node [draw, circle, right=0.7cm of n6, minimum size=0.2cm, line width=1.5] (n7) {};
\node[below=0.5em] at (n7) {\small $T\!+\!1$};

\draw[->, thick] (n1) to[bend left=40] (n3);
\draw[->, dashed, red, line width=1.2] (n2) to[bend left=40] (n4);
\draw[->, thick] (n3) to[bend left=30] (3.4,0.35);
\draw[->, dashed, red, line width=1.2] (n4) to[bend left=30] (4.4,0.35);
\draw[->, dashed, red, line width=1.2] (n5) to[bend left=40] (n7);
\draw[->, thick] (5.2, 0.35) to[bend left=30] (n6);

\end{tikzpicture}